%========================================================================

%========================================================================

%\documentclass[11pt]{llncs}
\documentclass[11pt]{article}
\usepackage{authblk}
\usepackage{amsmath,amssymb,amsfonts}%,amsthm}
\usepackage{array}
\usepackage{float}
\usepackage{enumerate}
\usepackage{ifthen}
\usepackage{xspace}
\usepackage{graphicx,url}
\usepackage{fullpage}
\usepackage{wasysym}
\usepackage{algorithm}
\usepackage{algpseudocode}

\usepackage[babel=true]{csquotes} 

\baselineskip 13.2pt
\renewcommand{\baselinestretch}{1.0}
\setlength{\textheight}{9.0in} \setlength{\textwidth}{6.5in}
\newcommand{\qed}{\hspace*{\fill}\rule{6pt}{6pt}\vspace{.5\smallskipamount}}
\newtheorem{theorem}{Theorem}[section]
\newtheorem{definition}{Definition}
\newtheorem{lemma}{Lemma}[section]

\newtheorem{corollary}{Corollary}[section]

\newenvironment{proof} { \noindent \emph{Proof} : } { \qed }

\newenvironment{theorem-repeat}[1]{
\setcounter{theorem}{\ref{#1}}
\addtocounter{theorem}{-1}
\begin{theorem}}
{\end{theorem}}

\newenvironment{lemma-repeat}[1]{
\setcounter{lemma}{\ref{#1}}
\addtocounter{lemma}{-1}
\begin{lemma}}
{\end{lemma}}

%\newtheorem{observation}[property]{Observation}

%%% achour macro for algorithm
\newcounter{linecounter}
\newcommand{\linenumbering}{(\arabic{linecounter})}
\renewcommand{\line}[1]{\refstepcounter{linecounter}
\label{#1}
\linenumbering}
\newcommand{\resetline}{\setcounter{linecounter}{0}}
%%%%%%%%%%ù

%% failure detector 

\newcommand{\hfl}[2]{\ensuremath{HF({#1},{#2})}\xspace}
\newcommand{\confLW}[1]{\ensuremath{\mathcal{L} #1 \mathcal{W}}\xspace}
\newcommand{\secc}{\ensuremath{\textsc{sec}}\xspace}

\newcommand{\remove}[1]{}

%Z
\setcounter{secnumdepth}{3}

%Z

%\newcommand{\argmax}[1]{\operatorname*{argmax}_{#1}}

\newcommand{\argmax}[1]{\underset{#1}{\operatorname{argmax}}}
\newcommand{\argmin}[1]{\underset{#1}{\operatorname{argmin}}}

\begin{document}

%\frontmatter

\title{
Wait-Free Gathering of Mobile Robots}

%\author {
%Zohir Bouzid\inst{1}
%  \and 
%Shantanu Das\inst{2}
% \and
%S\'ebastien Tixeuil\inst{1}
%}
%
%\institute{University Pierre et Marie Curie - Paris 6, LIP6-CNRS
%  7606, France. 
%  %\email{zohir.bouzid@lip6.fr}
%  \and
%  BGU \& Technion-Israel Institute of Technology, Israel.
%  %\email{shantanu@tx.technion.ac.il}
%}

\author[1]{Zohir Bouzid}
\author[2]{Shantanu Das}
\author[1]{S\'ebastien Tixeuil}

\affil[1]{University Pierre et Marie Curie - Paris 6, LIP6-CNRS 7606, France.} 
%  %\email{zohir.bouzid@lip6.fr}
\affil[2]{BGU \& Technion-Israel Institute of Technology, Israel.} 
% Email: shantanu@tx.technion.ac.il}

\date{}
\pagenumbering{arabic}
\maketitle

\begin{abstract}

The problem of gathering multiple mobile robots to a single location, is one of the fundamental problems in distributed coordination between autonomous robots. The problem has been studied and solved even for robots that are anonymous, disoriented, memoryless and operate in the semi-synchronous (ATOM) model. However all known solutions require the robots to be faulty-free except for the results of  Agmon and Peleg \cite{AP06} who solve the gathering problem in presence of one crash fault. This leaves open the question of whether gathering of correct robots can be achieved in the presence of multiple crash failures. We resolve the question in this paper and show how to solve gathering, when any number of robots may crash at any time during the algorithm, assuming \emph{strong} multiplicity detection and chirality. In contrast it is known that for the more stronger byzantine faults, it is impossible to gather even in a 3-robot system if one robot is faulty.
Our algorithm solves the gathering of correct robots in the semi-synchronous model where an adversary may stop any robot before reaching its desired destination. Further the algorithm is self-stabilizing as it achieves gathering starting from any configuration (except the bivalent configuration where deterministic gathering is impossible). 

\ \\
\noindent \textbf{Keywords:} Distributed Coordination, Mobile Robots, Deterministic Gathering, Anonymous, Oblivious, Crash Faults, Self-stabilization. 

\end{abstract}

\section{Introduction}

\paragraph*{Robot Networks.}

We consider autonomous robots that are endowed with visibility sensors (but that are otherwise unable to communicate) and motion actuators. Those robots must collaborate to solve a collective task, namely \emph{gathering}, despite being limited with respect to input from the environment, asymmetry, memory, etc. The area where robots have to gather is modeled as a \emph{continuous} two-dimensional Euclidean space, and the gathering task requires every robot to reach a single point that is unknown beforehand, and to remain there hereafter.

Robots operate in \emph{cycles} that comprise \emph{look}, \emph{compute}, and \emph{move} phases. The look phase consists in taking a snapshot of the other robots positions using its visibility sensors. In the compute phase, a robot computes a target destination among its neighbors, based on the previous observation. The move phase simply consists in moving toward the computed destination using motion actuators. We consider an asynchronous computing model, \emph{i.e.}, the ratio between the speed of the fastest robot and that of the slowest robot is finite but unbounded (and unknown to the robots), however each cycle is considered atomic (robots may nevertheless execute cycles concurrently). This model is referred to as the semi-synchronous ATOM model in the literature~\cite{SY99}. When cycle phases may not be atomic, it is possible for a robot to observe another robot while it moves, or to perform the  computing  (and  moving)  phase  with  an outdated observation. This non-atomic but still asynchronous model is known as the CORDA model~\cite{Pre01}. Of course, all executions in the ATOM model are also valid in the CORDA model. So, impossibility results for the ATOM model remains true in the CORDA model, and protocols for the  CORDA model are also valid for the ATOM model, but the converse is not true. 

In addition to the temporal uncertainty resulting from system asynchrony, the lack of a common coordinate system leads to a second kind of uncertainty which is spatial: there is no common notion of distance (robots do not share a common metric system) or direction (robots do not have a common compass). Another aspect of spacial uncertainty stems from the common assumption that an adversary has the ability to stop a robot movement before it reaches its planned destination. This implies that the distance effectively travelled by a robot at each cycle of operation is unpredictable.  

Finally, the robots that we consider here have weak capacities: they are \emph{anonymous} (they execute the same protocol and have no mean to distinguish themselves from the others), \emph{oblivious} (they have no memory that is persistent between two cycles). In some  problem instances such as gathering, robots may share the same position, which is called a multiplicity point. The ability for a robot to detect multiplicity is crucial to  solve some particular tasks. We  distinguish \emph{weak} and \emph{strong}  multiplicity  detection.   The  weak  multiplicity  detector detects whether there  is zero, one or more than one  robot at a particular location.  The strong  multiplicity  detector senses  the  exact number  of robots at a particular location.

\paragraph*{Fault-tolerance and Wait-freedom.}

As the output of an individual robot is its movement, faults in robot networks are characterized by the possibilities allowed for unexpected behavior. The most simple fault is the halting (or crash) fault (a robot simply stops moving forever). A halting fault-tolerant (or simply fault-tolerant) robot protocol permits robots that do not crash (that is, the correct robots) to properly complete a given task (such as gathering). The \emph{wait-freedom} property is the strongest non-blocking guarantee in ``classical'' distributed computing~\cite{Herlihy91}, as a wait-free algorithm guarantees that every execution completes in a finite number of steps, even if halting faults or simply adversarial scheduling occur. Simply put, in the context of robot networks, an arbitrary and unexpected delay observed at one robot may not prevent other robots from making progress toward the solution (in our case, gathering), even if $n-1$ robots are delayed or crashed ($n$ being the number of robots).  

Another kind of fault is the transient fault (that is, a fault of arbitrary nature that places the robot in some arbitrary state). Since we assume robots are oblivious (and do not remember their past states), a transient fault may simply put the robots in some arbitrary initial positions. A self-stabilizing robot protocol permits all robots to properly complete a given task after all transient faults are finished (that the whole set of robots has been placed in arbitrary locations). The most malicious kind of fault is that of Byzantine fault, which can make a robot move arbitrarily (both considering location and speed). 
  
\paragraph*{Related Work.}

The (fault-free) gathering problem was introduced in the seminal paper of~\cite{SY99} in the ATOM model. Deterministic gathering of $2$ oblivious robots was proved to be impossible to solve in a deterministic setting~\cite{SY99}, while deterministic gathering with at least $3$ robots was shown to be feasible both in the ATOM~\cite{SY99} and CORDA~\cite{cieliebak03} models. Randomization~\cite{defago3274fta} or adding persistent memory to robots permit to solve $2$-gathering both in ATOM~\cite{SY99} or CORDA model~\cite{BDPT10}.  

With the possibility of even a single Byzantine fault, gathering becomes impossible~\cite{AP06, IBTW11}, even when considering more than $2$ robots, the simple ATOM model, randomization capabilities, and persistent memory. So, positive results consider either weaker problems (such as convergence~\cite{BPT09,BDPT10}, that only requires robot to \emph{approach} a single point, rather than reaching it) with Byzantine faults, or weaker fault models for gathering. To our knowledge, only two works~\cite{AP06,DP09} consider deterministic gathering with faults in robot networks, both in the ATOM model. With respect to halting faults, Agmon and Peleg~\cite{AP06} solve gathering with at most one halting fault in a network where robots may not share the same chirality (that is, they may not agree about their handedness), yet assume that no two robots are located on the same position initially (so, the protocol is not self-stabilizing). Dieudonn\'{e} and Petit~\cite{DP09} present a self-stabilizing algorithm for gathering, again without chirality assumption, but assuming that there is no halting fault and that the number of robots is odd.

The gathering problem has also been studied in models with other limitations, such as robots having limited visibility~\cite{ando1999dmp,FlocchiniPSW05} or robots that are not dimensionless (i.e. they block both the motion and visibility of other robots)~\cite{CzyzowiczGP09}. Another scenario that has been studied is when robots are given the additional capability of using a directional compass which is however subject to inaccuracies and failure~\cite{SouissiDY06}.

\paragraph*{Our Contribution.}

We investigate the possibility of handling more than one halting fault in robot networks in a deterministic setting. In more details, we present a deterministic protocol for gathering that can handle up to $n-1$ halting faults (that is, the protocol is wait-free). It is known that deterministic gathering is impossible if the robots are equally distributed in two points on the plane (the so-called \emph{Bivalent} configuration). We consider protocols for robots that start from any arbitrary configuration other than the (impossible) Bivalent one. 
Thus, our protocol recovers from any transient faults as particular initial configurations are unnecessary. We use the ATOM model as in~\cite{AP06,DP09}, yet tolerate more halting faults and more potential initial configurations. The main additional assumption that we make is a common chirality for all robots that participate to the protocol. We also assume, as in~\cite{DP09}, a strong multiplicity detection mechanism. Our results are summarized in table~\ref{tab:results}.

\begin{table}
\centering
\begin{tabular}{|c|l|l|l|l|l|}\hline
\textbf{Reference} & \textbf{Model} & \textbf{Chirality} & \textbf{Multiplicity} & \textbf{Halting}& \textbf{Self-Stabilizing} \\
 &   &  &  \textbf{Detection} & \textbf{Faults}  &  \\\hline
\cite{AP06} & ATOM & No & Weak (binary) & $f \leq 1$ & No \\\hline
\cite{DP09} & ATOM & No & Strong & $f=0$ & If $n$ is odd  \\\hline
\textbf{This paper} & ATOM & Yes & Strong & $f < n$ & If not Bivalent  \\\hline
\end{tabular}
\caption{Resilience bounds for deterministic gathering}
\label{tab:results}
\end{table}

Our protocol is based on spatial invariants that are both simple to compute and are preserved by robot movements induced by the protocol. One of the most natural candidates for the case of gathering is the Weber point~\cite{W37}. Given a set of points $P$, the Weber point $c$ minimizes $\sum_{r \in P} distance(x, r)$ over all points $x$ in the plane. The Weber point has the key property of remaining unchanged under straight movements of any of the points towards or away from it. If the Weber point can be computed, it is simple to devise a robot protocol that solves gathering: all robots simply move toward the Weber point. Unfortunately, computing the Weber point is known to be difficult and was solved in special cases such as regular polygons~\cite{ACP05}, line~\cite{CT90}, and a number of symmetric and regular configurations~\cite{BL11}. A key result of this paper is a technique to compute the Weber point of a newly defined class of configurations, referred in the sequel as quasi-regular configurations, which are less symmetric than both symmetric and regular configurations.

Using this building block, the protocol for gathering can then be informally described as follows: \emph{(i)} if the configuration has some amount of symmetry, that is, it is symmetric, regular, or quasi-regular, then the robots move to the Weber point, \emph{(ii)} if the configuration is completely asymmetric, 
%the problem is reduced to a problem of leader election, which is solvable from an asymmetric configuration (simply put, each robot moves toward the leader, and when a multiplicity point is constructed, all robots move to this point). Technicalities arise when the starting configuration also has multiplicity points.
%%
it is possible to unanimously elect one unique robot location and all robots may move towards this leader robot (or robots, in case of multiplicities). However these movements may not keep the leader invariant. Note that it is not possible to maintain a leader by moving individual robots in a preferred order (such a protocol would not be wait-free/crash-tolerant). Moreover, since the robots may be stopped by the adversary before reaching the destination, we risk forming the bivalent configuration from which the robots cannot recover and it would be impossible to achieve gathering. Thus, the algorithm involves several technicalities to ensure that the robots progress towards a gathered configuration in a wait-free manner, while avoiding the catastrophic \emph{bivalent} configuration.
  
%-----------------------------------------------------------------------
\section{The Model and Notations}

\paragraph*{Robot Model:} 
 
There are $n$ robots modeled as points on a geometric plane. A robot can observe its environment and determine the location of other robots in the plane, relative to its own location. All robots are identical (and thus indistinguishable) and they follow the same algorithm. However each robot has its own local coordinates system and measure of unit distance (which may be distinct from that of other robots). The robots only share a common sense of handedness (i.e. they agree on the clockwise direction). Time is divided to discrete intervals called rounds and in each round a robot be either active or inactive. In each round, each active robot \textbf{r} makes exactly one step which consists of LOOK, COMPUTE and MOVE actions. During the LOOK stage, robot \textbf{r} gets a complete snap-shot containing the locations of every other robot in terms of the local coordinate system and unit distance used by robot \textbf{r}. (Note that multiple robots may occupy the same location in the plane and a robot can determine exactly how many robots are located at the same point.) During the COMPUTE stage, the robot executes its algorithm, using the snapshot as input and determines its next destination point. (Note that robots do not need to remember their previous steps.) During the MOVE stage, the robot moves towards the computed destination\footnote{If the computed destination is the current location, then the robot does not move.}. A move may end  before the robot reaches its destination. However there exists an (arbitrarily small) constant $\delta > 0$ such that if the destination point is closer than $\delta$,  the robot will reach it; otherwise, it  will move a distance of at least $\Delta$ towards its destination.
A robot that is inactive in a round does not take any actions during that round. Each (\emph{correct}) robot is active in infinitely many rounds. 

We denote the above model of computation as $ATOM[\diamond M]$ model (i.e. the ATOM model enhanced with strong multiplicity detection).

\paragraph*{Fault Model:}
We consider the crash fault model. A robot is \emph{faulty}
if there is a time at which it stops taking actions (\emph{i.e.} it crashes). 
However, a crashed robot remains visible to other robots in the system.
A robot that does not crash is \emph{correct}.
An algorithm is $f$-resilient if 
it works correctly when 
the number of faulty robots does not exceed $f < n$.
We may consider more general models of faults in which 
the adversary is allowed to fail subsets of robots that 
are not necessarily uniform.
An adversary can thus be characterized using its \emph{faulty sets} \cite{JM03}: the 
set of subsets of robots it is allowed to fail during an execution.
Sometimes, it is more convenient to describe adversaries using the set of their \emph{cores}.
A core \cite{JM03} is a minimal subset of robots that can not \emph{all} fail 
in any execution. For example, for the $f$-resilient adversary, 
any set of $f+1$ robots is a core.

\paragraph*{Notations:}

$\mathcal{R}=\{r_1, \ldots, r_n\}$ is the set of robots and $\mathcal{T}$ is the set of positive natural numbers, denoting time instances. At any time $\tau \in \mathcal{T}$, the configuration of the set of robots is given by the multiset $C_\mathcal{R}(\tau) = \{ p_1,\dots,p_n \}$ where each $p_i \in {\mathbb{R}^2}$. We shall drop the subscript $\mathcal{R}$ when it is obvious from context.
Let $\mathbb{P}$ be the set of all possible configurations of $n$ robots.
Formally, $\mathbb{P}={\mathbb{R}^2}^n$ where $\mathbb{R}$ is the set of real numbers.
%
% \begin{definition}
% \label{def:linearConfig}
A configuration is \emph{linear} if all its robots lie on the same line.
% \end{definition}

%Ddo we need this?
%GIven $x, c \in \mathcal{R}^2$, given $\alpha \in [0, 2\pi]$,
%let $S(x, c, \alpha)$ denote the point $y$ such that $|cx|=|cy|$ 
%and $\sphericalangle(x, c, y)= \alpha$.
%
Given any robot $r$, $mul(r)$ denotes the multiplicity of the location occupied by $r$. 
Given a \emph{multiset} of robot positions $Q$, we denote by $U(Q)$ the corresponding \emph{set} of positions in $Q$ removing multiplicities (i.e. each point is U(Q) contains at least one robot). Given a configuration $C$, let $\secc(C)$ denote the smallest enclosing circle of the point set $U(C)$.
The center of a circle $G$ is denoted by $center(G)$. $CH(Q)$ denotes the convex hull of the points in $Q$.
% 
% N distinct points in the plane form a biangular situation (see Figure
%2.1.a), if there exists a point c (the center of biangularity), a polar ordering
%of the points around c, and two nonzero angles ,  such that each two
%adjacent points form with c angle  or  and the angles alternate.
%N distinct points in the plane form a biangular circle (see figure 2.1.b), if
%they are simultaneously in the biangular situation and on the boundary of a
%circle. Moreover, the center of the biangularity must be equal to the center
%of the circle.
%
%
%Let $G=(v_1, v_2, \ldots, v_m)$ be a polygon. $G$ is said to be \emph{biangular} if
%there exists a point $c$, two angles $\alpha$ and $\beta$
%such that $\sphericalangle(v_i, c, v_{i\%m+1})$ is equal to $\alpha$ when $i$ is even,
%$\beta$ otherwise. \\
%definis CH(P) comme un ensemble de points. \\
%
Given two distinct points $u$ and $v$ on the plane, let $line(u,v)$ denote the straight line passing through these points and $(u, v)$ (resp. $[u, v]$) denote the open (resp. closed) interval containing all points in this line that lie between $u$ and $v$.
The half-line starting at point $u$ (but excluding the point $u$) and passing through $v$ is denoted by $\hfl{u}{v}$. Formally, $\hfl{u}{v} = \{ p \in line(u,v), p\neq u : v \in [u,p] \vee p \in [u,v] \}$. With respect to some point $c\in {\mathbb{R}^2} \setminus \{u, v\}$, the angle in the clockwise direction between line segments $[c,u]$ and $[c,v]$ is denoted by $\sphericalangle(u, c, v)$. The Euclidean distance between $u$ and $v$ is denoted by $|u,v|$.

%-------------------------------------------------------------------------------

%% Symmetricity, regularity and quasi-regularity
\section{Symmetries in Robot Configurations}

\label{sec:symmetricity}

\subsection{Some Definitions}

Configurations may exhibit several kinds of symmetry.
In this section, we consider a specific form of symmetry called rotational symmetry
which we define in a precise sense and show how to quantify it. This is based on the concept of \emph{views}~ \cite{SY99}, as described below.
%A configuration $C$ is \emph{rotation symmetric} if there 
%exists a point $c$ such that $C$ can be partitioned into $m$ 
%sets of points $s_1, \ldots, s_m$ where each $s_{i+1}$ 
%is a rotation of $s_i$ with respect to $c$.

%Our definition of symmetricity is an adaptation of a similar definition
%originally given in \cite{SY99} and revised in \cite{DFSY10}.
%%We further adapt this notion by making it bi-dimensional in order to 
%%captures both kinds of symmetry.
%Our presentation follows the same lines as \cite{BL11}.

\begin{definition}[Views]
Let $C=\{p_1, \ldots, p_n\}$ 
be a configuration of robots.
%Let $U=U(C)$.
%
%For each position $u \in U(C)$, 
Given a position $p \in U(C)$, 
define
the view of $p$,
denoted
$\mathcal{V}(p)$, 
as the expression of $C$
in the polar coordinate system 
whose center is $p$ and whose point $(1, 0)$ is defined as follows.
%%
%%having the following parameters:
%%\begin{itemize}
%%\item \textit{Center:} $u$.
%%\item \textit{Unit of distance:} 
Let $c=center(\textsc{sec}(C))$.
%If $(c \neq p)$, then $(1, 0)=c$. 
%Otherwise $(1, 0)=\arg_{(x \in C) \wedge (x \neq p)} \max \mathcal{V}(x)$.
If $(c \neq p)$, then $(1, 0)=c$. 
Otherwise $(1, 0)$ %=\argmax{(x \in C) \wedge (x \neq p)}  \mathcal{V}(x)$.
is any point $x\neq p \in U(C)$ that maximizes $\mathcal{V}(x)$.

%When $(c \neq u)$, 
%the point $(1, 0)$ is equal to $c$.
%Otherwise, $(1, 0)$ is defined as
%$arg_{x \in U(C)\setminus \{u\}} max( \mathcal{V}(u))$.
%chosen as a point in $U(C)$ that is (i) different from $u$
%and (ii) maximizes $max (\mathcal{V}(u)$.
%\item \textit{Orientation:} The positive clockwise direction is $\rightturn$.
%\end{itemize}

\end{definition}

Note that in the definition above, the point $(1,0)$ is not uniquely defined, however the view of any point $p \in C$ is uniquely defined. Based on the definition of views, we can define an equivalence relation 
$\backsim_r$ on the set of robot locations, as follows: $\forall u, u^\prime \in U(C), (u \backsim_r u^\prime) \Leftrightarrow (\mathcal{V}(u)=\mathcal{V}(u^\prime))$.
The corresponding equivalence class for $u$ is denoted by $[u]_r$.
%
%\begin{definition}[$\backsim_r$]
%Let $C$ be a configuration of robots.  
%Let $u, u^\prime \in U(C)$. 
%Define the equivalence relation 
%$\backsim_r$ as follows:
%\begin{center}
%$(u \backsim_r u^\prime) \Leftrightarrow (\mathcal{V}(u)=\mathcal{V}(u^\prime))$ \\
%\end{center}
%The corresponding equivalence classe for $u$ is denoted by $[u]_r$.
%\end{definition}
%
The following definition formalizes the notion of rotational symmetricity.
\begin{definition}[Rotational Symmetricity]
\label{def:rsymmetricity}
The (rotational) \emph{symmetricity}  of a configuration $C$, denoted $sym(C)$, 
is the cardinality of the biggest 
equivalence class defined by $\backsim_r$ on $U(C)$. 
That is, $sym(C)=max\{|[u]_r|~|~ u \in U(C)\}$.
%If $sym(C)=k$, we say that $C$ is $k$-rsymmetric.
\end{definition}

Our definition of symmetricity differs slightly from the one presented in \cite{DFSY10, SY99} but only for those configurations that contain either points of multiplicity or a point located in the center of their SEC.
%For example, consider a configuration $C$ of four robots distributed equally over two positions.
%The symmetricity of $C$ is equal to $4$ according to the definition of \cite{DFSY10} and $2$ 
%with respect to ours.
%Also, if $C$ consists in a regular $m$-gon
%together with a point on its center, then its symmetricity is equal to $1$ in \cite{DFSY10} 
%and $m$ according to our definition.
%
%\paragraph*{}
%The following property is a corrollary of Lemma 4.2 in \cite{SY99}:
%It roughly says that all equivalence classes defined by $\backsim_r$ 
%have the same size and 
%they form a set of concentric regular polygons.
The following result follows from the definition of symmetricity.

\begin{lemma}
\label{lem:rsymmetryPolygon}
Let $C$ be a configuration with $k=sym(C)>1$ and let $c=center(\secc(U(C)))$.
For every $u \in U(C)$ with $(u \neq c)$, it holds that $[u]_r$ a $k$-gon with center $c$ and
whose corners have the same multiplicity.
%$[u]_r$ is
%a regular $k$-gon with center $c$ and
%whose corners have the same multiplicity.
\end{lemma}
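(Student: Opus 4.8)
The plan is to recast the relation $\backsim_r$ in terms of isometries of the plane, to use the uniqueness of the smallest enclosing circle to force those isometries to be rotations about $c$, and then to read off the statement from the orbit of a point under a finite cyclic rotation group. Concretely, the first step I would carry out is: whenever $u,u'\in U(C)$ satisfy $\mathcal{V}(u)=\mathcal{V}(u')$, there is an orientation-preserving isometry $\phi$ of the plane with $\phi(u)=u'$ and $\phi(C)=C$ as multisets. Indeed, since all robots agree on the clockwise orientation, the local polar frames underlying $\mathcal{V}(u)$ and $\mathcal{V}(u')$ have the same handedness, so the rigid motion $\phi$ taking the frame at $u$ to the frame at $u'$ is orientation-preserving; and because the multiplicity-weighted coordinates of the points of $C$ coincide in the two frames, $\phi$ carries $C$ onto $C$. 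As $\phi(U(C))=U(C)$ and $\secc(U(C))$ is the \emph{unique} smallest enclosing circle of its point set, $\phi$ maps $\secc(U(C))$ to itself and hence fixes its center, $\phi(c)=c$; so $\phi$ is a rotation about $c$, and in particular $|c,u|=|c,u'|$ and $mul(u)=mul(u')$.

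Next I would organize these maps into a group. Let $G$ be the set of orientation-preserving isometries $\psi$ with $\psi(C)=C$; by the previous step every $\psi\in G$ is a rotation about $c$, so $G$ is a group of rotations about $c$. Fix any $u\in U(C)$ with $u\neq c$ -- such a point exists because $sym(C)=k>1$ forces $|U(C)|\geq 2$, and at most one element of $U(C)$ equals $c$. A nontrivial rotation about $c$ moves $u$, so the $G$-stabilizer of $u$ is trivial and, by the orbit-stabilizer theorem, the orbit $G\!\cdot\!u$ has cardinality $|G|$. Since $G\!\cdot\!u\subseteq U(C)$ is finite, $G$ is finite, hence cyclic of some order $m$, and $G\!\cdot\!u$ is exactly the vertex set of a regular $m$-gon centered at $c$; all these vertices have multiplicity $mul(u)$, being $G$-images of $u$ while $G$ preserves $C$ with multiplicities.

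Finally I would identify $[u]_r$ with $G\!\cdot\!u$ and show $m=k$. The inclusion $[u]_r\subseteq G\!\cdot\!u$ is the first step applied to $u$ and any $v\backsim_r u$. For the converse, if $v=\psi(u)$ with $\psi\in G$ then $v\neq c$, and $\psi$ carries the frame of $\mathcal{V}(u)$ (center $u$, reference direction towards $c$) to the frame with center $v$ and reference direction towards $\psi(c)=c$, which is precisely the frame defining $\mathcal{V}(v)$; hence $\mathcal{V}(v)=\mathcal{V}(u)$, i.e. $v\in[u]_r$. Thus $[u]_r=G\!\cdot\!u$ has exactly $m$ elements for \emph{every} $u\in U(C)\setminus\{c\}$. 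Moreover $[c]_r=\{c\}$, since any $v\backsim_r c$ yields (again by the first step) an orientation-preserving isometry fixing $c$ and mapping $c$ to $v$, forcing $v=c$. As $k=sym(C)>1$, the largest $\backsim_r$-class cannot be $[c]_r$, so it is one of the $m$-element classes, giving $m=k$. Hence, for every $u\in U(C)$ with $u\neq c$, $[u]_r$ is a (regular) $k$-gon centered at $c$ all of whose corners have multiplicity $mul(u)$.

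The step I expect to be the main obstacle is the very first one: making the translation from ``equal views'' to ``symmetry of $C$'' watertight given the reference-direction convention in the definition of views -- in particular, checking that equal views induce a symmetry of the multiplicity-weighted configuration, not merely of the underlying point set $U(C)$, and that the ``view-maximizing'' convention used when $c$ itself is occupied does not disturb the argument for $[c]_r$. Once that bridge is in place, the remaining ingredients (uniqueness of the smallest enclosing circle, finiteness and cyclicity of $G$, orbit-stabilizer) are routine.
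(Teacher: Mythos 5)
Your proof is correct, and in fact it supplies an argument the paper omits entirely: the paper simply asserts that the lemma ``follows from the definition of symmetricity'' and gives no proof, so there is no authorial route to diverge from. Your route --- equal views yield an orientation-preserving symmetry of the multiset $C$, uniqueness of the smallest enclosing circle forces every such symmetry to be a rotation about $c$, and the orbit--stabilizer argument for the resulting finite cyclic group identifies each class $[u]_r$ with a regular $m$-gon and then $m=k$ --- is the standard group-theoretic justification that the paper is implicitly relying on, and all the steps (including the treatment of $[c]_r=\{c\}$ and the identification $[u]_r=G\cdot u$ in both directions) check out.

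One small patch is needed at exactly the spot you flagged as the likely obstacle. The polar frame defining $\mathcal{V}(p)$ takes $c$ as its point $(1,0)$, so its unit of length is $|p,c|$; consequently the map $\phi=F_{u'}^{-1}\circ F_u$ carrying the frame at $u$ to the frame at $u'$ is a priori an orientation-preserving \emph{similarity} with ratio $|u',c|/|u,c|$, not automatically a rigid motion as you assert. This is harmless: since $\mathcal{V}(u)=\mathcal{V}(u')$ gives $\phi(C)=C$ and $C$ is a finite multiset with $|U(C)|\geq 2$, the similarity preserves the diameter of $U(C)$ and hence has ratio $1$ (equivalently, it must preserve the radius of the unique smallest enclosing circle it already fixes). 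With that one sentence added, the bridge from ``equal views'' to ``rotation about $c$'' is watertight and the rest of your argument goes through unchanged.
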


% Do we need this lemma ??
%
%\begin{lemma}
%\label{lem:symOfLinearConfig}
%If a configuration $C$ is linear, then either $sym(C)=1$ 
%or $sym(C)=2$.
%\end{lemma}
%
%
%\begin{proof}
%According to Lemma \ref{lem:asymmetryPolygon}, for all $u \in U(C):
%[u]_r$ is a regular polygon.
%Since $C$ is linear, the regular polygon $[u]_r$ cannot contains more than two points
%otherwise it would not hold in a line.
%That is, either $(|[u]_r|=1)$
%or $(|[u]_r|=2)$.
%It follows that either $(rsym(C)=1)$ or $(rsym(C)=2)$.
%\end{proof}

%-------------------

\paragraph*{Regularity:}

%We now define the concept of regularity of configurations by extending the results of \cite{ACP03} from biangular to multi-angular configurations, which we call \emph{regular}. 
We now define some weaker forms of symmetry called \emph{regularity} and \emph{quasi-regularity} and show how to compute a unique gathering point (the so-called Weber point) in such configurations. If we consider any circle $G$ that encloses the points in a configuration $C$, we may order the points by sweeping the circle $G$ in a clockwise direction and  ordering points on the same radius w.r.t. their distance from the center. This idea leads to the following definitions (extending the concepts in \cite{ACP03,K05}).

%\subsection{Definitions}

%Given some point $c \in \mathbb{R}^2$ and a configuration $C$, 
%we consider the polar ordering of points of $C$ around $c$ 
%and we formalize the notion of a successor 
%of a point according to this ordering.
%Our definition is adapted from \cite{K05} to take into account 
%configurations having either multiplicity points or points lying on $c$.

\begin{definition} % [Successor]
\label{def:successor}
Let $C=\{p_1, \ldots, p_n\}$ be a configuration and let $c \in \mathbb{R}^2$.
\begin{itemize}
\item {[{Successor}]} The \emph{clockwise} successor of $p_i \in C$ around $c$,
denoted by $S(p_i, c)$ is equal to the point $p_j \in C$ defined as follows:

\begin{itemize}
\item 
Let $X=\{p_k \in C ~|~  (p_k = p_i) \wedge (k < i) \}$.
If $X \neq \emptyset$, then $p_j=\argmax{p_k \in X}~k$.
\item 
Otherwise, let $Y=\{p_k \in C \cap (c, p_i) \}$.
If $Y \neq \emptyset$, then $p_j=\argmin{p_k \in Y} ~|p_i, p_k|$.

\item 
Otherwise, 
let $Z=\{ p_k\in C ~|~ 
(\nexists p \in C: 
0 < \sphericalangle(p_i,c,p) < \sphericalangle(p_i,c,p_k)\}$.
In this case, 
$p_j=\argmax{p_k \in Z} ~(|c, p_k|, k)$.
\end{itemize}

\item 
%\label{def:ksuccessor}
{[{$k$-th Successor}]} The $k$-th successor of $p_i$ around $c$, denoted $S^k(p_i, c)$ 
is defined recursively as follows:
If $k>1$, $S^k(p_i, c)=S(S^{k-1}(p_i, c))$; and $S^1(p_i, c)=S(p_i, c)$.

\end{itemize}

\end{definition}

The \emph{string of angles} around $c$ started in $p_i$, denoted by $SA(p_i, c)$
is the string $\alpha_1 \ldots \alpha_m$ such that $m=n-mul(c)$ and
$\alpha_i=\sphericalangle(S^{i-1}(p_i), c, S^i(p_i))$. 
The \emph{size} of $SA(p_i, c)$, denoted by $|SA(p_i, c)|$ is equal to $m$.
%We sometimes indicate the size of string of angles by adding it as a superscript:
%$SA^{(k)}$ means that $|SA|=n-k$.
%
% % [Periodicity of a string]
A string $SA$ is $k$-periodic if it can be written as $SA=x^k$ where $1 \leq k\leq |SA|$.
The greatest $k$ for which $SA$ is $k$-periodic is called \emph{the periodicity} of $SA$ and is denoted
by $per(SA)$.

%% FIGURE HERE...
\begin{figure}[htb!]
\begin{center}
\includegraphics[trim=10cm 5cm 16cm 2.5cm, clip=true, width=0.32\textwidth]{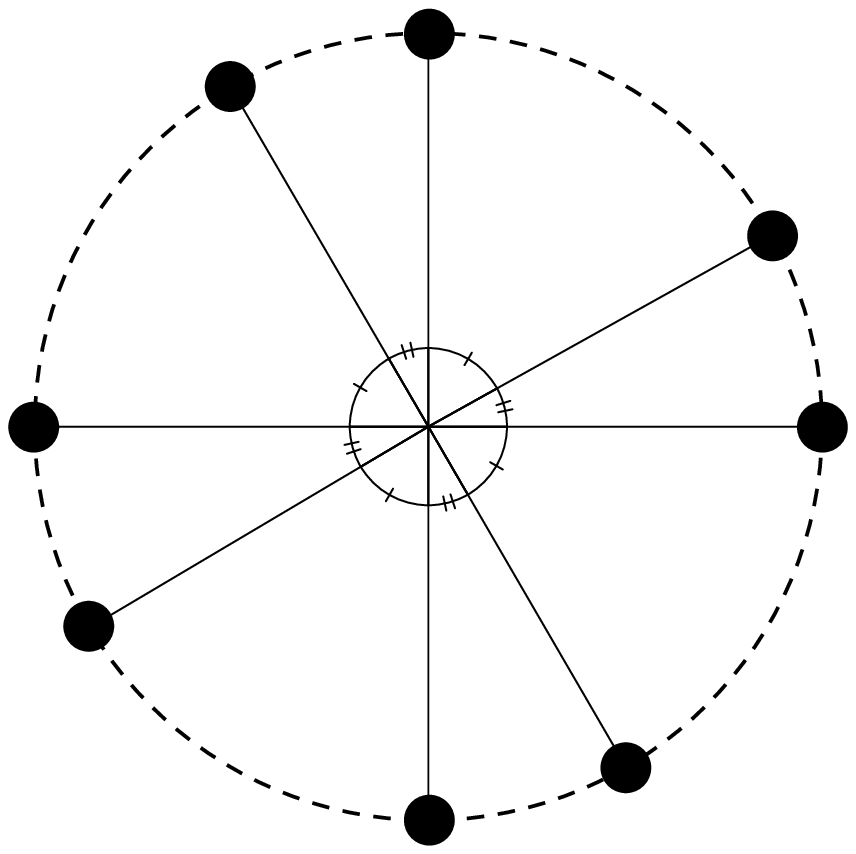}
\includegraphics[trim=10cm 5cm 16cm 2.5cm, clip=true, width=0.32\textwidth]{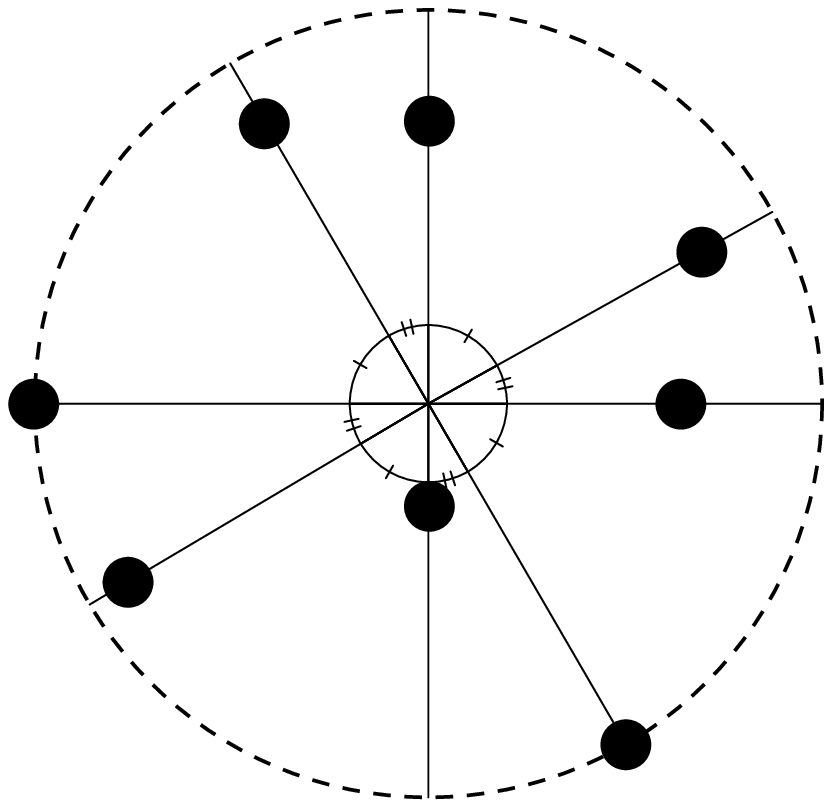}
\includegraphics[trim=10cm 5cm 16cm 2.5cm, clip=true, width=0.32\textwidth]{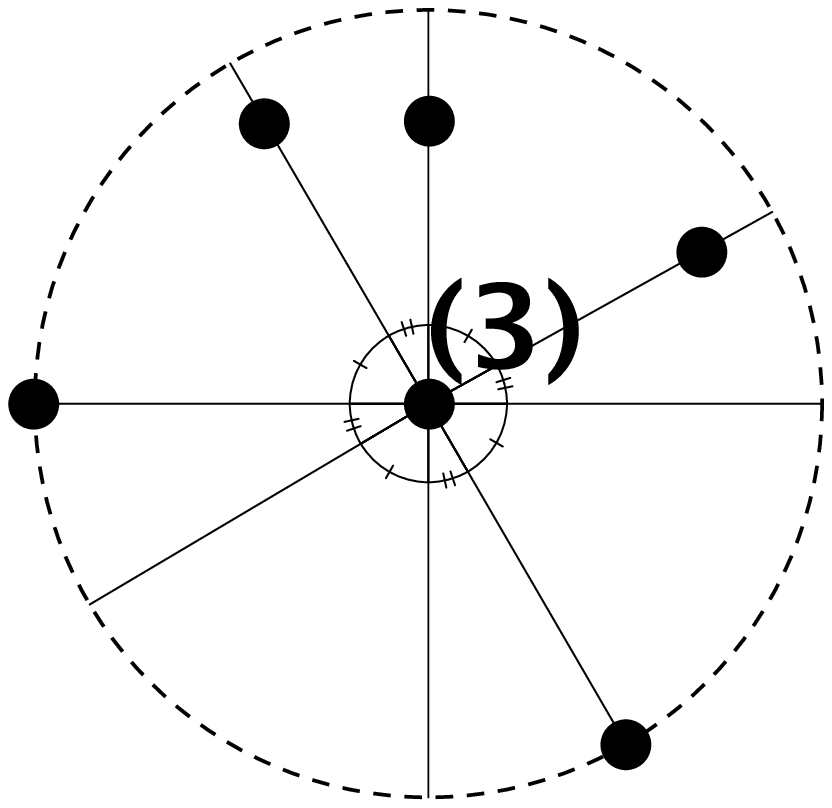}
\end{center}
\caption{Configurations that are (i) Symmetric with $sym(C)=4$, (ii) Regular with $reg(C)=4$, (iii) Quasi-Regular with $qreg(C)=4$. The numbers in parentheses represent the multiplicity of a point.} \label{fig:conf}
\end{figure}

\begin{definition}[Regularity]
\label{def:regularity}
A configuration $C$ of $n$ points is regular if there exists a point $c \in \mathbb{R}^2$ and $\exists m>1$ such that $per(SA(c))=m>1$.
In this case, the \emph{regularity} of $C$, denoted $reg(C)$, is equal to $m$.
Otherwise, $reg(C)=1$.
The point $c$ is called \emph{the center} of regularity and is denoted by $CR(C)$.
\end{definition}

\begin{definition}[Quasi Regularity]
A configuration $C$ of $n$ points is \emph{quasi-regular} (or Q-regular)
iff there exist (1) a point $c  \in \mathbb{R}^2$ 
and (2) a regular configuration $C^\prime$ with center of regularity $c$ 
which can be obtained 
from $C$ be moving only points located at $c$ if any. 
Formally, $C$ is quasi-regular with center $c \in \mathbb{R}^2$ 
iff $\exists C^\prime \in \mathbb{P}$ such that $reg(C^\prime)>1$, $CR(C^\prime)=c$ and
$\forall p \in C^\prime \setminus C, p=c$.
%%%ZATTENTION \setminus ici est une operation sur les multisets
%
In this case, the quasi-regularity of $C$, denoted $qreg(C)$ = $reg(C^\prime)$ and the \emph{center of quasi-regularity} denoted $CQR(C)=c$. If $C$ is not quasi-regular then $qreg(C)=1$.
\end{definition}

Note that each configuration that is symmetric is also regular. More precisely, $(sym(C)>1) \Rightarrow (reg(C)=sym(C))$. Each regular configuration is also quasi-regular (with $C^\prime=C$).

\begin{definition}[Weber Points]
\label{def:WeberPoint}
The Weber points of a configuration $C$, denoted $WP(C)$, are the set of points that 
minimize the sum of distances with points of $C$.
Formally, $WP(C)=\argmin{x \in \mathbb{R}^2} \sum_{i=1}^{n} |x, p_i|$.
\end{definition} 

Non-linear configurations are known to have a unique Weber point while linear configurations may have infinitely many Weber points. The Weber points of a linear configuration $C$ are points in the interval $[min(Med(C)), max(Med(C))]$, where $Med(C)$ the set of median points. If a linear configuration $C$ has a single median, then this point is the unique Weber point $WP(C)$. We now show how to compute the Weber-point of some special non-linear configurations. 

%If $WP(C)$ is a \emph{unique} point $c$, then it remains invariant under movement of some of the points towards $WP(C)$ (see Lemma~\ref{lem:WPinv} in the appendix). Moreover, we can show that for every configuration $C$ that has a unique Weber point and is quasi-regular, the center of quasi-regularity CQR(C) coincides with the unique Weber point $WP(C)$ (see Lemma~\ref{lem:CQR=WP}) and this point can be easily computed.  

%Given a linear configuration $C$, we denote by $Med(C)$ the set of its median points.
%\begin{property}
%\label{prop:wp_linear} The Weber points of a linear configuration $C$ are points in the interval
%$[min(Med(C)), max(Med(C))]$. When $Med(C)$ is a singleton (e.g. when $n$ is odd), the Weber point is unique.
%\end{property}
%
%In the following we prove that if $WP(C)$ is \emph{unique}, and $C^\prime$ 
%is a configuration that can be 
%obtained out of $C$ by moving some robots towards $WP(C)$, then 
%$WP(C^\prime)$ is also \emph{unique} and is equal to $WP(C)$.
%It is important to notice that this holds 
%independently on whether $C^\prime$ is linear or not.
%
%
%\begin{theorem}
%\label{thm:detectionWeberPoint}
%Given $C$ a non-linear configuration of $n$ points as input, there exists an algorithm that detects if $C$ is quasi-regular and if so it outputs its center of quasi-regularity $CQR(C)$.
%\end{theorem}

\subsection{Computation of Weber Points}

Given a set of points $C$ and any arbitrary point $p \in \mathcal{R}^2$, we define $SEC(C,p)$ as the smallest circle centered at $c$ that encloses all points in $C$. Given a point $x \in C$ and some $\alpha \in [0, 2\pi]$,
the successor of $x$ with respect to point $p$ and angle $\alpha$, denoted by $S(x, p, \alpha)$ is the point $y$ such that $|p, x|=|p, y|$  and $\sphericalangle(x, p, y)= \alpha$. 

\begin{lemma}\label{lem:WPinv}
Let $C=\{p_1, \ldots, p_n\}$ and $C^\prime=\{p_1^\prime, \ldots, p_n^\prime\}$
two configurations.
Let $X = \{x \in WP(C) ~|~ \forall i \in [1, n]: p_i^\prime \in [p_i, x]\}$.
If $X \neq \emptyset$ then $WP(C^\prime)=X$.
\end{lemma}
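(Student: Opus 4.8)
The plan is to exploit the defining variational property of Weber points, namely that $x \in WP(C)$ iff the vector sum of unit vectors pointing from each $p_i$ to $x$ vanishes (for non-degenerate points) or, more robustly, that $x$ minimizes the convex function $f_C(y) = \sum_i |y, p_i|$. First I would fix a point $x \in X$, so that by hypothesis $x \in WP(C)$ and each $p_i^\prime$ lies on the segment $[p_i, x]$. The key geometric observation is that for any point $y$ in the plane, the map $y \mapsto |y, p_i|$ and $y \mapsto |y, p_i^\prime|$ differ by a constant along rays through $x$: more precisely, since $p_i^\prime \in [p_i, x]$, we have $|p_i, y| \ge |p_i^\prime, y| - |p_i, p_i^\prime|$ with equality when $y$ lies on the ray $\hfl{p_i}{x}$ beyond $x$, and crucially $|p_i, x| = |p_i^\prime, x| + |p_i, p_i^\prime|$ exactly. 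Summing the triangle inequality $|p_i', y| \le |p_i', p_i| + |p_i, y|$, rearranged, over all $i$ gives $f_{C'}(y) \le f_C(y) + K$ where $K = \sum_i |p_i, p_i^\prime|$ is a fixed constant, and at $y = x$ this is an equality.

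Next I would use this to show $x \in WP(C^\prime)$. For any $y$, $f_{C'}(y) \le f_C(y) + K$; and since $x \in WP(C)$, $f_C(y) \ge f_C(x)$, so $f_{C'}(y) \le f_C(y) + K$ does not immediately help — I need the inequality the other way. Instead I argue directly: $f_{C'}(x) = \sum_i |p_i^\prime, x| = \sum_i (|p_i, x| - |p_i, p_i^\prime|) = f_C(x) - K$. For arbitrary $y$, the triangle inequality gives $|p_i, y| \le |p_i, p_i^\prime| + |p_i^\prime, y|$, hence $f_C(y) \le K + f_{C'}(y)$, i.e. $f_{C'}(y) \ge f_C(y) - K \ge f_C(x) - K = f_{C'}(x)$, where the middle step uses $x \in WP(C)$. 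Therefore $x$ minimizes $f_{C'}$, so $x \in WP(C^\prime)$. This shows $X \subseteq WP(C^\prime)$.

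For the reverse inclusion $WP(C^\prime) \subseteq X$, I would split into the non-linear and linear cases. If $C^\prime$ is non-linear, then $WP(C^\prime)$ is a singleton (as stated in the text after Definition~\ref{def:WeberPoint}), and since $X$ is nonempty and contained in $WP(C^\prime)$, we get $WP(C^\prime) = X$ and moreover $X$ itself is a singleton. If $C^\prime$ is linear, then all $p_i^\prime$ are collinear; since each $p_i^\prime \in [p_i, x]$ with $x$ a common endpoint, one checks that all the $p_i$ lie on the same line through $x$ as well (any $p_i$ not on that line would force $p_i^\prime = x$, which is consistent, or force non-collinearity — this needs a small argument), so $C$ is linear too, and both Weber sets are subintervals of that common line determined by medians; I would verify that the shrinking-toward-$x$ operation maps $[min(Med(C)), max(Med(C))] \cap \{y : \text{all } p_i' \in [p_i,y]\}$ exactly onto $WP(C')$.

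The main obstacle I anticipate is the reverse inclusion in the linear case: handling medians of multisets under the contraction toward $x$, and in particular ruling out that $WP(C^\prime)$ could be strictly larger than $X$ when $x$ sits at or near the median interval. The forward direction ($X \subseteq WP(C^\prime)$) is the clean part and follows purely from the triangle inequality as sketched; I expect the paper may in fact only need that direction, in which case the lemma's ``$WP(C^\prime) = X$'' is obtained by combining $X \subseteq WP(C')$ with the uniqueness of the Weber point in the non-linear case and a direct description in the linear case.
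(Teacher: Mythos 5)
Your forward direction ($X \subseteq WP(C^\prime)$) is correct and is essentially the paper's own argument: the paper decomposes $\sum_i |y,p_i^\prime| = \sum_i |y,p_i| + \sum_i(|y,p_i^\prime|-|y,p_i|)$ and notes that $WP(C)$ is the minimizer set of the first sum while $Y=\{y \mid \forall i:\ p_i^\prime\in[p_i,y]\}$ is the minimizer set of the second (which is bounded below by $-\sum_i|p_i,p_i^\prime|$ by the triangle inequality, with equality exactly on $Y$). Your chain $f_{C^\prime}(y)\ge f_C(y)-K\ge f_C(x)-K=f_{C^\prime}(x)$ is the same computation.

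The gap is in the reverse inclusion $WP(C^\prime)\subseteq X$, where you abandon this argument and fall back on a case split: uniqueness for non-linear $C^\prime$, and an admittedly unfinished median analysis for linear $C^\prime$. The linear case as sketched does not go through (collinearity of the $p_i^\prime$ does not force collinearity of the $p_i$ --- if $x$ lies off the line of $C^\prime$, the points $p_i$ fan out along distinct lines through $x$ --- and the median bookkeeping is left open). But no case split is needed: the same triangle inequality closes the argument. If $y\in WP(C^\prime)$, then $f_{C^\prime}(y)=f_{C^\prime}(x)=f_C(x)-K$, and the inequality $f_C(y)\le K+f_{C^\prime}(y)=f_C(x)$ together with $x\in WP(C)$ forces $f_C(y)=f_C(x)$, hence $y\in WP(C)$, and forces equality in every summand $|p_i,y|\le|p_i,p_i^\prime|+|p_i^\prime,y|$, i.e.\ $p_i^\prime\in[p_i,y]$ for all $i$, hence $y\in Y$; therefore $y\in Y\cap WP(C)=X$. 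This is precisely the content of the paper's one-line observation that when two functions admit a common minimizer, the minimizer set of their sum is the intersection of their minimizer sets --- the hypothesis $X\neq\emptyset$ is exactly what licenses that step, and it yields both inclusions at once without any appeal to uniqueness or to the structure of linear configurations.
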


\begin{proof}
Let $Y=\{x \in \mathbb{R}^2 ~|~ \forall i \in [1, n]: p_i^\prime \in [p_i, x]\}$.
Note that $X= Y \cap WP(C)$.

Observe that:
$$\sum_{i=1}^{n} |x, p_i^\prime| = \sum_{i=1}^{n} |x, p_i| + \sum_{i=1}^{n} (|x, p_i^\prime|- |x, p_i|)$$

By definition, the points of $WP(C)$ are those points $x$ that minimize $\sum_{i=1}^{n} |x, p_i|$.
Moreover, the points of $Y$ are those that minimize $\sum_{i=1}^{n} (|x, p_i^\prime|- |x, p_i|)$.
Hence, the points of $X=Y\cap WP(C)$ minimize the two sums and minimize their sum also.
It follows that the points that minimize $\sum_{i=1}^{n} |x, p_i^\prime|$ are those in $X$.
Thus, $WP(C^\prime)=X$.
\end{proof}

\begin{corollary}
\label{cor:linearUniqueWP}
 If $C$ is a configuration with a unique Weber point $c$
 and if $C^\prime$ is a configuration that is obtained from $C$
 by moving robots towards $c$, then 
 the Weber point of $C^\prime$ is also unique and is equal to $c$.
\end{corollary}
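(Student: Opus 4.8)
The plan is to obtain this as an immediate specialization of Lemma~\ref{lem:WPinv} to the case where $WP(C)$ is a singleton. The first step is to make precise what ``$C^\prime$ is obtained from $C$ by moving robots towards $c$'' means: writing $C=\{p_1,\dots,p_n\}$ and $C^\prime=\{p_1^\prime,\dots,p_n^\prime\}$, where $p_i^\prime$ is the new position of the robot that was at $p_i$, a robot moving toward $c$ travels along the straight segment $[p_i,c]$, and the adversary may only stop it somewhere on this segment; hence $p_i^\prime\in[p_i,c]$ for every $i\in[1,n]$. The use of the closed segment also covers the degenerate cases in which a robot does not move at all or moves all the way to $c$.

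The second step invokes the hypothesis that $c$ is the \emph{unique} Weber point of $C$, i.e. $WP(C)=\{c\}$, and applies the construction of Lemma~\ref{lem:WPinv}. With $X=\{x\in WP(C) ~|~ \forall i\in[1,n]:\ p_i^\prime\in[p_i,x]\}$, taking $x=c$ we have $c\in WP(C)$ and, by the previous step, $p_i^\prime\in[p_i,c]$ for all $i$, so $c\in X$; in particular $X\neq\emptyset$. Conversely $X\subseteq WP(C)=\{c\}$, so $X=\{c\}$. Since $X\neq\emptyset$, Lemma~\ref{lem:WPinv} gives $WP(C^\prime)=X=\{c\}$, which shows both that $C^\prime$ has a unique Weber point and that this point is $c$.

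I do not expect any real obstacle here: the only point requiring a small argument is the translation of the informal phrase ``moving robots towards $c$'' into the segment‑membership condition $p_i^\prime\in[p_i,c]$ expected by Lemma~\ref{lem:WPinv}, and once that identification is made the corollary follows in one line.
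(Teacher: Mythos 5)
Your proof is correct and matches the paper's intent exactly: the paper states this as an immediate corollary of Lemma~\ref{lem:WPinv} without a separate proof, and your specialization to $WP(C)=\{c\}$, showing $c\in X\subseteq WP(C)=\{c\}$ and invoking the lemma, is precisely the intended one-line argument.
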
 
 
In the following Lemma we show that
the center of quasi-regularity of a configuration is also its Weber point.
 
\begin{lemma}
\label{lem:CQR=WP}
For every non-linear configuration $C$ that is quasi-regular, $CQR(C)=WP(C)$.
\end{lemma}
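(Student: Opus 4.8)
The plan is to reduce the claim about quasi-regular configurations to the corresponding (known or easier) fact about genuinely regular configurations, and then to establish that fact directly from the periodicity structure of the string of angles. Let $C$ be a non-linear quasi-regular configuration with center $c = CQR(C)$, and let $C'$ be a regular witness: $reg(C') > 1$, $CR(C') = c$, and $C'$ differs from $C$ only in robots sitting at $c$ (all points of $C' \setminus C$ equal $c$, and symmetrically any excess multiplicity at $c$ in $C$ versus $C'$ is also at $c$). The first observation is that moving a point to or from the location $c$ does not change which point minimizes $\sum_i |x,p_i|$ \emph{at $x = c$ itself}, and more importantly that adding or deleting copies of the point $c$ to a configuration changes the objective function $\sum_i |x,p_i|$ only by the additive term $(\text{number of copies})\cdot|x,c|$, which is itself minimized (and equals $0$) precisely at $x=c$. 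So if I can show $WP(C') = \{c\}$, then for every $x$, $\sum_{p \in C}|x,p| = \sum_{p \in C'}|x,p| \pm k|x,c|$ for the appropriate $k \ge 0$, and since the right-hand side is a sum of two functions each uniquely minimized at $c$, the left-hand side is uniquely minimized at $c$ as well; hence $WP(C) = \{c\} = CQR(C)$. (Non-linearity of $C$ should be used to guarantee $C'$ is non-linear too — or at least that the argument is not derailed by the infinitely-many-Weber-points phenomenon of linear configurations; I would check that a quasi-regular non-linear $C$ forces $C'$ non-linear, since otherwise all of $C' \setminus \{c\}$ is collinear and adding points at $c$ off that line would contradict $C$ non-linear only if $c$ is off the line, which the regularity center is not required to be — this edge case needs a sentence.)

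It therefore remains to prove: \emph{for a non-linear regular configuration $C'$ with $CR(C') = c$, the unique Weber point is $c$.} Here I would invoke Lemma~\ref{lem:WPinv} / Corollary~\ref{cor:linearUniqueWP} in the following way. Since $C'$ is non-linear it has a unique Weber point $w = WP(C')$; the goal is $w = c$. The key structural input is that $per(SA(c)) = m > 1$: the cyclic string of angles around $c$ (after deleting the $mul(c)$ robots at $c$) is $x^m$ for some substring $x$. This periodicity means there is a rotation $\rho$ about $c$ by angle $2\pi/m$ (or by the total angle spanned by one block $x$) that maps the point set $U(C') \setminus \{c\}$ to itself while preserving multiplicities and radii — i.e., $\rho$ is a symmetry of the whole multiset $C'$ fixing $c$. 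Since the Weber point is an affine-geometric invariant determined by $C'$, it must be fixed by every symmetry of $C'$; thus $\rho(w) = w$. But a nontrivial rotation about $c$ fixes only the point $c$, so $w = c$.

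The main obstacle, and the step I would spend the most care on, is the claim that periodicity of the angle string around $c$ genuinely yields a rotational symmetry of the configuration — and in particular that it also matches up the \emph{radii} and \emph{multiplicities}, not just the angles. The definition of $SA(c)$ via the successor function $S(\cdot,c)$ orders points by angle and, on a common ray, by distance from $c$ (and then by index), so a periodic angle string alone does not obviously transport radial data correctly; I would need to argue that when the angle block $x$ repeats, the $k$-th successor map $S^{|x|}(\cdot, c)$ is exactly the rotation by one period, and then that this map, being distance-from-$c$ preserving on each matched ray by construction of $S$, is a genuine isometry of the ambient plane. A clean way is: let $g = S^{|SA|/m}(\cdot, c)$ restricted to $U(C')\setminus\{c\}$; show $g$ preserves $|c,\cdot|$ and advances the clockwise angle by the fixed amount $\theta = \sum(\text{one block of }x)$; conclude $g$ is the restriction of the rotation $R_{c,\theta}$; and since $g$ is a bijection of $U(C')\setminus\{c\}$ onto itself that preserves multiplicities (successors of equal points are handled coherently by the first case in Definition~\ref{def:successor}), $R_{c,\theta}$ is a symmetry of the multiset $C'$. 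Then the Weber-point-invariance argument closes the proof. I would also double-check that $m>1$ makes $\theta \notin 2\pi\mathbb{Z}$ so that $R_{c,\theta}$ is nontrivial and fixes only $c$.
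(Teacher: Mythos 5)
Your overall strategy (reduce quasi-regular to regular, then argue that the center of regularity is the Weber point) matches the paper's, but your second step contains a genuine gap that your proposed fix would not close. A regular configuration is \emph{not} in general invariant under the rotation $R_{c,2\pi/m}$: by Definition~\ref{def:regularity} and Lemma~\ref{lem:propertyRegular}, $per(SA(c))=m>1$ only forces corresponding half-lines $\hfl{c}{x}$ and $\hfl{c}{y}$ to carry the same \emph{number} of robots; it says nothing about their distances from $c$. This is exactly what distinguishes regular from symmetric configurations (cf.\ Figure~\ref{fig:conf}). Consequently the map $g=S^{|SA|/m}(\cdot,c)$ you propose does \emph{not} preserve $|c,\cdot|$ --- the successor order sorts the points of a common ray by decreasing distance, but matched rays may hold robots at completely different radii --- so $R_{c,\theta}$ is generally not a symmetry of the multiset $C'$, and the invariance argument for $WP(C')$ does not apply. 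The missing idea is the paper's intermediate projection: push every point of $C'$ not at $c$ radially outward to the circle $SEC(C',c)$ along $\hfl{c}{p}$, obtaining a configuration $C''$ that \emph{is} rotationally symmetric with $sym(C'')=reg(C')$; since $C'$ is recovered from $C''$ by straight moves towards $c$, Corollary~\ref{cor:linearUniqueWP} gives $WP(C'')=c \Rightarrow WP(C')=c$, and your symmetry-plus-uniqueness argument (identical in substance to the paper's $k$-gon contradiction) then applies correctly to $C''$.

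Two smaller points. First, your reduction from $C$ to $C'$ rests on the identity $\sum_{p\in C}|x,p|=\sum_{p\in C'}|x,p|\pm k|x,c|$, which is false: the robots at $c$ in $C$ are not deleted in $C'$ but relocated to points $q_1,\dots,q_k\neq c$, so the difference is $k|x,c|-\sum_i|x,q_i|$. The conclusion survives (that difference is still minimized at $x=c$ by the triangle inequality, and more simply one can invoke Corollary~\ref{cor:linearUniqueWP} directly, as the paper does, since $C$ arises from $C'$ by moving the $q_i$ straight back to $c$), but as written the step is incorrect. Second, your worry about whether $C'$ could be linear is legitimate and is left implicit in the paper as well; it deserves the sentence you promise it.
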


\begin{proof}
Let  $c=CQR(C)$ and $qreg(C)=k>1$. We have to prove that $c=WP(C)$.
By definition of quasi-regularity, there exists a regular configuration $C^\prime$ whose center of regularity is $c$ and which can be obtained from $C$ by moving only points located at $c$ if any. 
Seen in the reverse sense, $C$ can obtained from $C^\prime$ 
under straight movement of points towards $c$.
Hence, by Corollary \ref{cor:linearUniqueWP}, $WP(C)=c$ if and only if $WP(C^\prime)=c$.
%It follows that $(c=WP(C^\prime)) \Rightarrow (c=WP(C))$ because the Weber point
%remains invariant when points move straightly towards it.

Therefore, it suffices to prove that the center of regularity 
of any configuration $C^\prime$ is also
its Weber point. 
This claim was already proved in \cite{BL11} but for a slightly different definition of regularity 
that does not allow the presence of multiplicity points nor points lying on $c$.
However, the argument is the same and does not depend on these factors. 
We reproduce it here for completeness.
Let $C^{\prime\prime}$ be a configuration obtained from $C^\prime$ as follows:
For each point $p \in C$ not located at $c$, move $p$ towards the point
that is at the intersection of $\hfl{c}{p}$ and $SEC(C,c)$.
Clearly, the obtained configuration  $C^{\prime\prime}$ is symmetric
with center of symmetricity $c$ and $sym(C^{\prime\prime})=reg(C^\prime)$.
Note that $C^\prime$ can be obtained from $C^{\prime\prime}$ only by moving points in $C^{\prime\prime}$ towards $c$.
Again, by Corollary \ref{cor:linearUniqueWP} implies that
$WP(C^{\prime\prime})=c$ implies $WP(C^\prime)=c$.
%Hence, $(c=WP(C^{\prime\prime})) \Rightarrow (c=WP(C^\prime))$.

To finishes our proof, it suffices then to show that the center of R-symmetricity of any configuration $C^{\prime\prime}$ is also its Weber point.
Assume for contradiction that $WP(C^{\prime\prime}) = c^\prime \neq c$. 
Let $k=sym(C^{\prime\prime}) > 1$.
Let $P_k$ by the regular $k$-gon whose center is $c$ and one of whose vertices is $c^\prime$.
By symmetricity, if $c^\prime$ is a Weber point of $C^{\prime\prime}$, so are all points of $P_k$.
But the Weber point of a non-linear configuration is necessarily unique - A contradiction!

\end{proof}

The following lemma proves a property about regular configurations: 
 
\begin{lemma}
\label{lem:propertyRegular}
Let $C$ be any configuration and let $p \in \mathbb{R}^2$.
Let $m>1$.
$C$ is regular with center $p$ and $reg(C)=m$ iff
$\forall x \neq p \in \mathbb{R}^2: \forall k \in [1,m]: \forall y= S(x, p, \frac{2k\pi}{m}):$
it holds that $\hfl{p}{x}$ and $\hfl{p}{y}$ contain the same number of robots of $C$.
\end{lemma}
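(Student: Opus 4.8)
The plan is to prove both directions by unwinding the definition of regularity in terms of the string of angles $SA(p_i, c)$ and translating the periodicity condition into the geometric statement about half-lines. First I would fix some terminology: for a point $p$ and a configuration $C$, say a half-line emanating from $p$ is \emph{occupied} if it contains at least one robot of $C$, and let its \emph{load} be the number of robots of $C$ lying on it. Sweeping clockwise around $p$, the occupied half-lines appear in a cyclic order $H_1, H_2, \ldots, H_\ell$ (where $\ell = $ number of distinct directions, and $\ell \cdot (\text{average block}) $ accounts for all $n - mul(p)$ robots not at $p$); between consecutive occupied half-lines $H_j, H_{j+1}$ there is an angular gap $\beta_j$, and the multiset of these gaps, read as a cyclic string, is exactly (a rotation of) $SA(p, c)$ after collapsing the successor steps that move along a fixed radius — more precisely, each occupied half-line $H_j$ with load $w_j$ contributes $w_j - 1$ zero-angles followed by one angle $\beta_j$ to the string $SA$. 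So $SA$ is determined by, and determines, the cyclic sequence of pairs $(w_1, \beta_1), \ldots, (w_\ell, \beta_\ell)$.

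For the forward direction ($\Rightarrow$): suppose $C$ is regular with center $p$ and $reg(C) = m$, so $per(SA(p)) = m$ and $SA(p) = x^m$ for a primitive-free block $x$ of length $(n - mul(p))/m$. I would argue that $m$-periodicity of the string forces the cyclic sequence $(w_j, \beta_j)$ to be invariant under rotation by $(n-mul(p))/m$ positions; since the partial sums of the $\beta_j$ give the absolute clockwise angles of the occupied directions from $H_1$, this rotational invariance of the block means that rotating the whole picture about $p$ by $\frac{2\pi}{m}$ (equivalently, by the total angle of one block $x$, which must be $\frac{2\pi}{m}$ since the $m$ blocks sum to $2\pi$) maps the set of occupied half-lines to itself and preserves loads. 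Hence for any $x \neq p$ and any $k \in [1,m]$, the half-line $\hfl{p}{x}$ and its image $\hfl{p}{y}$ under clockwise rotation by $\frac{2k\pi}{m}$ — which is exactly $y = S(x, p, \frac{2k\pi}{m})$ since that successor notion places $y$ on the circle $|p,x|$ at clockwise angle $\frac{2k\pi}{m}$, and the half-line is what matters — carry the same number of robots. Note the claim is stated for \emph{all} $x \neq p$, including directions with no robots, where both sides are $0$; this is immediate since the rotation is a bijection of the plane fixing $p$.

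For the converse ($\Leftarrow$): assume the half-line load-invariance under rotation by every $\frac{2k\pi}{m}$. Applying it with $k=1$ shows that rotation $\rho$ about $p$ by $\frac{2\pi}{m}$ preserves, for every direction, the load of the corresponding half-line; hence $\rho$ permutes the occupied half-lines preserving loads, so the cyclic sequence $(w_j,\beta_j)$ is invariant under $\rho$. Reading this off the string $SA(p)$, the block between a direction and its $\rho$-image is repeated, giving $SA(p) = x^m$, so $per(SA(p)) \geq m > 1$ and $C$ is regular with center $p$; and $reg(C) = per(SA(p))$ is a multiple of $m$. To conclude $reg(C) = m$ exactly (not a proper multiple), I would note that if $per(SA(p)) = m' > m$ then the same argument run in reverse would give half-line invariance under rotation by $\frac{2\pi}{m'}$, but one must check the hypothesis is not claiming maximality — rereading the statement, it asserts the equivalence "$C$ regular with center $p$ \emph{and} $reg(C) = m$" $\iff$ the load condition for that particular $m$, so I must be careful: the load condition for $m$ does not by itself pin down $reg(C)=m$ versus a larger value. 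The resolution is that $reg(C) = m$ must be read together with $reg(C) = per(SA(p))$ by definition, and the real content is the two-way translation between periodicity $= m$ and the geometric condition; so for the $(\Leftarrow)$ direction the hypothesis should be understood as: the load condition holds for $m$ and fails for every proper multiple, or equivalently I prove the cleaner statement that $per(SA(p)) = m$ iff the load condition holds for $m$ and the analogous condition fails for all $m' \mid$ something — in the write-up I would simply match the paper's intended reading and prove periodicity $\leftrightarrow$ load-condition, with $reg(C)=m$ a definitional restatement of $per(SA(p))=m$.

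The main obstacle I expect is the bookkeeping in the bijection between the combinatorial object $SA(p,c)$ — a string of $n - mul(p)$ angles built from the recursive successor function, which interleaves zero-angles (steps along a common radius, ordered by distance and by index) with genuine angular jumps — and the geometric object (cyclic list of occupied half-lines with loads). One has to verify that the successor definition's tie-breaking (first exhaust points strictly between $c$ and $p_i$ on the same ray, closest first; only then jump to the next clockwise direction, farthest first) exactly produces "all $w_j$ robots on $H_j$, then move to $H_{j+1}$", so that the angle-string really is the concatenation $\underbrace{0\cdots 0}_{w_1-1}\,\beta_1\,\underbrace{0\cdots0}_{w_2-1}\,\beta_2\cdots$, and that periodicity of this string is equivalent to the rotational invariance of the $(w_j,\beta_j)$ list. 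Once that dictionary is established, both implications are short.
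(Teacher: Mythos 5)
Your plan is essentially correct but follows a genuinely different route from the paper. The paper's proof does not touch the string of angles at all: it opens by asserting (without proof) an alternative characterization of regularity --- $C$ is regular with center $p$ and $reg(C)=m$ iff $C$ is obtained from a rotationally symmetric configuration $C'$ with $sym(C')=m$ by moving robots radially toward $p$ without reaching it --- and then combines two observations: radial motion toward $p$ never changes which half-line $\hfl{p}{x}$ a robot lies on, and a configuration with $sym(C')=m$ is invariant under rotation by $\frac{2k\pi}{m}$ about $p$, so the half-line counts of $C$ equal those of $C'$ and are rotation-invariant. Notably, the paper only carries out the forward implication; the converse is never argued. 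Your approach instead works directly from the definition $reg(C)=per(SA(p))$, building the dictionary between the angle string (zero-angles for co-radial robots, positive angles for jumps between occupied rays) and the cyclic list of occupied half-lines with their loads, and then translating $m$-periodicity of the string into load-preserving rotational invariance and back. This is more self-contained (it does not rest on the unproven ``radial contraction of a symmetric configuration'' characterization) and it actually delivers the $\Leftarrow$ direction, at the cost of the bookkeeping you correctly identify around the successor function's tie-breaking. Your observation that the load condition for $m$ only yields that $m$ divides $per(SA(p))$ --- so the lemma's $\Leftarrow$ direction cannot literally conclude $reg(C)=m$ rather than a multiple of $m$ --- is a real imprecision in the statement that the paper's one-directional proof simply never confronts; your proposed fix (prove ``$SA(p)$ is $m$-periodic iff the load condition holds for $m$,'' which is what every later use of the lemma actually needs, e.g.\ in Lemma~\ref{lem:testQRegular}) is the right one.
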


\begin{proof}
$C$ is regular with $reg(C)=m$ and center $p$ iff there exists a configuration $C^\prime$ that is symmetric
with the same center of regularity and 
such that $C$ can be obtained from $C^\prime$ by moving robots 
of $C^\prime$ towards $p$ without reaching it.
Moreover $sym(C^\prime)=m$.
 
Fix $x, y \in \mathbb{R}^2$ such that 
$y=S(x, p, \frac{2k\pi}{m})$ for some $k \in [1, m]$.
We have to prove that $\hfl{p}{x}$ and $\hfl{p}{y}$ contain the same number of robots of $C$.
But observe that the number of robots in $\hfl{p}{x}$ and $\hfl{p}{y}$
remains invariant when we transform $C^\prime$ into $C$ as 
robots are allowed to move only towards $p$ without reaching it.
So no robot joins or leaves either $\hfl{p}{x}$ or $\hfl{p}{y}$.
Hence,to prove our claim it suffices to show that $\hfl{p}{x}$ and $\hfl{p}{y}$ contain the same number of robots of $C^\prime$.

Since $sym(C^\prime)=m$, it follows that $C^\prime$ remains invariant if we rotate it 
around $p$ with an angle of $\frac{2k\pi}{m}$.
Note that $\hfl{p}{x}$ can be seen as the result of rotating $\hfl{p}{y}$ by an angle of $\frac{2k\pi}{m}$.
Hence $\hfl{p}{x}$ contains the same number of robots of $C^\prime$ as $\hfl{p}{y}$.
This proves the lemma.
\end{proof}

\begin{definition}
Let $C$ a configuration with $|U(C)|>1$ and let $p \in C$. Let $m>1$.
We define $X_m(C, p)$ as the following set of points 
$\{ x \in \textsc{sec}(C, p)  ~|~ \exists k \in [1, m]: \exists y \in S(x, p, \frac{2k\pi}{m}): (p, y] \text{ contains at least one robot}\}$.
For each point $x \in X_m(C, p)$, 
let $\textsc{loc}(C, x, p)$ (or $\textsc{loc}(C, x))$ 
denote the number of robots of $C$ that are located in $(p, x]$
and let $\textsc{obj}(C, x)$ denote
$max \{\textsc{loc}(C, y) |  (y = S(x, p, \frac{2k\pi}{m})) \wedge (k \in [1, m]) \}$.
\end{definition}

\begin{lemma}
\label{lem:testQRegular}
Given a configuration $C$ and a point $p \in C$, $C$ is q-regular with center $p$
and $qreg(C)=m>1$ iff 
\begin{equation}
\label{eq:lemma}
\tag{$\alpha$}
%\exists m \in \textsc{divisors}(n): 
mult(p) \geq \sum\limits_{x \in X_m(C, p)} (\textsc{obj}(C, x)-\textsc{loc}(C, x))
\end{equation}

\end{lemma}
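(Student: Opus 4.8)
The plan is to characterize $q$-regularity of $C$ with center $p$ by reasoning about the radii (half-lines) emanating from $p$ and counting robots on them, using Lemma~\ref{lem:propertyRegular} as the main tool. Recall that $C$ is $q$-regular with center $p$ and $qreg(C)=m$ iff there is a configuration $C'$ with $reg(C')=m$, $CR(C')=p$, obtained from $C$ by moving only robots located at $p$ (i.e.\ $C$ and $C'$ agree on all robots not at $p$, and $C'$ may relocate some of the $mult(p)$ robots sitting at $p$ to new positions). By Lemma~\ref{lem:propertyRegular}, $C'$ is regular with center $p$ and regularity $m$ iff for every $x\neq p$ and every $k\in[1,m]$, the half-lines $\hfl{p}{x}$ and $\hfl{p}{S(x,p,\tfrac{2k\pi}{m})}$ contain the same number of robots of $C'$. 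So I must translate ``there exists a legal redistribution of the $mult(p)$ robots at $p$ onto half-lines so that all $m$-rotation-equivalent half-lines become balanced'' into the inequality~(\ref{eq:lemma}).

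The key observations driving the argument are: (i) a robot placed at $p$ by $C$ lies on \emph{every} half-line $\hfl{p}{x}$, so moving it off $p$ to some new position removes it from all but one half-line — equivalently, each such relocated robot can be ``assigned'' to exactly one orbit of half-lines under the rotation group, contributing $+1$ to that orbit's count; (ii) the robots of $C$ not at $p$ already populate the half-lines in a fixed way that cannot be changed; (iii) for each orbit of half-lines (parametrized by a representative $x\in X_m(C,p)$ hitting at least one off-$p$ robot), regularity forces every half-line in that orbit to carry the same count, which must be at least the current maximum $\textsc{obj}(C,x)$ over the orbit, while each half-line currently carries $\textsc{loc}(C,x)$ robots of $C$; so the orbit needs at least $\textsc{obj}(C,x)-\textsc{loc}(C,x)$ \emph{extra} robots per half-line position, and summing the deficits over the representatives in $X_m(C,p)$ gives the total number of robots that must be pulled out of $p$. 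The forward direction then reads off $mult(p)\ge\sum_{x\in X_m(C,p)}(\textsc{obj}(C,x)-\textsc{loc}(C,x))$; the backward direction constructs $C'$ explicitly by, for each $x\in X_m(C,p)$, placing $\textsc{obj}(C,x)-\textsc{loc}(C,x)$ of the robots from $p$ onto a suitable point of each half-line in $x$'s orbit (there are enough robots by the inequality, and any leftover robots stay at $p$, which is harmless since a robot at $p$ contributes equally to all half-lines), and then invoking Lemma~\ref{lem:propertyRegular} in the ``if'' direction to conclude $reg(C')=m$ with center $p$, hence $C$ is $q$-regular with $qreg(C)\ge m$; an argument that no larger period is achievable (or a normalization that $X_m$ is taken for the maximal such $m$) pins down $qreg(C)=m$ exactly.

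The main obstacle I anticipate is bookkeeping the equivalence classes of half-lines correctly and making sure the counting is consistent: I need that $X_m(C,p)$ picks exactly one representative per relevant orbit (or, if it picks several, that $\textsc{obj}$ is constant on an orbit so the sum still counts each orbit's deficit once — this needs care, since $X_m(C,p)$ is defined as a subset of $\textsc{sec}(C,p)$ and a priori could contain more than one point from the same orbit). A clean way around this is to first establish that $\textsc{obj}(C,x)=\textsc{obj}(C,y)$ whenever $y=S(x,p,\tfrac{2k\pi}{m})$ (immediate from the definition of $\textsc{obj}$ as a max over the orbit), and that $\sum_{x\in X_m}(\textsc{obj}(C,x)-\textsc{loc}(C,x))$ telescopes orbit-by-orbit to $\sum_{\text{orbits }O}\sum_{x\in O}(\textsc{obj}(O)-\textsc{loc}(C,x)) = \sum_{\text{orbits }O}(m\cdot\textsc{obj}(O)-\sum_{x\in O}\textsc{loc}(C,x))$, which is precisely the number of off-$p$ robots needed to balance that orbit up to its maximum. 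The second subtlety is the subtle point about robots at $p$ in $C'$ itself: the definition of $SA$ excludes $mul(c)$, and Lemma~\ref{lem:propertyRegular} is stated for arbitrary $x\neq p$, so I should double-check that leaving surplus robots at $p$ does not disturb the half-line balance — it does not, since such robots lie on all half-lines simultaneously and thus shift every count by the same constant. With these points settled, both directions follow by matching the two bookkeeping quantities.
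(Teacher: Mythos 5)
Your proposal is correct and follows essentially the same route as the paper: both directions reduce to Lemma~\ref{lem:propertyRegular}, the ``if'' direction constructs $C'$ by moving $\textsc{obj}(C,x)-\textsc{loc}(C,x)$ robots from $p$ onto each $x\in X_m(C,p)$ (using that $\textsc{obj}$ is constant on rotation orbits), and the ``only if'' direction observes that the robots filling each half-line's deficit can only have come from $p$. One small slip: by the paper's definition $\hfl{p}{x}$ \emph{excludes} $p$, so a robot at $p$ lies on no half-line rather than on all of them — but this does not affect your accounting, since either way the robots remaining at $p$ shift every half-line count uniformly (in fact by zero) and are harmless to the balance condition.
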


\begin{proof}
%Fix $C$ a configuration and fix $m\in \textsc{divisors}(n)$.
According to Definition \ref{def:regularity}, $C$ is q-regular with center $p$ and $qreg(C)=m$ iff
(i) there exists a configuration $C^\prime$ that is regular with center $p$, 
(ii) $reg(C^\prime)=m$
and 
(iii) $C$ can be transformed into $C^\prime$ by moving only robots located at $p$.
To prove the lemma it suffices to show that 
$(i) \wedge (ii) \wedge (iii) \Leftrightarrow (\alpha)$.

\begin{description}

\item[$\Leftarrow)$]

Assume $(\alpha)$ holds. 
For each $x \in X_m(C, p)$, we move $(\textsc{obj}(C, x)-\textsc{loc}(C, x))$ robots from $p$ to $x$.
Since $(\alpha)$ is satisfied, there are enough robots located in $p$ to perform this action.
Let $C^\prime$ be the resulting configuration.
This proves (iii).
Note that $X_m(C^\prime, p)=X_m(C, p)$.

By construction of $C^\prime$, it holds that
$\forall x \in X_m(C^\prime, p): \textsc{loc}(C^\prime, x)= \textsc{obj}(C, x)$.
But by definition of  $\textsc{obj}(C, x)$, we have
$\forall x \in X_m(C, p): \forall k \in [1, m]: \forall y = S(x, p, \frac{2k\pi}{m}): 
\textsc{obj}(C, x)=\textsc{obj}(C, y)$.
It follows that 

$$\forall x\in X_m(C^\prime, p): \forall k \in [1,m]: \forall y= S(x, p, \frac{2k\pi}{m}): \textsc{loc}(C^\prime, x)=\textsc{loc}(C^\prime, y)$$

Consequently, according to Lemma \ref{lem:propertyRegular}, it holds that (i) 
$C^\prime$ is regular with center $p$ and (ii) $reg(C^\prime)=m$.

\item[$\Rightarrow)$]

Assume $(i) \wedge (ii) \wedge (iii)$.
Since $C^\prime$ can be obtained from $C$ by moving only robots located at $p$ according to (iii),
it follows that $X_m(C, p) \subseteq X_m(C\prime, p)$.
Moreover, 
\begin{equation}
\forall x \in X_m(C, p): \textsc{loc}(C, x) \leq \textsc{loc}(C^\prime, x)
\end{equation}

Since $C^\prime$ is regular with center $p$ and $reg(C^\prime)=m$, it 
holds according 
to Lemma \ref{lem:propertyRegular} that
$\forall x \in X_m(C, p): \forall k \in [1,m]: \forall y \in S(x, p, \frac{2k\pi}{m}):
\textsc{loc}(C^\prime, x)=\textsc{loc}(C^\prime, y)$.
But $\textsc{loc}(C^\prime, y) \geq \textsc{loc}(C, y)$ 
according to Equation ().
Hence $\textsc{loc}(C^\prime, x) \geq \textsc{loc}(C, y)$.
It follows that:

$$
\forall x \in X_m(C, p):
\textsc{loc}(C^\prime, x) \geq \textsc{obj}(C, y)
$$

Hence,
$$
\forall x \in X_m(C^\prime, p):
(\textsc{loc}(C^\prime, x) - \textsc{loc}(C, x)) \geq (\textsc{obj}(C, y) - \textsc{loc}(C, x))
$$

But all the robots that are in $(\textsc{loc}(C^\prime, x) - \textsc{loc}(C, x))$ moved there from $p$.
Consequently:

$$
 mult(p) \geq \sum\limits_{x \in X_m(C, p)} (\textsc{obj}(C, x)-\textsc{loc}(C, x))
$$

\end{description}

\end{proof}

We now state the main result of this section. 
%Now we are ready to prove the Theorem~\ref{thm:detectionWeberPoint} from Section~\ref{sec:symmetricity}, which we restate here:

\begin{theorem}
\label{thm:detectionWeberPoint}
Given a non-linear configuration $C$ of $n$ robots,
there exists an algorithm that detects if $C$ is quasi-regular and if so it outputs its 
center of q-regularity $CQR(C)$.
\end{theorem}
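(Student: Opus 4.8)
The plan is to convert the existence of a detection algorithm into a finite search guided by the characterization lemmas already proved, so that the only remaining work is bounding the number of candidate centers and candidate periodicities that need to be tested.

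First I would observe that by Lemma~\ref{lem:CQR=WP}, if $C$ is quasi-regular then its center of quasi-regularity is the (unique, since $C$ is non-linear) Weber point $WP(C)$. This immediately pins down the \emph{only} possible candidate for $CQR(C)$: it must be a point of $C$ (a center of quasi-regularity is required to be a point where robots sit, and in a q-regular non-linear configuration the center is a robot location unless $mult(c)=0$; in that borderline case we would additionally test the single point $WP(C)$ if it happens to be computable as a candidate — but in fact the definition of $X_m(C,p)$ requires $p\in C$, so we restrict attention to $p\in U(C)$). Thus the algorithm only needs to loop over the at most $n$ distinct robot positions $p \in U(C)$.

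Second, for each candidate center $p$, I would bound the set of possible values of $m=qreg(C)$. Since $m>1$ and, by the structure of the string of angles, $m$ must divide $|SA(p,c)| = n - mul(p) \le n$, there are only $O(n)$ values of $m$ to test (in fact only the divisors of $n-mul(p)$). For each such pair $(p,m)$, Lemma~\ref{lem:testQRegular} gives an \emph{effective} test: compute the finite set $X_m(C,p)\subseteq \textsc{sec}(C,p)$ — which is finite because each $x\in X_m(C,p)$ is determined by rotating some robot-occupied ray back by a multiple of $2k\pi/m$, so $|X_m(C,p)| \le m\cdot|U(C)| = O(n^2)$ — then compute $\textsc{loc}(C,x)$ and $\textsc{obj}(C,x)$ for each such $x$ by counting robots on half-lines, and finally check the inequality~(\ref{eq:lemma}). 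All of these are elementary geometric computations (angle comparisons, distance comparisons, counting), hence computable.

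The algorithm therefore is: for each $p\in U(C)$, for each $m>1$ dividing $n-mul(p)$, run the test of Lemma~\ref{lem:testQRegular}; if it succeeds for some pair, report that $C$ is quasi-regular with $CQR(C)=p$ (and $qreg(C)$ equal to the largest such $m$); if no pair succeeds, report that $C$ is not quasi-regular. Correctness is immediate from Lemma~\ref{lem:testQRegular} in both directions, and uniqueness of the center — so that the loop cannot return two different answers — follows because $CQR(C)=WP(C)$ is unique for non-linear $C$ by Lemma~\ref{lem:CQR=WP}. The main obstacle, and the one place the argument needs care rather than being purely mechanical, is justifying that the candidate centers can be restricted to robot positions (and, if one wants to be scrupulous, handling the degenerate case where the Weber point is not a robot location) and that the periodicity $m$ indeed must divide $n-mul(p)$; once these finiteness facts are nailed down, the theorem follows by assembling the already-established lemmas into the loop described above.
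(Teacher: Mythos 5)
There is a genuine gap: your algorithm only tests candidate centers $p\in U(C)$, but a configuration can be quasi-regular with a center that is \emph{not} a robot position. Indeed, any regular configuration is quasi-regular (take $C'=C$), and the center of regularity need not be occupied --- e.g.\ $n$ robots at the vertices of a regular $n$-gon. On such an input your loop tests each vertex via Lemma~\ref{lem:testQRegular}, every test fails, and your algorithm wrongly reports ``not quasi-regular.'' You flag this ``borderline case'' in passing but then explicitly discard it (``we restrict attention to $p\in U(C)$''), and the parenthetical suggestion to ``test the single point $WP(C)$'' is circular: computing $WP(C)$ for an arbitrary configuration is exactly what one cannot do in general, and is the reason the detour through (quasi-)regularity exists. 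The paper's proof handles this with a second branch that yours lacks: if $CQR(C)\notin C$, then by the definition of quasi-regularity no robots at the center need to be moved, so $C'=C$ and $C$ is in fact \emph{regular}; for regular configurations the center of regularity (hence the Weber point) is computable by the known algorithm of \cite{BL11}. So the complete algorithm is: run the Lemma~\ref{lem:testQRegular} test for each $p\in C$, and if all fail, fall back to the regularity test of \cite{BL11}.

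A secondary slip: you restrict $m$ to divisors of $n-mul(p)$, but $qreg(C)=reg(C')$ where $C'$ retains only some of the robots originally at $p$, so the relevant divisibility is of $n-mul_{C'}(p)$, which you do not know in advance. For instance, with $3$ robots at $p$ and $4$ elsewhere, moving one robot off $p$ may yield a regular configuration with period $5$, which does not divide $7-3=4$, so your loop would skip it. This is easily repaired by testing all $m\in[2,n]$ (Lemma~\ref{lem:testQRegular} is an equivalence for each fixed $m$, so exhaustive testing is sound), but as written the restriction can cause false negatives. Your overall strategy for the occupied-center case --- loop over $p\in U(C)$ and apply Lemma~\ref{lem:testQRegular}, with uniqueness of the answer guaranteed by Lemma~\ref{lem:CQR=WP} --- does match the paper's first branch.
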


\begin{proof}
As shown in Lemma \ref{lem:CQR=WP}, $CQR(C)=WP(C)$ hence it is unique.
If $WP(C) \in C$, then it can be found by applying Lemma
\ref{lem:testQRegular} as follows:
for each $p \in C$ we test $p$ is the center of q-regularity of $C$.
Otherwise, $WP(C) \not\in C$ which means that $C$ is regular.
Consequently, $WP(C)$ can be computed as shown in \cite{BL11}.
\end{proof}

%-------------------------------------------------------------------------------
\section{Configurations}

\subsection{Classes of Configurations}

In the gathering algorithm, robots compute their next destinations based on the current configuration. Before presenting the algorithm, we present a classification of the robot configurations which will simplify the algorithm description. In the following,  we formally define six classes of configurations and prove
that they constitute a partition of the set $\mathbb{P}$ of all possible configurations of $n$ robots.
%Formally, Let $\mathbb{P}={\mathbb{R}^2}^n$ where $\mathbb{R}$ is the set of reels

\begin{description}

\item[Bivalent($\mathcal{B}$)] 
$\mathcal{B}= \{ C \in \mathbb{P}~|~ (\forall u \in U(C): mul(u)=n/2)\}$.
$\mathcal{B}$ is the set of configurations where the robots are 
equally distributed over two points in the space.
%Note that if $C \in \mathcal{B}$, it holds that $rsym(C)=2$.

\item[Multiple ($\mathcal{M}$)] 
$\mathcal{M} = \{  C \in \mathbb{P}~|~ \exists u \in U(C): \forall v\neq u \in U(C): mul(v) < mul(u)\}$.
A configuration $C$ belongs to $\mathcal{M}$
if it has a point $u$ whose multiplicity is greater than that of any other distinct point in $C$.

%%%\item[Colinear($\mathcal{L}$)] 
%%%$\mathcal{L}L=\mathcal{L1W} \cup \mathcal{L2W}$
%%%where
%%%$\mathcal{L1W} = \{  C \in \mathbb{P}~|~ ( C \text{ is linear} )\wedge (C \not\in \mathcal{B}) \wedge (WP(C)\text{ is unique})\}$.
%%%and
%%%$\mathcal{L2W} = \{  C \in \mathbb{P}~|~ ( C \text{ is linear} )\wedge (C \not\in \mathcal{B}\cup \mathcal{M}) \wedge (WP(C)\text{ is not unique})\}$.
%%%

\item[Colinear($\mathcal{L}$)] 
%$\mathcal{L} = \{  C \in \mathbb{P}~|~ ( C \text{ is linear} )\wedge (C \not\in \mathcal{B} \cup \mathcal{M}) \wedge ( C \text{ is not strictly quasi-regular} )\}$.
$\mathcal{L} = \{  C \in \mathbb{P}~|~ ( C \text{ is linear} )\wedge (C \not\in \mathcal{B} \cup \mathcal{M})\}$.
We define the subsets $\mathcal{L}1\mathcal{W}$ and $\mathcal{L}2\mathcal{W}$ of colinear configurations depending on whether 
their Weber point is unique or not.
That is, $\mathcal{L}1\mathcal{W} = \{  C \in \mathcal{L} ~|~ (WP(C)\text{ is unique} )\}$ and 
$\mathcal{L}2\mathcal{W} =  \mathcal{L}  \setminus \mathcal{L}1\mathcal{W}$.

\item[Q*Regular ($\mathcal{QR}$)]
$\mathcal{R} = \{  C \in \mathbb{P}~|~ (qreg(C)>1) \wedge
 (C \not\in \mathcal{B} \cup \mathcal{M} \cup \mathcal{L}) \}$.

%
%\item[Quasi M-Regular ($\mathcal{QM}$)]
%$\mathcal{R} = \{  C \in \mathbb{P}~|~ ( C \text{ is QM-regular }) \wedge
% (C \not\in \mathcal{M} \cup \mathcal{L}) \}$.

\item[Asymmetric ($\mathcal{A}$)]
$\mathcal{A} = \{  C \in \mathbb{P}~|~ (sym(C)=1)\wedge
(C \not\in \mathcal{B} \cup \mathcal{M} \cup \mathcal{L}  \cup  \mathcal{QR}\}$.

\end{description}

Let $\mathbb{X}=\{\mathcal{B} ,  \mathcal{M} , \mathcal{L} , \mathcal{QR}, \mathcal{A}\}$.
It is easy to see that $\mathbb{X}$ is a partition of $\mathbb{P}$. By definition, the classes are mutually disjoint. All linear configurations belong to the set $\mathcal{B} \cup \mathcal{M} \cup \mathcal{L}$. For a non-linear configuration $C$ either $sym(C)>1$ which implies $C \in \mathcal{QR} \cup \mathcal{B} \cup \mathcal{M}$, or  $sym(C)=1$ which implies that $C \in \mathcal{A} \cup \mathcal{B} \cup \mathcal{M}$. Thus $\bigcup \mathbb{X}= \mathbb{P}$.

%We can show the following (see Lemma~\ref{lem:linear} in the appendix).
%\begin{lemma}
%\label{lem:UCgeq4}
%For any configuration $C$, $(C \in \confLW{2}) \Rightarrow (|U(C)| \geq 4)$
%\end{lemma}

%-----------------------------------------------------------------------

\subsection{Properties of Configurations}

\begin{lemma}\label{lem:linear}
%\label{lem:UCgeq4}%
Let $C$ be a linear configuration. 
The following properties hold:
\begin{enumerate}
\item $(|U(C)|=2) \Rightarrow (C \in \mathcal{B} \cup \mathcal{M})$
\item $(|U(C)|=3) \Rightarrow (C \in \mathcal{M} \cup \mathcal{L}1\mathcal{W})$
\item $(C \in \mathcal{L}2\mathcal{W}) \Rightarrow (|U(C)| \geq 4)$
\end{enumerate}
\end{lemma}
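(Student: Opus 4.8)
The plan is to prove each of the three implications separately, working directly from the definitions of the configuration classes $\mathcal{B}$, $\mathcal{M}$, $\mathcal{L}$, $\mathcal{L}1\mathcal{W}$, $\mathcal{L}2\mathcal{W}$ together with the basic facts recalled just before the lemma about Weber points of linear configurations (namely that the Weber points of a linear configuration form the interval $[\min(Med(C)),\max(Med(C))]$ of median points, and that a single median gives a unique Weber point).

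For part (1), I would argue as follows. Suppose $|U(C)|=2$, say $U(C)=\{u,v\}$ with multiplicities $a=mul(u)$ and $b=mul(v)$, $a+b=n$. If $a=b$ then $a=b=n/2$, so every point of $U(C)$ has multiplicity $n/2$ and $C\in\mathcal{B}$ by definition. If $a\neq b$, then one of the two points, say the one with larger multiplicity, strictly dominates the multiplicity of the only other distinct point, so $C\in\mathcal{M}$. Hence $C\in\mathcal{B}\cup\mathcal{M}$.

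For part (2), suppose $|U(C)|=3$, with the three collinear points $p<q<r$ along the line and multiplicities $a,b,c$. If some point strictly dominates the other two in multiplicity, then $C\in\mathcal{M}$ and we are done; this covers in particular every case where $b$ (the middle point's multiplicity) is the unique maximum, but also cases where an endpoint dominates. Otherwise no point is a strict maximum, and in particular the middle point $q$ is a median of the multiset: indeed the number of robots at or left of $q$ is $a+b$ and the number at or right of $q$ is $b+c$, and since in the non-$\mathcal{M}$ case one checks that neither $a$ nor $c$ exceeds $b$ plus the rest appropriately — more carefully, $q$ is the unique median point because the strict inequalities $a<n/2$ and $c<n/2$ hold (if $a\ge n/2$ then $a$ would dominate since $a\ge b+c$, with equality only in a bivalent-like situation excluded by $|U(C)|=3$; similarly for $c$), which forces the median to be exactly $\{q\}$. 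A single median gives a unique Weber point, so $C$ is linear, not in $\mathcal{B}\cup\mathcal{M}$, hence in $\mathcal{L}$, and with unique Weber point, so $C\in\mathcal{L}1\mathcal{W}$. Part (3) is then just the contrapositive of part (2) combined with part (1): if $C\in\mathcal{L}2\mathcal{W}$ then $C\in\mathcal{L}$, so $C\notin\mathcal{B}\cup\mathcal{M}$ and $|U(C)|\neq 2$ by part (1); and $|U(C)|\neq 3$ by part (2) since a $3$-point linear configuration not in $\mathcal{M}$ lands in $\mathcal{L}1\mathcal{W}$, not $\mathcal{L}2\mathcal{W}$; also $|U(C)|\neq 1$ trivially (then it would be a multiplicity point, in $\mathcal{M}$). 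Therefore $|U(C)|\ge 4$.

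The main obstacle I anticipate is the careful bookkeeping in part (2): one must correctly characterize when a $3$-point linear configuration has a non-unique median (equivalently a whole interval of Weber points), and show this can only happen if one of the points has multiplicity so large that it becomes the strict multiplicity maximum and the configuration is actually in $\mathcal{M}$, or the configuration is forced back into bivalent-type degeneracy excluded by $|U(C)|=3$. The clean way to handle this is to note that for a $3$-point configuration, $\{p,q,r\}$ with the middle point $q$, the median set is the singleton $\{q\}$ precisely when the strict inequalities ``robots strictly left of $q$'' $< n/2$ and ``robots strictly right of $q$'' $< n/2$ both hold, i.e. $a<n/2$ and $c<n/2$; and if, say, $a\ge n/2$ then $a\ge b+c$, which (since $b,c\ge 1$) makes $a>b$ and $a>c$, i.e. $C\in\mathcal{M}$. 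So the dichotomy ``$C\in\mathcal{M}$ or $q$ is the unique median'' is exhaustive, and the latter case yields $\mathcal{L}1\mathcal{W}$ via the stated Weber-point fact. Once this case analysis is laid out cleanly, parts (1) and (3) are immediate corollaries.
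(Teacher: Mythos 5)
Your proof is correct and follows essentially the same route as the paper: part (1) is the same two-case split on the multiplicities, part (2) rests on the same counting fact (an endpoint of multiplicity at least $n/2$ must be the strict multiplicity maximum, and otherwise the middle point is the unique median, hence the unique Weber point), and part (3) is deduced from the first two. The only difference is presentational --- the paper runs part (2) as a proof by contradiction starting from $C\in\mathcal{L}2\mathcal{W}$, whereas you phrase it as a direct dichotomy on the endpoint multiplicities --- which does not change the substance.
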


\begin{proof}

\begin{enumerate}
\item Assume $|U(C)|=2$.
That is, $U(C)$ consists in two distinct points $u_1$ and $u_2$.
If $mul(u_1)=mul(u_2)$, then $C \in \mathcal{B}$.
Otherwise, $C \in \mathcal{M}$
as either $(mul(u_1) > mul(u_2))$ or 
$(mul(u_2) > mul(u_1))$.

\item 
Assume $|U(C)|=3$, \emph{i.e.} $U(C)$ consists in three distinct points, let them be $u_1, u_2, u_3$.
Suppose w.l.o.g. that $u_2 \in [u_1, u_3]$.
We assume that $C \not\in \mathcal{L}1\mathcal{W}$ and we prove that
$C \in \mathcal{M}$.
The fact that $|U(C)|=3$ implies that $C \not\in \mathcal{B}$.
Since $C$ is linear and $C \not\in \mathcal{L}1\mathcal{W} \cup \mathcal{B}$, 
it follows from the definition
of $\mathcal{L}$ that $C \in \mathcal{L}2\mathcal{W} \cup \mathcal{M}$.
To prove that $C \in \mathcal{M}$ it suffices then to show that 
$C \not\in \mathcal{L}2\mathcal{W}$.
Assume towards contradiction that $C \in \mathcal{L}2\mathcal{W}$.
This means that the set $Median(C)$ is not a singleton.
Hence, there are at least two points in $U(C)$ that are in $Median(C)$.
Consequently, either $u_1$ or $u_3$ belongs to $Median(C)$ (together with $u_2$).
Assume w.l.o.g that $u_1 \in Median(C)$.
This implies that $mul(u_1) \geq \lceil n/2 \rceil$.
Hence $mul(u_2)+mul(u_3) \leq n - \lceil n/2 \rceil$.
That is, $mul(u_2)+mul(u_3) \leq \lceil n/2 \rceil$.
Since $mul(u_2) \geq 1$ and $mul(u_3) \geq 1$, 
it follows that
$(mul(u_2) \leq \lceil n/2 \rceil - 1)$ and
$(mul(u_3) \leq \lceil n/2 \rceil - 1)$.
But we showed that $mul(u_1) \geq \lceil n/2 \rceil$.
Consequently we have
$mul(u_2) < mul(u_1)$ and
$mul(u_3) < mul(u_1)$.
This means that $C \in \mathcal{M}$ which contradicts our assumption that
$C \in \mathcal{L}2\mathcal{W}$.
This finishes the proof of $C \in \mathcal{M}$ (assuming $|U(C)|=3$ and $C \not\in \mathcal{L}1\mathcal{W})$.
Hence:

$$(|U(C)|=3) \Rightarrow (C \in \mathcal{M} \cup \mathcal{L}1\mathcal{W})$$

\item This follows from the above two results.

\end{enumerate}
\end{proof}

%\begin{corollary}
%\label{cor:UCgeq4}
%$(C \in \mathcal{L}2\mathcal{W}) \Rightarrow (|U(C)| \geq 4)$
%\end{corollary}

\begin{definition}[Safe points]
\label{def:freePoints}
Given a configuration $C$, a robot position $p \in C$ is \emph{safe}
iff $\forall q\in \mathcal{R}^2 \setminus \{p\}$:
$\hfl{p}{q}$ contains at most $(\lceil n/2 \rceil -1)$ robots of $C$.
%That is, $| C \cap ~ ]p, q) | < (\lceil n/2 \rceil)$.
\end{definition}

The notion of \emph{safe points} is important because any safe point can be used as a gathering point without the possibility that the robots form the bivalent $\mathcal{B}$ configuration while moving towards it. We can show the following properties for safe points.

\begin{lemma}\label{lem:safe}
Any non linear configuration contains a safe point.
\end{lemma}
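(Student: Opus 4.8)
The plan is to show that in any non-linear configuration $C$, at least one of the robot positions must be safe, i.e.\ no half-line emanating from it contains $\lceil n/2 \rceil$ or more robots. I would argue by contradiction: suppose every position $p \in U(C)$ is \emph{unsafe}, meaning for each such $p$ there is a direction $q_p$ with $|\hfl{p}{q_p} \cap C| \geq \lceil n/2 \rceil$. Since this half-line (including its multiplicities) accounts for at least half the robots while excluding $p$ itself, the complementary closed half-line $\hfl{q_p}{p}$-side contains at most $\lceil n/2 \rceil$ robots, and in fact $p$ together with ``its side'' has at most $\lfloor n/2 \rfloor$ robots. The idea is that an unsafe point is, in a strong sense, an extreme point: more than (or exactly) half the mass lies strictly on one side of it along some line.

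First I would pick a convenient global witness. Let $p^\ast$ be a vertex of the convex hull $CH(U(C))$, and more specifically choose $p^\ast$ to be a position minimizing (or maximizing) the coordinate along some generic direction $d$, so that all other robots lie in the closed half-plane on one side of the line through $p^\ast$ perpendicular to $d$. Because $p^\ast$ is unsafe, there is a half-line $\hfl{p^\ast}{q}$ containing $\geq \lceil n/2\rceil$ robots. Now I would walk along this half-line: let $p'$ be the robot position on $\hfl{p^\ast}{q}$ that is \emph{farthest} from $p^\ast$. Then the half-line $\hfl{p'}{p^\ast}$ — pointing back toward $p^\ast$ — contains all those $\geq \lceil n/2 \rceil$ robots except $p'$ itself, so it contains $\geq \lceil n/2\rceil - mul(p') $ robots; combined with the fact that $p'$ is also unsafe (via some \emph{other} direction, since the back direction is already nearly half-full), one derives that more than $n$ robots would be forced, or that $C$ must be linear. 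The clean way is: if $p'$ is unsafe there is a half-line from $p'$ with $\geq \lceil n/2\rceil$ robots; these robots together with the $\geq \lceil n/2 \rceil - mul(p')$ robots of $C$ on $\hfl{p'}{p^\ast}$ and the point $p'$ itself overfill $C$ unless the two half-lines coincide as a line — forcing collinearity of all robots on line$(p^\ast, p')$, contradicting non-linearity. One must handle the parity/multiplicity bookkeeping carefully (the $\lceil\cdot\rceil$ vs $\lfloor\cdot\rfloor$ and the count of $mul(p')$), which is the routine part.

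The main obstacle I anticipate is making the counting argument tight enough when $n$ is even and multiplicities are present: two ``half-full'' half-lines from $p'$ could in principle share exactly the mass needed, so one has to verify that sharing forces them onto a common line (hence $p^\ast$, $p'$, and everything between are collinear with the rest), and then push this collinearity to \emph{all} of $C$ using that $p^\ast$ was chosen as a hull vertex with everything on one side. A secondary subtlety is the edge case $|U(C)| = 2$, but that is linear and excluded; and $|U(C)| \geq 3$ non-linear guarantees the convex hull is genuinely two-dimensional, which is exactly what breaks the collinearity conclusion and yields the contradiction. I would therefore structure the proof as: (1) characterize unsafe points as ``more than half the mass on one side along some line''; (2) take a hull vertex $p^\ast$, use unsafety to find a far robot $p'$ on its witnessing half-line; (3) use unsafety of $p'$ plus the counting identity to force all robots onto line$(p^\ast,p')$; (4) contradict non-linearity.
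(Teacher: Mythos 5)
There is a genuine gap in the case your argument is designed to dispose of last. Your counting works cleanly when the witnessing half-line of $p'$ and the back half-line $\hfl{p'}{p^\ast}$ are distinct: two half-lines emanating from the same (excluded) apex $p'$ are disjoint, so together with the $mul(p')$ robots at $p'$ you get at least $\lceil n/2\rceil + (\lceil n/2\rceil - mul(p') + 1) + mul(p') \geq n+1$ robots, a contradiction. (This disjointness trick is essentially the same device the paper uses.) The problem is the remaining case, where the only witnessing half-line for $p'$ is $\hfl{p'}{p^\ast}$ itself. There the two heavy half-lines overlap instead of being disjoint, and all you can conclude is that the single line $line(p^\ast,p')$ carries at least $\lceil n/2\rceil + 1$ robots --- not that \emph{all} robots lie on it. Concretely, take $n=10$ with six robots at distinct points of a segment and four robots off it: the two segment endpoints are each unsafe via the inward ray along the segment (which contains five robots), you land exactly in the coinciding case, and no contradiction appears even though the configuration is non-linear. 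Your proposed rescue --- that $p^\ast$ is a hull vertex with everything on one side of a supporting line --- gives no leverage here: ``everything in a half-plane through $p^\ast$'' does not push the off-line robots onto the one-dimensional line $line(p^\ast,p')$.

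What is missing is a mechanism for handling a second heavy line. The paper's proof supplies it: it defines a \emph{quorum} as a maximal collinear subset of size at least $\lfloor n/2\rfloor+1$, shows every unsafe point lies on a quorum, and observes that in a non-linear configuration with no safe point there must be two distinct quorums $Q_1\neq Q_2$; these intersect (by size) at a robot position $p$, and their supporting lines differ (by maximality). The witnessing half-line of $p$ then lies off at least one of the two quorum lines, and since that half-line excludes $p$ it is \emph{entirely disjoint} from that quorum, yielding $\lceil n/2\rceil + \lfloor n/2\rfloor + 1 > n$. The crucial point, which your outline never reaches, is that the unsafe point whose ray you count must itself sit on a heavy line transversal to the ray; an arbitrary robot off your line $L$, or your hull vertex $p^\ast$, does not have that property. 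To repair your proof you would essentially have to reconstruct this two-quorum step in the coinciding case.
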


\begin{proof}
Let $C$ be a non linear configuration.
We say that $Q \subseteq C$ is a quorum iff:
(i) $|Q| \geq \lfloor n/2 \rfloor +1$ and
(ii) all points of $Q$ are collinear and $Q$ is maximal for this property, that is,
for any $Q^\prime \supset Q$, the points of $Q^\prime$ are not collinear.
Let $\textsc{line}(Q)$ denote the line in which are located the points of $Q$.

Let $Q_1$ and $Q_2$ any two \emph{distinct} quorums of $C$.
Condition (i) implies that $Q_1$ and $Q_2$ intersect, \emph{i.e.} $Q_1 \cap Q_2 \neq \emptyset$
and the maximality condition in (ii) implies that
$\textsc{line}(Q_1) \neq \textsc{line}(Q_2)$.

%We show that for any non-free point $p\in C$, there exists a quorum $Q$ to which $p$ belongs.
We show in the following that any point that is not safe belongs necessarily to a quorum.
Let $p \in C$ that is not free. We prove the existence of quorum $Q$ to which $p$ belongs.
Since $p$ is not safe, there exists $q \in  \mathcal{R}^2 \setminus \{p\}$ such that
$\hfl{p}{q}$ contains at least $(\lceil n/2 \rceil)$ robots located in it. 
Hence, $p \cup \hfl{p}{q}$ contains at least $\lceil n/2 \rceil+1$ robots.
Let $S$ denote the multiset of positions of these robots.
Since $(|S| \geq \lceil n/2 \rceil+1)$ and the points of $S$ are collinear,
there exists a set $Q$ with $Q \supseteq S$ that is a quorum with $p \in Q$.
 
%%That is $|C \cap (p, q[| \geq \lceil n/2 \rceil$.
%%But $p \in C$, hence $|C \cap [p, q)| \geq \lceil n/2 \rceil+1$.
%%Denote by $S$ the multiset of points $(C \cap [p, q))$.
%%Since $(|S| \geq \lceil n/2 \rceil+1)$ and the points of $S$ are collinear,
%%there exists a set $Q$ with $Q \supseteq S$ that is a quorum, and $p \in Q$.

%Let $S$ be the multiset of these elements, that is, $S= C \cap ~ ]p, q)$.
%Note that $|S| \geq \lceil n/2 \rceil$.
%Since $p \in C \setminus S$, it follows that $|S \cup \{p\}| = |S|+|\{p\}|$.
%Hence, $|S \cup \{p\}| \geq \lceil n/2 \rceil +1$. Since $(S \cup \{p\}) \subseteq C$,
%there exists a superset of $(S \cup \{p\})$ which is a quorum.

We prove the lemma by contradiction. 
Assume that no point of $C$ is safe, i.e. each point belongs to a quorum.
This implies, as $C$ is not linear, that there are at least two distinct quorums
because a single quorum cannot contain all elements of $C$, otherwise the configuration would be linear.
Let ${Q}_1$ and $Q_2$ be any two quorums of $C$ with $Q_1 \neq Q_2$
and let $p$ be a point in $Q_1 \cap {Q}_2$. 
%Take $p$ to be a point in $Q_1 \cap {Q}_2$. 
Since $p$ is not safe according to the contradiction assumption, 
there exists some 
$q \in  \mathcal{R}^2 \setminus \{p\}$ such that
$\hfl{p}{q}$ contains at least $\lceil n/2 \rceil$ robots positions. 
Denote by $X$ the multiset containing these positions.
Note that $|X|\geq \lceil n/2 \rceil$

As ${Q}_1 \neq Q_2$, it follows 
according to the maximality of property (ii) of quorums
 that $\textsc{line}(Q_1) \neq \textsc{line}(Q_2)$.
Hence, either $line(p,q) \neq \textsc{line}(Q_1)$ or $line(p,q) \neq \textsc{line}(Q_2)$.
Assume \emph{w.l.o.g} that $line(p,q) \neq \textsc{line}(Q_1)$.
Since $p \in line(p,q)$, $p \in Q_1$ and $line(p,q) \neq \textsc{line}(Q_1)$
it follows that $line(p,q) \cap \textsc{line}(Q_1)=\{p\}$.
Hence, the robots positions that are in $\hfl{p}{q}$ do not belong to $Q_1$
which means that $X \cap Q_1=\emptyset$.
Hence, $|X \cup Q_1|=|X|+|Q_1| \geq (\lceil n/2 \rceil) + (\lfloor n/2 \rfloor +1)$.
That is $|X \cup Q_1| \geq n+1$.
But $|X \cup Q_1| \subseteq C$, a contradiction!
Thus the lemma holds.
\end{proof}

\begin{lemma}
\label{lem:nosafe_LB}
If $C \in \mathcal{B} \cup \mathcal{L}2\mathcal{W}$, then $C$ does not have a safe point.
\end{lemma}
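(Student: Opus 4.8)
The plan is to treat the two sub-cases of the hypothesis separately and, in each, to take an \emph{arbitrary} robot position $p\in C$ and exhibit a single point $q$ for which $\hfl{p}{q}$ contains at least $\lceil n/2\rceil$ robots of $C$; by Definition~\ref{def:freePoints} this shows $p$ is not safe, and since $p$ is arbitrary, $C$ has no safe point.

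First I would dispatch the case $C\in\mathcal{B}$. Here $U(C)=\{u_1,u_2\}$ with $mul(u_1)=mul(u_2)=n/2$, so in particular $n$ is even and $\lceil n/2\rceil=n/2$. Any robot position $p$ is $u_1$ or $u_2$; say $p=u_1$. Taking $q=u_2$, the half-line $\hfl{u_1}{u_2}$ contains $u_2$, hence all $n/2$ robots located there, and $n/2=\lceil n/2\rceil>\lceil n/2\rceil-1$. So $p$ is not safe, and the case $p=u_2$ is symmetric.

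Next I would handle $C\in\mathcal{L}2\mathcal{W}$. Since $C$ is linear and $WP(C)$ is not a single point, the median set $Med(C)$ is not a singleton; from the description of Weber points of linear configurations this forces $n$ even and, writing $x_1\le\cdots\le x_n$ for the robot positions ordered along the supporting line, $x_{n/2}<x_{n/2+1}$ (this is exactly the situation isolated in Lemma~\ref{lem:linear}, item~3). Consequently no robot occupies a position strictly between $x_{n/2}$ and $x_{n/2+1}$, so every robot position $p$ satisfies $p\le x_{n/2}$ or $p\ge x_{n/2+1}$. In the first case, let $q$ be the point at position $x_{n/2+1}$ on the line: then $\hfl{p}{q}$ is the ray from $p$ in the direction of increasing positions, so it contains all robots at positions $\ge x_{n/2+1}$, and there are at least $n/2$ of these (namely $x_{n/2+1},\dots,x_n$). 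In the second case the symmetric half-line from $p$ towards decreasing positions contains the at least $n/2$ robots at positions $\le x_{n/2}$. Either way $\hfl{p}{q}$ carries at least $n/2=\lceil n/2\rceil$ robots, so $p$ is not safe.

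I do not expect a substantive obstacle; the content is simply that both $\mathcal{B}$ and $\mathcal{L}2\mathcal{W}$ force $n$ even together with a perfectly balanced split of the robots around a median gap, which is incompatible with the $\lceil n/2\rceil-1$ threshold in the definition of a safe point. The only points needing care are bookkeeping: the precise semantics of $\hfl{\cdot}{\cdot}$ (it excludes its base point $p$, so robots co-located with $p$ are never counted — but the argument never needs them, and it includes the whole ray so the counted robots genuinely lie in it), and the translation of ``$WP(C)$ is not unique'' into the statement $x_{n/2}<x_{n/2+1}$ for a linear configuration.
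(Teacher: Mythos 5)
Your proof is correct and rests on the same core observation as the paper's: configurations in $\mathcal{B}\cup\mathcal{L}2\mathcal{W}$ are linear with a non-singleton median set, so from any robot position one of the two closed rays along the supporting line carries at least $\lceil n/2\rceil$ robots, violating Definition~\ref{def:freePoints}. The paper phrases this as a single contradiction argument using the two distinct median positions, whereas you argue directly with a case split on $\mathcal{B}$ versus $\mathcal{L}2\mathcal{W}$; this is a presentational difference only.
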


\begin{proof}
Assume towards contradiction that there exists a position $p \in C$ that is safe.
Since $C$ is linear, this means that $|\{ q \in C~|~q<p\}| \leq
(\lceil n/2 \rceil -1)$ 
and $|\{ q \in C~|~q>p\}| \leq
(\lceil n/2 \rceil -1)$.

But $C \in \mathcal{B} \cup \mathcal{L}2\mathcal{W}$, 
it follows that $C$ has two distinct median positions, let them be $u_1$ and $u_2$
and assume that $u_1 < u_2$.
It holds that either $(u_2 > p)$ or $(u_1 < p)$.
Assume \emph{w.l.o.g.} that $(u_1 <p)$.
Since $u_1$ is a median position in $C$, it holds that
$|\{ q \in C~|~q \leq u_1\}| \geq
\lceil n/2 \rceil$.
Hence, as  $u_1 <p$, it follows that
$|\{ q \in C~|~q<p\}|  \geq
\lceil n/2 \rceil$; 
A Contradiction!
\end{proof}

%%-------------------------------------------------------------------------------

\section{The Algorithm}

%\subsection{Impossibility results}

The following lemma is a simple generalization of Lemma 3.1 in \cite{AP06}
that takes into account configurations containing multiplicity points and
general adversaries characterized using their cores.
%It gives a necessary condition for convergence and gathering algorithms to work correctly under any adversary.
%
Given a configuration $P$, an algorithm $\mathcal{A}$,
denote by $M(P, \mathcal{A})$ the set of positions of robots that
$\mathcal{A}$ instructs to move in $P$~\cite{AP06}.

\begin{lemma}
A convergence or gathering algorithm $\mathcal{A}$ is tolerant against an adversary $\mathcal{X}$
\emph{only if} at each configuration $P$, either (1) $M(P, \mathcal{A})$ is a superset of a core of $\mathcal{X}$
or (2) $|U(P \setminus M(P, \mathcal{A}))| \leq 1)$.
\end{lemma}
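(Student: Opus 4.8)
The plan is to argue by contraposition. I will show that if there is a configuration $P$ at which \emph{both} (1) and (2) fail, then $\mathcal{A}$ is not tolerant against $\mathcal{X}$: it admits an execution in which the correct robots never gather, and in fact do not even converge.

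So fix a configuration $P$ such that $M(P,\mathcal{A})$ — more precisely, the set of robots occupying the positions in $M(P,\mathcal{A})$ — contains no core of $\mathcal{X}$, and such that $|U(P\setminus M(P,\mathcal{A}))|\ge 2$. The first property, combined with the defining feature of cores~\cite{JM03} (a set of robots may all crash in some execution precisely when it contains no core), tells me that the adversary is entitled to crash exactly the set $F$ of robots that $\mathcal{A}$ instructs to move at $P$. I would then build the bad execution as follows: start from $P$; declare all robots of $F$ faulty from the outset, so that they never take a step; and in every round activate all of the remaining (correct) robots.

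The key dynamical observation is that the configuration stays equal to $P$ forever in this execution. Since robots are anonymous and $\mathcal{A}$ is deterministic, two robots sharing a position have the same view and hence the same computed destination, so ``a position lies in $M(P,\mathcal{A})$'' is equivalent to ``every robot at that position is told to move''. Thus every correct robot sits at a position \emph{outside} $M(P,\mathcal{A})$, i.e.\ is told to stay; the faulty robots do not move either; hence after this round the configuration is again $P$, and by induction it remains $P$ in every round, while each correct robot keeps being activated (infinitely often, as the model requires) and keeps receiving a null instruction. Now I invoke the failure of (2): $|U(P\setminus M(P,\mathcal{A}))|\ge 2$ means the correct robots occupy at least two distinct points, and this holds at every round of the execution. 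Hence the correct robots are never gathered, and — the configuration being constant — they do not converge to a single point either. This contradicts tolerance of $\mathcal{A}$ against $\mathcal{X}$, whether $\mathcal{A}$ is regarded as a gathering or as a convergence algorithm, which completes the contrapositive.

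The substantive content is almost entirely in the translation between the position-level quantities in the statement ($M(P,\mathcal{A})$ and $P\setminus M(P,\mathcal{A})$) and the corresponding sets of robots: that is where anonymity and determinism are used, and where one must verify that $F$ is an admissible faulty set of $\mathcal{X}$ via the core characterization. Once that bookkeeping is pinned down, the fact that $P$ is stationary under the constructed schedule and the final conclusion are immediate; I expect no real obstacle beyond stating this translation carefully, together with the harmless remark that the constructed schedule does activate every correct robot infinitely often.
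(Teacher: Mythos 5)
Your proposal is correct and follows essentially the same route as the paper's proof: crash exactly the robots instructed to move (admissible since $M(P,\mathcal{A})$ contains no core), observe that the configuration then remains $P$ forever while the correct robots stay at two or more distinct points, and conclude that neither gathering nor convergence occurs. The extra remark about anonymity and determinism guaranteeing that co-located robots receive the same instruction is a harmless refinement of the same bookkeeping the paper leaves implicit.
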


\begin{proof}
The proof is similar to that of \cite{AP06}.
Consider a configuration of $P$.
Let $M$ and $\overline{M}$ denote respectively 
the subsets
$M(P, \mathcal{A})$ and $P \setminus M(P, \mathcal{A})$.
Assume for contradiction that 
(1) 
$M$ is not  a superset of a core of $\mathcal{X}$, \emph{i.e.} $M$ is a faulty set
and (2) $|U(\overline{M})| \geq 2$.
Hence, $\mathcal{X}$ is allowed to fail the robots of $M$ when 
the current configuration is $P$ (we assume that all robots of $\overline{M}$ are correct).
Since no robot in $\overline{M}$ is allowed to move,
the next configuration is identical to $P$.
Therefore, the system will remain in the configuration $P$ indefinitely.
But $|U(\overline{M})| \geq 2$, 
thus the robots of $\overline{M}$ remain indefinitely separated from each others 
and no convergence nor gathering can be ever achieved.
This contradicts the assumption of $\mathcal{A}$ being a convergence or gathering algorithm.

\end{proof}

As a consequence, since we want our algorithm to be wait-free ($(n-1)$-tolerant), it must be the case that at each configuration C, there is at \emph{most} one location $c\in U(C)$ such that the robots at $c$ are allowed to stay in the same position when activated, while all other robots must choose a destination different from the one they are currently occupying. 
The algorithm must also ensure that robots never reach the configuration $\mathcal{B}$, due to the following impossibility result.

\begin{lemma}\cite{DP09}
Starting from a configuration of type $\mathcal{B}$, there is no algorithm that achieves gathering even 
in fault-free $ATOM[\diamond M]$ model.
\end{lemma}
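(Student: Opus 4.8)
The statement to prove is the impossibility result attributed to \cite{DP09}: starting from a bivalent configuration $\mathcal{B}$, no algorithm achieves gathering, even in the fault-free $ATOM[\diamond M]$ model.

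\medskip

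The plan is to argue by a symmetry-preservation argument exploiting the adversary's control over both the scheduler (which robots are activated in each round) and the local coordinate systems assigned to the robots. First I would set up the key invariant: a configuration $C \in \mathcal{B}$ consists of two points $u_1, u_2$ each occupied by exactly $n/2$ robots, and such a configuration admits a point-reflection symmetry about the midpoint $m$ of $[u_1, u_2]$ that maps $u_1 \leftrightarrow u_2$. The crucial observation is that the adversary can maintain a consistent bijective pairing $\phi$ between the $n/2$ robots at $u_1$ and the $n/2$ robots at $u_2$ such that paired robots have coordinate systems related by the reflection through $m$, and hence, since all robots run the same deterministic algorithm on the same (up to this reflection) snapshot, paired robots always compute reflected destinations.

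\medskip

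Next I would show this invariant is preserved. In any round, the adversary activates a robot $r$ at $u_1$ if and only if it activates its partner $\phi(r)$ at $u_2$. Since the view that $r$ obtains of the configuration (a snapshot with $n/2$ robots here and $n/2$ there, strong multiplicity detection included) is exactly the reflection through $m$ of the view $\phi(r)$ obtains, and the algorithm is deterministic, $r$'s computed destination $d$ and $\phi(r)$'s computed destination $d'$ satisfy $d' = \rho_m(d)$ where $\rho_m$ is reflection through $m$. The adversary then uses its power to stop moves prematurely: it ensures $r$ and $\phi(r)$ travel the same fraction of the way to their (reflected) destinations, so that after the move the two new positions remain reflections of each other through $m$. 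Therefore the configuration stays symmetric about $m$; in particular, it never becomes a single point (a gathered configuration has $|U(C)|=1$, which has no nontrivial point-reflection unless it coincides with $m$ — and I would note that the two occupied clusters can collapse to $m$ only if both clusters move exactly onto $m$, but then I must handle this case). The one subtlety is precisely this: could the algorithm instruct all robots at $u_1$ to move to $m$ (and symmetrically all at $u_2$), achieving gathering at $m$ in a single synchronous round? Here the adversary's premature-stopping power is the escape hatch: it lets the robots move only partway, say distance $\Delta/2$ along the segment, which still respects the $\delta/\Delta$ movement guarantee (for $|u_1,u_2|$ large enough, or after rescaling) and yields a new bivalent-symmetric configuration with the clusters still separated and still each of multiplicity $n/2$ — so we remain in $\mathcal{B}$, or at worst in a symmetric non-gathered configuration, and we recurse.

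\medskip

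I expect the main obstacle to be making the premature-stopping argument fully rigorous in light of the model's guarantee that a robot reaches its destination if it is within $\delta$: one must check that the adversary can always keep the two clusters at distance greater than $\delta$ from $m$ (equivalently, greater than $2\delta$ from each other is not enough; we need each cluster bounded away from the limit point), or alternatively argue that if the algorithm ever forces all $n$ robots to within $\delta$ of a common point $m$ while preserving the reflective symmetry, then that common point must be $m$ itself and both clusters lie symmetrically within the $\delta$-ball, whence the adversary can instead select coordinate systems / activation so that the clusters land on two distinct reflected points rather than on $m$ — re-establishing $|U(C)| = 2$ with equal multiplicities. A clean way to dodge the $\delta$-technicality entirely is to only ever activate robots whose computed destination is \emph{not} within $\delta$ of their current position (if every active robot's destination is within $\delta$, one can first rescale by choosing appropriate unit distances so that this is false, since the algorithm's output in a robot's local frame is scale-covariant); I would remark that this is always possible unless the algorithm instructs a robot to stay put, and if it instructs every robot to stay put then no progress is made at all. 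Assembling these pieces yields that no execution starting from $C \in \mathcal{B}$ ever reaches a configuration with $|U(C)| = 1$, contradicting the definition of a gathering algorithm.
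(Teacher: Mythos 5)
Note first that the paper does not actually prove this lemma: it is imported verbatim from \cite{DP09} as a citation, so there is no in-paper proof to compare against. Your symmetry-preservation argument is the standard route for such impossibility results (it is essentially the $2$-robot argument of \cite{SY99} lifted to two clusters of multiplicity $n/2$), and its skeleton is sound: give both clusters point-reflected coordinate frames, activate in matched pairs, stop moves at matched fractions, and conclude the configuration stays centrally symmetric. However, as written the argument has two concrete gaps.

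First, your escape hatch for the case where the algorithm sends every robot to the midpoint $m$ from within distance $\delta$ rests on the claim that ``the algorithm's output in a robot's local frame is scale-covariant.'' That is not justified: the algorithm is an arbitrary deterministic function of the local snapshot, and nothing prevents it from behaving differently at different perceived scales (e.g., targeting $m$ only when the other cluster appears closer than one local unit). The correct and simpler resolution is asymmetric activation, which you mention but do not commit to: activate \emph{all and only} the robots of one cluster, with identical local frames so they compute a common destination and can be stopped at a common fraction; they land together on a single point $p \neq u_2$ (possibly $p = m$), the other cluster does not move, and the resulting configuration is again in $\mathcal{B}$. Fairness is preserved by alternating which cluster is activated. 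Second, and relatedly, your invariant drifts: you allow the execution to leave $\mathcal{B}$ into ``a symmetric non-gathered configuration, and we recurse,'' but the recursion is then unfounded --- central symmetry alone is not an obstruction to gathering (indeed the paper's own algorithm gathers from symmetric, regular, and quasi-regular configurations). The invariant the adversary must maintain is precisely $C \in \mathcal{B}$, i.e., $|U(C)| = 2$ with both multiplicities equal to $n/2$, which it can do by always moving each cluster as a single block (same destination, same stopping fraction). With bivalence as the preserved invariant, $|U(C(\tau))| = 2$ for all $\tau$ and gathering never occurs; with your weaker invariant the induction does not close.
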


We now define more precisely the objective of a fault-tolerant gathering algorithm.
At any time $\tau$ during the execution of the algorithm, we define $F(r, \tau)=true$ if robot $r$ has crashed at time $t_i \leq \tau$. The set of non-faulty robots at time $\tau$ is denoted by $Live(\mathcal{R},\tau)$ $=$ $\{ r_i \in \mathcal{R} | F(r_i, \tau)=false \}$.

\begin{definition}
\label{def:predicateGathered}
Given a set of robots $\mathcal{R}$ that form configuration $C$ at time $\tau$, $\textsc{gathered}(\mathcal{R}, \tau)$ = true iff 
$(|U(Live(\mathcal{R},\tau))|=1)$ and 
$(M(C, \mathcal{A}) \cap U(Live(\mathcal{R},\tau)) = \emptyset$.
\end{definition}

%\begin{definition}[Safe points]
%\label{def:freePoints}
%Given a configuration $C$, a robot position $p \in C$ is \emph{safe}
%iff $\forall q\in \mathcal{R}^2 \setminus \{p\}$:
%$\hfl{p}{q}$ contains at most $(\lceil n/2 \rceil -1)$ robots of $C$.
%%That is, $| C \cap ~ ]p, q) | < (\lceil n/2 \rceil)$.
%\end{definition}
%
%Any safe points can be used as gathering point without any possibility that the robots form the bivalent $\mathcal{B}$ configuration while moving towards it. We can show the following properties for safe points (c.f. Lemmas \ref{lem:safe} and \ref{lem:nosafe_LB}):
%%
%%\begin{property}[c.f. Lemmas \ref{lem:safe} and \ref{lem:nosafe_LB}]
%\begin{enumerate}
%\item Any non linear configuration contains a safe point.
%\item If configuration $C \in \mathcal{B} \cup \confLW{2}$, then $C$ does not have a safe point.
%\end{enumerate}
%%\end{property}

%The main algorithm is given in Figure \ref{alg:main}.
%First, it checks the type of the configuration (line \ref{})
%observed during the precedent \textsc{look} phase.
%Then, according to
%that type, it computes the destination point.

\begin{figure}[htb]
\centering{ \fbox{
\begin{minipage}[t]{150mm}
\scriptsize
\renewcommand{\baselinestretch}{2.5} \resetline{}
\begin{tabbing}
aaaaa\=aa\=aaa\=aaa\=aaaaa\=aaaaa\=aaaaaaaaaaaaaa\=aaaaa\=\kill %~\\

\textbf{Input:} $C$ (The observed configuration during the precedent \textsc{look} phase). \\
\textbf{Output:} The destination of the robot. 
~\\

\textsc{compute}(): \\
~\\
\line{P1}  \> $r \leftarrow$ \textsc{My position in} $C$ \\
%\line{P2}  \> $c \leftarrow center(CH(C))$ \\ 
~\\

\line{M1}   \> \textbf{if} $C \in \mathcal{M}$ \textbf{then} \\
\line{M2}   \>\>  $elected \leftarrow \argmax{p \in C} ~mul(p)$ \\ %_{mul(u)}(u \in U(C)) \\
% a point in $U(C)$ such that $\forall v \in U(C):(v=u) \vee (mul(v) < mul(u))$ \\
\line{M3}   \>\> \textbf{if} $(r=elected) \vee (\not\exists p \in C: p \in (r, elected))$ \textbf{then} \\
\line{M4}   \>\>\> \textbf{return} $elected$ \\
\line{M5}   \>\> \textbf{else} \\
\line{M6}   \>\>\> $X \leftarrow \{p \in C: p \not\in \hfl{elected}{r} \}$ \\
\line{M7}   \>\>\> $v \leftarrow \argmin{p \in X} ~ (k ~|~ p=S^k(r, elected))$ \\
%\line{M6}   \>\>\> $v \leftarrow min_k( v \in C: (v=S^k(r, elected)) \wedge (v \not\in (r, elected)))$ \\
\line{M8}   \>\>\> \textbf{let} 
$d \in \mathbb{R}^2$ \emph{s.t.} $((|d, elected| = |r, elected|) \wedge (\sphericalangle(r, elected, d)=\sphericalangle(r, elected, v)/3)) $ \\
\line{M9}   \>\>\>  \textbf{return} $d$\\
~\\

\line{QR1}   \> \textbf{if} $C \in \mathcal{QR} \cup \mathcal{L}1\mathcal{W}$ \textbf{then} \\ 
\line{QR2}   \>\> \textbf{return} $WP(C)$ \\
~\\

%\line{A1}   \> \textbf{if} $C \in \mathcal{A}:$ \\
%%\line{M4}   \>\>\>\> $destination \leftarrow max_{(mul(p), \sum_{q \in P} dist(p, q), \mathcal{V}(u))} (u \in U)$ \\
%\line{A}   \>\> \textbf{return} $ min_{(n-mul(p), \sum_{q \in C} dist(p, q))} (p \in C)$ \\

\line{A1}   \> \textbf{if} $C \in \mathcal{A}$ \textbf{then}\\
\line{A2} \>\> $X \leftarrow $the set of \textbf{safe} points in $U(C)$. \\
\line{A3} \>\> $elected \leftarrow \argmax{p \in X}~ (mul(p), \frac{1}{\sum_{q \in C} dist(p, q)}, \mathcal{V}(p))$ \\

%\line{A2} \>\> $elected \leftarrow arg_{p \in S} max (mul(p), \frac{1}{\sum_{q \in C} dist(p, q)}, \mathcal{V}}(p))$ \\
\line{A4}  \>\> \textbf{return} $elected$ \\
~\\

\line{L1} \> \textbf{if} $C \in \mathcal{L}2\mathcal{W}$ \textbf{then} \\
\line{L2}  \>\> $c \leftarrow center(C)$ \\ 
\line{L3}  \>\> \textbf{if} $r \not \in CH(C)$ \textbf{then}  \\
\line{L4}  \>\>\> \textbf{return} $c$ \\
\line{L5}  \>\> \textbf{else} \\
\line{L6}  \>\>\>
\textbf{let} 
$d \in \mathbb{R}^2$ \emph{s.t.} 
$(|d, c| = |r, c|) \wedge (\sphericalangle(r, c, d)=\Pi/4)$ \\
\line{L7}  \>\>\> \textbf{return} $d$ \\

%\line{L1}   \>\> \textbf{if} $C \in \mathcal{L}$ \textbf{then} \\
%\line{L2}   \>\>\> $c \leftarrow center(CH(P))$ \\
%\line{L3}   \>\>\> \textbf{if}$(\not\exists p \in C: p \in (c, r))$ \textbf{then} return $c$ \\
%\line{L4}   \>\>\> \textbf{else} \\
%\line{L5}   \>\>\>\> return $d$ where $(d \in \mathbb{R}^2) \wedge (\sphericalangle(d, c, r)=\Pi/10)$ \\
%~\\

\end{tabbing}
\normalsize
\end{minipage}
}
\caption{Gathering Algorithm: \textsc{compute} Phase}
\label{alg:main}
}
\end{figure}
%%%%%%%%%%%%%% END ALGORITHM %%%%%%%%%%%%%%%%%%%%%%%%%%%

\subsection{Gathering Algorithm}

We now describe the algorithm in terms of actions taken by a robot $r$ based on the current configuration and the position of the robot $r$ within the configuration. A more technical description is given in Figure \ref{alg:main}.

\paragraph*{\textbf{Configuration $C\in \mathcal{M}$}\\}

Let $c$ be the unique point of maximum multiplicity in $C$. If robot $r$ is located at $c$, it does not move. Otherwise, if there are no robots between $r$ and $c$, robot $r$ moves directly towards $c$ and if not, it does a side-step i.e. it moves to the closest point $d$ on a half-line $\hfl{c}{d}$ such that the angle between half-line $\hfl{c}{d}$ and half-line $\hfl{c}{r}$ is less than or equal to 1/3 of the angle between half-line $\hfl{c}{p}$ and half-line $\hfl{c}{r}$, for any other robot location $p \in C$. This ensures the robot does not collide with another robot, i.e. it does not create a new point of maximum multiplicity. Note that there may be multiple robots colocated with $r$, these robot may make the same move as $r$. However the value of $mul(r)$ would never increase unless $r$ reaches the point $c$. Thus, the algorithm ensures that the robots remain in a configuration of type $\mathcal{M}$ until gathering is achieved.

\paragraph*{\textbf{Configuration $C\in \confLW{1} $}\\}

By definition, we know that configuration $C$ contains a unique Weber-point $c$ which is also the median and can be computed easily. Each robot $r$ moves directly towards the Weber point $c$ which remains invariant during the movement. Eventually the configuration changes to $\mathcal{M}$ or a gathered configuration.

\paragraph*{\textbf{Configuration $C\in \mathcal{QR}$}\\}

In this case, robot $r$ moves to the center $c$ of quasi-regularity of $C$ (which is also the Weber-point). Thus, the Weber-point $c$ remains invariant during the movement and eventually the configuration changes to $\mathcal{M}$ or a gathered configuration.

\paragraph*{\textbf{Configuration $C\in \mathcal{A}$}\\}

Since $C$ is not linear, we know that there exists a safe point in $U(C)$. When there are multiple safe points, the algorithm selects a unique point $c$ from among the safe points in U(C). This is always possible since the configuration is asymmetric (i.e. the view of each point is unique). The algorithm chooses the point $c$ based on the multiplicity(c), the sum of distances of all other robots to c, and finally the view of c (in this order, and maximizing the first parameter, minimizing the second parameter and maximizing the third parameter). Each robot $r$ moves towards this unique point $c$. We will show that the configuration $C'$ obtained after one step of the algorithm is of type $\mathcal{M}$, $\mathcal{QR}$, $\confLW{1}$, or $\mathcal{A}$ (but not $\mathcal{B}$ or $\confLW{2}$). Further if the next configuration is again of type $\mathcal{A}$ then either the maximum multiplicity increases or the minimum sum of distance decreases. This ensures that the algorithm converges towards a configuration of type $\mathcal{M}$ or a gathered configuration.     

\paragraph*{\textbf{Configuration $C\in$ $\confLW{2} $}\\}

In this case, there are at least $4$ distinct points in the configuration. The algorithm instructs the robots at the two end-points of the line to move away from the line. Any robot that is not located in one of the end-points is instructed to move towards the center of the line. If any of the robots at the end-points move then the next configuration would be non-linear and thus, the algorithm switches to one of the other cases above. Otherwise, if the robots are the end-points never move (i.e. they are crashed) then the configuration remains linear but the sum of distances between correct robots decreases, and the robots would eventually converge to a gathered configuration or a configuration of type $\mathcal{M}$. 

%\paragraph*{}
%The correctness of the algorithm follows from the following lemmas (full proofs can be found in the appendix).
%
%\begin{lemma}
%%\label{lem:confMconverge}
%Let $C_\mathcal{R}(\tau) \in \mathcal{M}$.
%There exists a time $\tau_c > \tau$ such
%that $\textsc{gathered}(\mathcal{R}, \tau_c)$=true.
%\end{lemma}
%
%\begin{lemma}
%%\label{lem:confL1Wconverge}
%Let $C(\tau) \in \mathcal{L}1\mathcal{W}$.
%There exists a time $\tau_c > \tau$ such that
% $(C(\tau_c) \in \mathcal{M}) 
% \vee
%(\textsc{gathered}(\mathcal{R}, \tau_c)=true)$.
%\end{lemma}
%
%\begin{lemma}
%%\label{lem:confQRconverge}
%Let $C(\tau) \in \mathcal{QR}$.
%There exists a time $\tau_c > \tau$ such that
% $(C(\tau_c) \in \mathcal{M} \cup \mathcal{L}1\mathcal{W}) 
% \vee
%(\textsc{gathered}(\mathcal{R}, \tau_c)=true)$.
%\end{lemma}
%
%\begin{lemma}
%%\label{lem:confAconverge}
%Let $C(\tau) \in \mathcal{A}$.
%There exists a time $\tau_c > \tau$ such that
% $(C(\tau_c) \in \mathcal{M} \cup \mathcal{L}1\mathcal{W} \cup \mathcal{QR}) 
% \vee
%(\textsc{gathered}(\mathcal{R}, \tau_c)=true)$.
%\end{lemma}
%
%\begin{lemma}
%%\label{lem:confL2Wconverge}
%Let $C(\tau) \in \mathcal{L}2\mathcal{W}$.
%There exists a time $\tau_c > \tau$ such that
% $(C(\tau_c) \in \mathcal{M} \cup \mathcal{L}1\mathcal{W} \cup \mathcal{QR} \cup \mathcal{A}) 
% \vee
%(\textsc{gathered}(\mathcal{R}, \tau_c)=true)$.
%\end{lemma}

%-------------------------------------------------------------------------------

\subsection{Proof of Correctness}

We now show that starting from any configuration except the bivalent configuration $\mathcal{B}$, the algorithm described in Figure~\ref{alg:main} eventually forms a gathered configuration. The proof is divided into several parts, each dealing with configurations of a different type. 

\subsubsection{Configurations of type $\mathcal{M}$}

\begin{lemma}
\label{lem:confMconverge}
Let $C_(\tau) \in \mathcal{M}$.
There exists a time $\tau^\prime > \tau$ such
that $\textsc{gathered}(\mathcal{R}, \tau)$=true.
\end{lemma}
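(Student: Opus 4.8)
The plan is to show that once the system is in a configuration of type $\mathcal{M}$, it stays in $\mathcal{M}$ (or becomes gathered) forever, and that the unique point $c$ of maximum multiplicity never changes; then to argue that the number of robots located at $c$ strictly increases over time until all correct robots are at $c$. First I would fix $C(\tau) \in \mathcal{M}$ and let $c = \argmax{p \in C(\tau)} mul(p)$ be the unique maximum-multiplicity point. I would examine the three sub-cases of the $\mathcal{M}$-branch of the algorithm (lines M1--M9): a robot at $c$ stays put (line M4 with $r=elected$); a robot $r\neq c$ with no robot strictly between $r$ and $c$ moves straight toward $c$ along $\hfl{c}{r}$ (line M4); a robot $r\neq c$ with some robot between it and $c$ performs a side-step to a point $d$ with $|d,c|=|r,c|$ and $\sphericalangle(r,c,d)$ at most one third of the angle to the nearest ``blocking'' robot direction (lines M6--M9).

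The key invariant to establish is: after any (partial, adversarially truncated) synchronous step of the algorithm, the resulting configuration $C'$ is still in $\mathcal{M}$ with the same maximum-multiplicity point $c$, and $mul_{C'}(c) \ge mul_{C}(c)$. For this I would argue (i) no robot other than one already heading straight into $c$ ever lands on $c$: a side-stepping robot moves on the circle of radius $|r,c|>0$ about $c$, so it cannot reach $c$; a straight-moving robot only reaches $c$ if it was on $\hfl{c}{r}$ with nothing between, and by the $\delta/\Delta$ progress guarantee it may stop short, but it never overshoots and never lands on a point of multiplicity $\ge mul(c)$; (ii) no robot collides with another robot in a way that would create a new point whose multiplicity ties or exceeds $mul(c)$ — this is exactly what the $1/3$-angle choice in line M8 buys us: the destination half-line $\hfl{c}{d}$ is strictly between $\hfl{c}{r}$ and the nearest other occupied direction, and distinct source directions map to distinct destination directions, so co-located groups stay co-located but disjoint groups stay disjoint and no group outside $c$ grows. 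Hence $mul_{C'}(q) \le mul_C(q) < mul_C(c)$ for every $q \neq c$, while robots reaching $c$ only increase $mul(c)$; so $c$ remains the strict unique maximum and $C' \in \mathcal{M}$, unless all correct robots are now at $c$, i.e. $|U(Live(\mathcal{R},\tau'))|=1$ and no correct robot is instructed to move, which is precisely $\textsc{gathered}(\mathcal{R},\tau')$.

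Given the invariant, I would finish with a progress argument. Consider the potential $\Phi(\tau) = mul_{C(\tau)}(c)$ counting robots at $c$, which is non-decreasing and bounded by $n$, hence eventually constant. Suppose it stabilizes at value $k < |Live(\mathcal{R},\tau)|$. Then some correct robot $r$ is never at $c$; since $r$ is active infinitely often, consider its behavior: after finitely many side-steps its ray $\hfl{c}{r}$ becomes free of intermediate robots (each side-step strictly decreases the angle gap to $c$ past the blocker, or removes blockers as they themselves move toward $c$ — I need to make this monotonicity precise, possibly via a lexicographic measure on the multiset of angular positions and radii), after which $r$ moves straight toward $c$ and, by the $\Delta$-progress guarantee, reaches $c$ in finitely many activations, contradicting stabilization at $k$. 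Therefore $\Phi$ reaches $|Live(\mathcal{R},\tau)|$, all correct robots coincide at $c$, and at that configuration the algorithm instructs no correct robot to move, so $\textsc{gathered}(\mathcal{R},\tau')$ holds for some $\tau' > \tau$.

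The main obstacle I anticipate is the progress/termination step — specifically, showing that a correct robot $r$ blocked by intermediate robots eventually acquires a clear ray to $c$ and then actually reaches $c$, despite the adversary's ability to stop $r$ short on every activation and despite the side-stepping robots themselves being possibly crashed in arbitrary intermediate positions. I would handle this with a carefully chosen lexicographic potential over the (finite) multiset of correct robots' $(angle, distance)$ coordinates relative to $c$, checking that every algorithmic move of every robot does not increase it and that some move strictly decreases it infinitely often; keeping this potential well-defined in the presence of truncated moves and crashed robots is the delicate part.
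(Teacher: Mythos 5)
Your invariance step (one step of the algorithm keeps the configuration in $\mathcal{M}$ with the same elected point $c$, because no two robots coming from distinct locations can collide anywhere except at $c$) is essentially the paper's Claim C1, including the role of the one-third angle of line M8. (Minor quibble: a side-stepping robot travels the chord from $r$ to $d$, not the arc of the circle of radius $|r,c|$, but the conclusion that it cannot land on $c$ still holds.) The genuine gap is in the progress step, and it is the one you flag yourself. Note first that the difficulty is not that a blocked robot needs ``finitely many side-steps'' before its ray clears: a single side-step already places $r$ on a half-line that was unoccupied in the snapshot (it lies strictly between \hfl{c}{r} and the nearest occupied direction), so $r$ is free immediately afterwards. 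The real difficulty is \emph{re-blocking}: other robots, side-stepping in the same or in later rounds, may land on the open segment $(c,r)$ and force $r$ to side-step again, indefinitely. A lexicographic potential over the correct robots' (angle, distance) pairs does not obviously decrease under this dynamic --- side-steps move angles non-monotonically with respect to any fixed ordering, and the blockers may be crashed robots that contribute nothing to a potential restricted to correct robots --- so, as written, your termination argument does not go through.

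The paper closes this gap without any such potential, by a counting argument on crashes. When $r$ side-steps onto a fresh ray at time $\tau_i$, the only robots that can lie strictly between $r$ and $c$ at time $\tau_i+1$ are robots that themselves moved during round $\tau_i$, hence were live at $\tau_i$. A live blocker, once activated, moves off the segment (straight toward $c$ if free, or onto a fresh ray otherwise), so the only way the adversary can make a blocker permanent is to crash a robot that was still live at the time of $r$'s activation. To keep $r$ away from $c$ forever it must therefore spend a fresh crash after each activation of $r$; since there are only $n$ robots, it eventually runs out, after which $r$ has a permanently clear ray, advances by at least $\Delta$ per activation, and reaches $c$. You would need to supply this observation (or an equivalent one) to complete the proof; combined with your $mul(c)$ potential and your Claim-C1-style invariance argument, the rest of your plan is sound.
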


\begin{proof}
Let $elected(\tau)$ be the destination chosen by the robots in configuration $C(\tau)$, i.e. $elected(\tau)=\argmax{p \in C(\tau)} ~mul(p)$. A robot position $p \in C(\tau)$ 
is said to be \textbf{free} with respect to $elected(\tau)$ if no robot is located in the interval $(p, c)$.
The lemma follows from the following two claims that we prove below:

\begin{description}
\item[C1:] 
$(C(\tau) \in \mathcal{M}) \Rightarrow 
((C(\tau+1) \in \mathcal{M})
\wedge 
(elected(\tau+1)=elected(\tau)))$

\item[C2:] 
($\forall \tau_i \geq \tau: (C(\tau_i) \in \mathcal{M})
\wedge 
(elected(\tau_i)=elected(\tau))$)
$\Rightarrow$
($\exists \tau^\prime \geq \tau: \textsc{gathered}(\mathcal{R}, \tau^\prime)$=true).

\end{description}

\paragraph{Proof of C1:}
Let $c=elected(\tau)$ be the point of maximum multiplicity in the configuration $C(\tau)$. We need to show that $c$ remains the point of maximum multiplicity in $C(\tau+1)$. In fact we show a stronger result that no two robots that were in distinct locations at $\tau$ can be at the same location at time $\tau+1$, unless the robots are at $c$. Let us assume the contrary, i.e. let $r_1$ and $r_2$ be robots that occupied distinct locations in $C(\tau)$ but occupy the same location $p\neq c$ in $C(\tau+1)$. Note that neither of the robots $r_1$ and $r_2$ are located at $c$ at time $\tau$ (since otherwise the algorithm would instruct them to remain at $c$ and they would not be at $p$ at $\tau+1$). 

According to the algorithm any robot $r$ in configuration $C(\tau) \in \mathcal{M}$ can make two possible moves: (i) either robot $r$ moves directly towards $c$ (Line (5) of algorithm) or, (ii) robot $r$ moves to a point $d$ such that $rcd$ is an isosceles triangle with central angle $0 < \theta < \pi/3$ at $c$ (Line (9) of algorithm). 
If both the robots $r_1$ and $r_2$ both make move of type (i), then they are distinct free points and in this case their paths may not intersect except at $c$. Otherwise, suppose one of the robots (say $r_1$) makes a move of type (ii) directly towards a point $d$. Consider the triangle $r_2cd$ and let $\sphericalangle(r_2, c, d)=\theta$. The other robot $r_2$ is located either on the half-line $\hfl{c}{r_1}$ or, on a different half-line \hfl{c}{r_2} which forms an angle greater than $3*\theta$ with the half-line \hfl{c}{r_1} w.r.t. point $c$. In the second case, the path of robot $r_2$ will never intersect the line segment between $c$ and $d$. In the first case, either robot $r_2$ is on a free point (and thus, it will move on the line segment $[c,r_2]$ which does not intersect the line segment $[c,d]$) or robot $r_2$ is not free and thus it makes a move on a line segment parallel to $[c,d]$. In both cases, there is no common point $p$ in the path of the two robots.

\paragraph{Proof of C2:}
In this case, if $c=elected(\tau)$ is the point of maximum multiplicity in $C(\tau)$ then $c$ is the unique point of maximum multiplicity in all subsequent configurations.
Whenever a robot on a free point is activated it moves closer to the point $c$. Whenever a blocked (i.e. not free) robot is activated, at least one robot moves from a blocked position to a free point. Once a robot $r$ moves to a free point at time $\tau_i$, it may be blocked in subsequent steps by only robots that moved with robot $r$ in that same time step (i.e. these robots were \emph{live} at that time step). Thus an adversary can prevent a non-faulty robot $r$ from reaching point $c$ only by changing a live robot to a crashed robot after each step in which robot $r$ is activated. After a finite time, the adversary will run out of live robots. Thus all live robots will eventually reach $c$. Once a robot reaches $c$, the algorithm never instructs the robot to move (since $c$ is the unique point of maximum multiplicity). Thus, $\textsc{gathered}(\mathcal{R}, \tau)$ will be true at that time.
\end{proof}

%--------------------------------------------
\subsubsection{Configurations of type $\mathcal{L}1\mathcal{W}$}

\begin{lemma}
\label{lem:confL1Wconverge}
Let $C(\tau) \in \mathcal{L}1\mathcal{W}$.
There exists a time $\tau_c > \tau$ such that
either $(C(\tau_c) \in \mathcal{M})$ or, 
$(\textsc{gathered}(\mathcal{R}, \tau_c)=true)$.
\end{lemma}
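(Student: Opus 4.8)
The plan is to show that starting from a configuration $C(\tau) \in \mathcal{L}1\mathcal{W}$, the algorithm drives the robots either into a configuration of type $\mathcal{M}$ or into a gathered configuration, in finite time. By Lemma \ref{lem:linear}(2)--(3), a configuration in $\mathcal{L}1\mathcal{W}$ has $|U(C(\tau))| \geq 3$, and by definition it has a unique Weber point $c$ which coincides with its unique median. The algorithm instructs every robot to move straight towards $c$ (lines QR1--QR2, since $\mathcal{L}1\mathcal{W} \subseteq \mathcal{QR} \cup \mathcal{L}1\mathcal{W}$).

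The key invariant I would establish first is that the Weber point $c$ is preserved under the algorithm's moves: since all robots move along straight lines towards $c$, Corollary \ref{cor:linearUniqueWP} applies directly and gives that the configuration obtained after one step again has unique Weber point $c$, and moreover $c$ is still the (unique) median of the new configuration. Thus either the next configuration is still linear with unique Weber point $c$ (hence in $\mathcal{L}1\mathcal{W}$, unless it has entered $\mathcal{M}$ because a point of strict maximum multiplicity has formed — e.g. all robots reached $c$, or two points merged), or it has become non-linear — but a configuration obtained from a linear one by moving all points towards a point $c$ on the same line stays linear, so in fact the configuration remains linear throughout. Hence at every round either we are still in $\mathcal{L}1\mathcal{W}$ or we have reached $\mathcal{M}$ or a gathered configuration. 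I would note that we never form $\mathcal{B}$: a bivalent configuration has two equal median points, contradicting uniqueness of the median, which is preserved.

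Next I would argue progress, i.e. that we cannot stay in $\mathcal{L}1\mathcal{W}$ forever. The natural potential is $\Phi(\tau) = \sum_{r \in Live(\mathcal{R},\tau)} |position(r), c|$, the sum of distances from the correct robots to the Weber point $c$ (which is fixed once and for all by the invariant above). Whenever a correct robot not located at $c$ is activated, it moves strictly closer to $c$: by the model's motion guarantee it covers at least $\min(\delta, \Delta)$-type progress, so $\Phi$ decreases by a positive amount bounded away from zero per activation of any non-gathered correct robot. Since every correct robot is activated infinitely often and there are finitely many correct robots, after finitely many rounds every correct robot is within distance $\delta$ of $c$, and then the next activation of each brings it exactly to $c$. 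Once all correct robots are at $c$ we have $|U(Live(\mathcal{R},\tau_c))| = 1$; since $c$ then has the strict maximum multiplicity among occupied points (it holds all live robots, and any other occupied point holds only crashed robots, of which there are fewer than $n$ total so strictly fewer at each such point than at $c$... here I must be slightly careful) — in the borderline case where a single crashed robot sits elsewhere with multiplicity equal to that at $c$, the configuration would be $\mathcal{B}$, but that is excluded since we showed the median stays unique; so $c$ is the strict-maximum-multiplicity point and the configuration is in $\mathcal{M}$, or already $|U| = 1$ and gathered. In either case the lemma's conclusion holds.

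The main obstacle I anticipate is the bookkeeping around the exact moment the configuration leaves $\mathcal{L}1\mathcal{W}$: one must check that the transition can only be into $\mathcal{M}$ or a gathered configuration and never into $\mathcal{B}$ or $\mathcal{L}2\mathcal{W}$ or $\mathcal{QR}$ or $\mathcal{A}$. Staying linear rules out $\mathcal{QR}$ and $\mathcal{A}$ (those classes exclude linear configurations). Preservation of uniqueness of the median (via Corollary \ref{cor:linearUniqueWP}) rules out $\mathcal{L}2\mathcal{W}$ and $\mathcal{B}$, since both require two distinct medians. That leaves exactly $\mathcal{L}1\mathcal{W}$, $\mathcal{M}$, or a gathered configuration, and the potential argument forbids remaining in $\mathcal{L}1\mathcal{W}$ indefinitely. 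The remaining delicate point is purely definitional: confirming that a linear configuration with unique median and $|U| = 1$ either is already gathered (all live robots coincide and the algorithm does not move them, since at a unique-maximum-multiplicity point it returns that point) or lies in $\mathcal{M}$; this follows from the class definitions together with the fact that at the point $c$ the algorithm's move (return $WP(C) = c$) keeps robots there, so $M(C,\mathcal{A}) \cap U(Live) = \emptyset$, matching Definition \ref{def:predicateGathered}.
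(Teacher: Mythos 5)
Your proposal is correct and takes essentially the same route as the paper's proof: the paper also splits the argument into an invariance claim (moving towards the unique Weber point $c$ preserves it by Corollary~\ref{cor:linearUniqueWP}, keeps the configuration linear, and hence keeps it in $\mathcal{M} \cup \mathcal{L}1\mathcal{W}$, excluding $\mathcal{B}$ and $\mathcal{L}2\mathcal{W}$) and a progress claim (every activation of a non-gathered correct robot brings it closer to $c$, so all live robots eventually reach $c$). Your explicit potential function and appeal to the $\delta$/$\Delta$ motion guarantee merely spell out what the paper's terser progress argument leaves implicit.
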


\begin{proof}
The lemma follows from the following two claims that we prove below:

\begin{description}
\item[C1:] 
$(C(\tau) \in \mathcal{L}1\mathcal{W}) \Rightarrow 
((C(\tau+1) \in \mathcal{M} \cup \mathcal{L}1\mathcal{W})
\wedge 
(WP(C(\tau+1))=WP(C(\tau))))$

\item[C2:] 
$(\forall \tau^\prime \geq \tau: (C(\tau^\prime) \in \mathcal{L}1\mathcal{W})
\wedge 
(WP(C(\tau^\prime))=WP(C(\tau))))
\Rightarrow
(\exists \tau_c \geq \tau: \textsc{gathered}(\mathcal{R}, \tau_c)=true)$.

\end{description}

\paragraph{Proof of C1:}
Since $C(\tau) \in \mathcal{L}1\mathcal{W}$, it follows that
$WP(C(\tau))=c$ is unique and 
$C(\tau+1)$ is obtained by moving robots in $C(\tau)$ towards $c$ (line \ref{QR2} of the algorithm).
Hence, according to Corollary \ref{cor:linearUniqueWP}, 
$WP(C(\tau+1))=WP(C(\tau))=c$.
Moreover, $C(\tau+1)$ is linear. This, combined with 
the fact that its Weber point is unique implies that $C(\tau+1)$ cannot
be of type $\mathcal{B}$ or $\mathcal{L}2\mathcal{W}$.
Therefore, $C(\tau+1) \in \mathcal{M} \cup \mathcal{L}1\mathcal{W}$.

\paragraph{Proof of C2:}
Whenever a robot in configuration $\confLW{1}$  is activated it moves towards the Weber-point $c$ and Weber-point remains invariant due to this movement. Thus, for all configurations $C(\tau^\prime)$ the Weber-point is the same point $c$. For each non-faulty robot $r$ the distance between $r$ and $c$ decreases every time the robot $r$ is activated (unless $r$ is already at $c$). Thus all non-faulty robots are at the point $c$ at some time $\tau_c$ and $\textsc{gathered}(\mathcal{R}, \tau_c)$=true.
\end{proof}

%--------------------------------------------
\subsubsection{Configurations of type $\mathcal{QR}$}

\begin{lemma}
\label{lem:confQRconverge}
Let $C(\tau) \in \mathcal{QR}$.
There exists a time $\tau_c > \tau$ such that
 $(C(\tau_c) \in \mathcal{M} \cup \mathcal{L}1\mathcal{W}) 
 \vee
(\textsc{gathered}(\mathcal{R}, \tau_c)=true)$.
\end{lemma}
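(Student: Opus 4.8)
\textbf{Proof plan for Lemma~\ref{lem:confQRconverge}.} The plan is to mimic the structure used for the previous two lemmas: establish a one-step invariance claim and then a convergence claim. Specifically, let $c = CQR(C(\tau))$ and recall from Lemma~\ref{lem:CQR=WP} that $c = WP(C(\tau))$ is the unique Weber point (the configuration is non-linear). The algorithm (line \ref{QR2}) instructs every robot to move straight towards $c$, so by Corollary~\ref{cor:linearUniqueWP} the Weber point of the resulting configuration $C(\tau+1)$ is again the unique point $c$. Hence $C(\tau+1)$ cannot be of type $\mathcal{B}$ (in a bivalent configuration the Weber point is not unique — it is the whole segment between the two occupied points, or at least it is never a single occupied-looking point in the required sense) nor of type $\mathcal{L}2\mathcal{W}$. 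So $C(\tau+1)$ falls into one of $\mathcal{M}$, $\mathcal{L}1\mathcal{W}$, $\mathcal{QR}$, or $\mathcal{A}$ — but I must rule out $\mathcal{A}$ as well when $C(\tau+1)$ is still non-linear: since $c$ is still the unique Weber point and robots only moved towards $c$, I claim $C(\tau+1)$ is still quasi-regular with the same center (moving points towards the center of (quasi-)regularity preserves quasi-regularity — this is essentially the content of Lemma~\ref{lem:testQRegular} together with the monotone behaviour of $\textsc{loc}$ as points slide towards $p$, or more directly: $C(\tau)$ came from a regular $C'$ with center $c$ by sliding points towards $c$, and sliding further keeps it reachable from $C'$). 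Thus the first claim should read: $(C(\tau)\in\mathcal{QR}) \Rightarrow (C(\tau+1)\in\mathcal{M}\cup\mathcal{L}1\mathcal{W}\cup\mathcal{QR})$ with $CQR$/$WP$ unchanged.

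\textbf{Second claim (convergence).} Assuming $C(\tau')\in\mathcal{QR}$ for all $\tau'\ge\tau$ with center $c$ fixed throughout, I argue that gathering is reached. Every time a non-faulty robot $r$ not already at $c$ is activated, it moves a positive distance ($\ge\min(\Delta,\text{dist}(r,c))$) towards $c$, so $\text{dist}(r,c)$ is strictly decreasing along $r$'s activations and is bounded below by $0$; by the $\delta$-reachability assumption, after finitely many activations $r$ reaches $c$. Since each correct robot is activated infinitely often, all correct robots eventually sit at $c$, and once a robot is at $c = WP$ the algorithm (still in some case among $\mathcal{M}$, $\mathcal{QR}$, $\mathcal{L}1\mathcal{W}$, all of which send robots to the Weber point $c$) never instructs it to leave. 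Hence $\textsc{gathered}(\mathcal{R},\tau_c)$ holds. Combining the two claims: either some $C(\tau') \notin \mathcal{QR}$, in which case by Claim~1 it is in $\mathcal{M}\cup\mathcal{L}1\mathcal{W}$ and we are done; or $C$ stays in $\mathcal{QR}$ forever and Claim~2 gives a gathered configuration.

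\textbf{Main obstacle.} The routine parts (distance decreasing, robots eventually reaching $c$, $c$ staying fixed by Corollary~\ref{cor:linearUniqueWP}) are straightforward. The delicate point is proving the one-step claim that moving robots towards the center $c$ of quasi-regularity yields a configuration that is still quasi-regular (or has collapsed into $\mathcal{M}$/$\mathcal{L}1\mathcal{W}$) rather than landing in $\mathcal{A}$. The subtlety is that quasi-regularity allows ``spare'' robots parked at $c$ to be redistributed; as robots slide in, the string of angles $SA$ around $c$ is unchanged for the moving robots (sliding towards $c$ never changes angles, only distances, and by Lemma~\ref{lem:propertyRegular} regularity is about half-line counts which are invariant under radial motion towards $p$), so periodicity of $SA(c)$ is preserved, and robots that actually arrive at $c$ only add to $mult(c)$, which can only help satisfy the inequality ($\alpha$) of Lemma~\ref{lem:testQRegular}. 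One must also check the boundary case where so many robots collapse onto $c$ that it becomes the strict maximum-multiplicity point, giving $\mathcal{M}$ — this is fine and in fact desirable. I would write this argument by invoking Lemma~\ref{lem:propertyRegular} and Lemma~\ref{lem:testQRegular} directly rather than re-deriving it, keeping the proof parallel in style to Lemmas~\ref{lem:confMconverge} and~\ref{lem:confL1Wconverge}.
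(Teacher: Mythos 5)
Your proposal is correct and follows essentially the same route as the paper: the paper's proof also splits into a one-step claim ($C(\tau+1)\in\mathcal{M}\cup\mathcal{L}1\mathcal{W}\cup\mathcal{QR}$ with the Weber point preserved, ruling out $\mathcal{B}$ and $\mathcal{L}2\mathcal{W}$ via uniqueness of the Weber point from Corollary~\ref{cor:linearUniqueWP}, and ruling out $\mathcal{A}$ because quasi-regularity with the same center is preserved under motion towards $CQR$) followed by the same convergence claim. The only difference is cosmetic: you justify the preservation of quasi-regularity via Lemmas~\ref{lem:propertyRegular} and~\ref{lem:testQRegular}, whereas the paper simply asserts it.
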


\begin{proof}
Let $c=WP(C(\tau))$.
The lemma follows from the following two claims that we prove below:

\begin{description}
\item[C1:] 
$(C(\tau) \in \mathcal{QR}) \Rightarrow 
(C(\tau+1) \in   \mathcal{M} \cup \mathcal{L}1\mathcal{W} \cup\mathcal{QR})
\wedge 
(WP(C(\tau+1))=WP(C(\tau))))$

\item[C2:] 
$(\forall \tau^\prime \geq \tau: (C(\tau^\prime) \in \mathcal{QR}) 
\wedge 
(WP(C(\tau^\prime))=WP(C(\tau)))
\Rightarrow
(\exists \tau_c \geq \tau: \textsc{gathered}(\mathcal{R}, \tau_c)=true)$.

\end{description}

\paragraph{Proof of C1:}
Since $C(\tau) \in \mathcal{QR}$,
robots that are activated at $\tau$ move towards $WP(C(\tau))=CQR(C(\tau))$
according to line (\ref{QR2}) of the code.
Hence, since $C(\tau)$ is Q-regular, the obtained configuration $C(\tau+1)$ 
is Q-regular also with the same center of Q-regularity as $C(\tau)$
Hence, $WP(C(\tau+1))=CQR(C(\tau+1))=CQR(C(\tau))=WP(C(\tau))$.

As $C(\tau+1)$ is Q-regular, it holds according to the definition of configurations
$\mathcal{QR}$ that if $C(\tau+1) \not\in \mathcal{B} \cup \mathcal{M} \cup \mathcal{L}$
then $C(\tau+1) \in \mathcal{QR}$.
Therefore, $C(\tau+1) \in \mathbb{B} \cup \mathcal{M} \cup \mathcal{L} \cup \mathcal{QR}$.
It remains to show that $C(\tau+1) \not \in \mathcal{B} \cup \mathcal{L}2\mathcal{W}$.
For this, it suffices to show that $C(\tau+1)$ has a unique Weber point.
But this follows Corollary \ref{cor:linearUniqueWP} and the fact that $WP(C(\tau))$ is unique.

%%
%%that are activated at $\tau$ move only through the line segments linking them with $WP(C(\tau))$,
%%it holds that the resulting configuration $C(\tau+1)$ 

\paragraph{Proof of C2:}
Since all the configurations after $\tau$ are of type $\mathcal{QR}$ and since the Weber point
remains invariant after $\tau$, it follows that all activated robots after $\tau$ choose the same
destination point: $WP(C(\tau))$.
Hence, there is a time $\tau_c$ at which all live robots have reached this point.
That is, $\textsc{gathered}(\mathcal{R}, \tau_c)=true$.
\end{proof}

%--------------------------------------------
\subsubsection{Configurations of type $\mathcal{A}$}

\begin{lemma}
\label{lem:confAconverge}
Let $C(\tau) \in \mathcal{A}$.
There exists a time $\tau_c > \tau$ such that
 $(C(\tau_c) \in \mathcal{M} \cup \mathcal{L}1\mathcal{W} \cup \mathcal{QR}) 
 \vee
(\textsc{gathered}(\mathcal{R}, \tau_c)=true)$.
\end{lemma}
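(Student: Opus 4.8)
The plan is to follow the same two‑claim template used for the earlier configuration classes. Write $c_\tau = elected(\tau)$ for the safe point selected by the branch for $C\in\mathcal{A}$ in Figure~\ref{alg:main}: this point is well defined and uniquely attained because $C(\tau)\in\mathcal{A}$ is non‑linear (hence has a safe point by Lemma~\ref{lem:safe}) and asymmetric (hence all positions of $U(C(\tau))$ have pairwise distinct views, so the lexicographic $\argmax$ over $(mul(p),\,1/\sum_{q}dist(p,q),\,\mathcal{V}(p))$ is unique). Let $q(\tau)=(mul(c_\tau),\,1/\sum_{q\in C(\tau)}dist(c_\tau,q),\,\mathcal{V}(c_\tau))$ be this ``quality'' tuple, compared lexicographically (first coordinate maximized, then the second, then the third). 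I would prove:

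\textbf{C1:} $(C(\tau)\in\mathcal{A})\Rightarrow C(\tau+1)\in\mathcal{M}\cup\confLW{1}\cup\mathcal{QR}\cup\mathcal{A}$; and if $C(\tau+1)\in\mathcal{A}$ then $q(\tau+1)\succeq q(\tau)$, with strict inequality whenever some robot actually moved (in particular whenever $c_{\tau+1}\neq c_\tau$).

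\textbf{C2:} if $C(\tau')\in\mathcal{A}$ for all $\tau'\geq\tau$, then $\textsc{gathered}(\mathcal{R},\tau_c)=true$ for some $\tau_c\geq\tau$.

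These give the lemma: iterating C1, either some $C(\tau')$ falls in $\mathcal{M}\cup\confLW{1}\cup\mathcal{QR}$ and we are done with $\tau_c=\tau'$, or the execution stays in $\mathcal{A}$ forever and C2 applies.

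For \textbf{C1}, the key geometric fact is that moving every robot straight toward $c_\tau$ keeps $c_\tau$ safe. Every robot's destination is $c_\tau$, so its new position lies on the segment to $c_\tau$, i.e.\ on the ray $\hfl{c_\tau}{r}$ (or at $c_\tau$); hence for any $q\neq c_\tau$ no robot enters $\hfl{c_\tau}{q}$, so that half‑line contains no more robots in $C(\tau+1)$ than in $C(\tau)$, i.e.\ at most $\lceil n/2\rceil-1$. Since the robot(s) already at $c_\tau$ do not move, $c_\tau$ is still a position of $C(\tau+1)$ and is safe there; Lemma~\ref{lem:nosafe_LB} then excludes $C(\tau+1)\in\mathcal{B}\cup\confLW{2}$, and since $\mathbb{X}$ is a partition we get $C(\tau+1)\in\mathcal{M}\cup\confLW{1}\cup\mathcal{QR}\cup\mathcal{A}$. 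For the monotonicity clause: $mul(c_\tau)$ cannot decrease (robots only arrive at $c_\tau$), and $c_{\tau+1}$ has the maximum multiplicity among the safe points of $C(\tau+1)$, one of which is $c_\tau$, so $mul(c_{\tau+1})\geq mul(c_\tau)$. If equality holds, then $c_\tau$ is again a max‑multiplicity safe point of $C(\tau+1)$, so $\sum_q dist(c_{\tau+1},q)\le \sum_q dist(c_\tau,q)$ in $C(\tau+1)$, which is in turn $\le \sum_q dist(c_\tau,q)$ in $C(\tau)$ since every robot moved toward $c_\tau$; moreover any robot not at $c_\tau$ that moves \emph{strictly} decreases $\sum_q dist(c_\tau,q)$, so if the configuration changed at all then $q$ strictly increased. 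The view component only guarantees uniqueness of $elected$ within a configuration and plays no role in the cross‑step comparison.

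For \textbf{C2}, by C1 the sequence $q(\tau')$ is non‑decreasing; its first coordinate is an integer in $[1,n]$, so there is a time $\tau_1$ after which $mul(c_{\tau'})$ equals a constant $k$. For $\tau'\geq\tau_1$ no robot ever reaches the current elected point, since arriving there would make its multiplicity $k+1$ while (by the argument of C1) keeping it safe, contradicting the constancy of $k$. Hence any robot activated while not at the current elected point moves toward it by at least $\Delta$ (it cannot be within $\delta$, or it would reach it), which decreases $D(\tau')=\sum_q dist(c_{\tau'},q)$ by at least $\Delta$; but $D$ is non‑negative and, by the equality case of C1, non‑increasing, so this can happen only finitely often. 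Thus there is $\tau_2\geq\tau_1$ after which every activation of a \emph{correct} robot finds it already at the current elected point (and it does not move). Since each faulty robot crashes by some time $\tau_3$, after $\max(\tau_2,\tau_3)$ no robot moves at all, so the configuration — and with it the elected point $c^\ast$ — is frozen; each correct robot is then activated at some later time, at which point it sits at $c^\ast$ and stays there. Therefore eventually $U(Live(\mathcal{R},\tau_c))=\{c^\ast\}$ and no live robot is instructed to move, i.e.\ $\textsc{gathered}(\mathcal{R},\tau_c)=true$.

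The main obstacle is precisely that, unlike the $\confLW{1}$ and $\mathcal{QR}$ cases where the Weber point is invariant under the prescribed moves, here the elected safe point can migrate from one step to the next, so one cannot argue convergence to a fixed target. The proof therefore has to (i) show that ``safeness'' of the elected point is preserved, so the execution never enters the forbidden classes $\mathcal{B}$ or $\confLW{2}$, and (ii) set up the right lexicographic progress measure and combine its monotonicity with the $\Delta/\delta$ movement guarantee and the finiteness of the fault set to force termination despite the moving target. Ruling out an infinite chain of ever‑``better'' elected points is the delicate step, and it is exactly what the monotonicity of $q$ together with the uniform $\Delta$‑decrease of $D$ is designed to handle.
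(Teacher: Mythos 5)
Your proof is correct and follows essentially the same route as the paper's: the same preservation-of-safeness argument (via Lemma~\ref{lem:nosafe_LB}) to exclude $\mathcal{B}$ and $\mathcal{L}2\mathcal{W}$, the same lexicographic progress measure (maximum multiplicity of the elected point, then inverse sum of distances) with the same $\Delta$-decrease argument, and the same bounded-monotonicity conclusion. Your version is marginally more careful in anchoring the measure at the elected safe point and in spelling out the final freezing/gathering step, but these are refinements of, not departures from, the paper's argument.
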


\begin{proof}
Given a configuration $C$, 
let $\phi(C)$ be the couple of values defined by 
$(mult, sum)=max \{(mul(p), \frac{1}{\sum_{q \in C} |p, q|}) ~|~p \in C\}$.

The lemma follows from the claims C1 and C3 below. Claim C2 is used to prove C3.

\begin{description}
\item[C1:] 
$(C(\tau) \in \mathcal{A}) \Rightarrow 
(C(\tau+1) \in   \mathcal{M} \cup \mathcal{L}1\mathcal{W} \cup \mathcal{QR} \cup \mathcal{A})
$

\item[C2:] \[ \begin{array}{r c l}
     (C(\tau) \in \mathcal{A})  & \Rightarrow &  (C(\tau+1) = C(\tau)) \\
                                             & \vee &  (\phi(C(\tau+1)).mult > \phi(C(\tau)).mult)  \\
				         & \vee &  (\phi(C(\tau+1)).mult = \phi(C(\tau)).mult) \wedge (\phi(C(\tau+1)).sum^{-1} < \phi(C(\tau)).sum^{-1})) \\

   \end{array}
\]

\item[C3:] 
$(\forall \tau^\prime \geq \tau: C(\tau^\prime) \in \mathcal{A}) 
\Rightarrow
(\exists \tau_c \geq \tau: \textsc{gathered}(\mathcal{R}, \tau_c)=true)$.

\end{description}

\paragraph{Proof of C1:}
$C(\tau) \in \mathcal{A}$ means that $sym(C(\tau))=1$.
Hence, each position in $U(C(\tau))$ has a unique view.
This guarantees that the \enquote{elected} position computed in line \ref{A3} is unique
and common to all activated robots at $\tau$, let us denote it by $u$.
Moreover, $u$ is safe in $C(\tau)$.
Hence, all activated robots at $\tau$ move towards the same safe point $u$
which results in configuration $C(\tau+1)$.
We observe that $u$ is also safe in $C(\tau+1)$ since for all $x \in \mathbb{R}^2 \setminus \{u\}$, 
the number of robots that are located at $\hfl{u}{x}$ does not increase between $\tau$ and $\tau+1$ 
(it may even decrease if some of them reach $u$).
Thus, $C(\tau+1)$ contains at least one safe point ($u$).
According to Lemma \ref{lem:nosafe_LB} this implies that
$C(\tau+1) \not\in \mathcal{B} \cup \mathcal{L}2\mathcal{W}$ 
which suffices to prove the claim.

\paragraph{Proof of C2:}
Assume that $C(\tau+1) \neq C(\tau)$.
Let $u \in C(\tau)$ be the common \enquote{elected} position
chosen by the algorithm (line \ref{A3}).
Hence, by definition, $(mul(u), \sum_{p_i(\tau) \in C(\tau)} |u, p_i(\tau)| )=\phi(C(\tau))$.
Since all activated robots at $\tau$ move to the same 
destination $u$, it follows that the resulting configuration
$C(\tau+1)$ satisfies:
$$\sum_{p_i(\tau+1) \in C(\tau+1)} |u, p_i(\tau+1)|  \leq \sum_{p_i(\tau) \in C(\tau)} |u, p_i(\tau)| = \phi(\tau).sum^{-1}$$

Since $(C(\tau+1) \neq C(\tau))$, there exists at least one robot $r_i$ whose position at $\tau+1$ 
is distinct from its position at $\tau$.
That is $p_i(\tau+1) \neq p_i(\tau)$.
Note that since all robots move towards $u$, it follows that 
$p_i(\tau+1) \in [p_i(\tau), u]$.
We distinguish between two cases:
\begin{enumerate}
\item $p_i(\tau+1)=u$. In this case $mul(u)$ is incremented.
Note that $u$ is still safe in $\tau+1$ (as shown in the proof of C1).
Hence, $\phi(\tau+1).mult = mul(u)^{\tau+1}> \phi(\tau).mult$

\item $p_i(\tau+1) \neq u$. That is, $r_i$ is stopped by the scheduler before it reaches $u$.
But since 
the scheduler guarantees to each robot to move by a distance of at least $\Delta$ 
before it can stop it, it follows that $(|u, p_i(\tau+1)| \leq  |u, p_i(\tau)|-\Delta)$
which, combined with the above inequality gives:
$$\sum_{p_i(\tau+1) \in C(\tau+1)} |u, p_i(\tau+1)|  \leq (\sum_{p_i(\tau) \in C(\tau)} |u, p_i(\tau)|)- \Delta=\phi(C(\tau)).sum^{-1}-\Delta$$
Note that the multiplicity of $u$ does not decrease even if no robot reaches it, \emph{i.e.}
$mul(u)^{\tau+1} \geq \phi(C(\tau)).mult$.
Note that $u$ is still safe in $\tau+1$ (as shown in the proof of C1).
Hence, 
\[ \begin{array}{r c l}  \phi(C(\tau+1)) %(\phi(C(\tau+1)).mul, \phi(C(\tau+1)).sum) 
 &  \geq (mul(u), \frac{1}{\sum_{p_i(\tau+1) \in C(\tau+1)} |u, p_i(\tau+1)|}) \\
 &  \geq (\phi(\tau).mul, \frac{1}{\phi(\tau).sum^{-1}-\Delta}) \\
 \end{array} \]

Therefore, either $(\phi(\tau+1).mul > \phi(\tau))$
or  $((\phi(\tau+1).mul=\phi(\tau)) \wedge (\phi(\tau+1).sum^{-1} \leq \phi(\tau).sum^{-1}-\Delta)$.
\end{enumerate}

This proves the claim.

\paragraph{Proof of C3:}
Assume that $(\forall \tau^\prime \geq \tau: C(\tau^\prime) \in \mathcal{A})$.
We have to prove that:
$$\exists \tau_g \geq \tau: \forall \tau^\prime \geq \tau_g: (C(\tau^\prime))=C(\tau))$$
That is, the configuration does not change after $\tau_g$ which implies the existence of 
a time after $\tau_g$ at which all the live robots lie on the same position.
That is
$\exists \tau_c \geq \tau_g: \textsc{gathered}(\mathcal{R},\tau_c)=true$.

The claim follows from claim C2 above. There exists a time $\tau_1 \geq \tau$ after which $\phi().mult$ cannot increase
since the multiplicity of points is upper bounded by $n$.
Moreover, there exists a time $\tau_2 \geq \tau_1$ after which $\phi().sum^{-1}$ cannot decrease since the sum of distance is lower bounded by $0$.
Hence, according to claim C2, after time $\tau_2$, the configuration remains the same
and the claim follows by setting $\tau_c=\tau_2$.

\end{proof}

%-------------------------------

\subsubsection{Configurations of type $\mathcal{L}2\mathcal{W}$}

\begin{definition}
Assume that $C(\tau)$ is linear.
Let $u^{-}(\tau)$ and $u^{+}(\tau)$ denote
$min(U(C(\tau)))$ and $max(U(C(\tau)))$ respectively.
Denote by $S^{-}(\tau)$ and $S^{+}(\tau)$
the set of robots located at $u^{-}(\tau)$ and $u^{+}(\tau)$ respectively
and let $S^{0}(\tau) = \mathbb{R} \setminus S^{-}(\tau) \cup S^{+}(\tau)$.
\end{definition}

If $C(\tau) \in \mathcal{L}2\mathcal{W}$,
then Lemma~\ref{lem:linear} implies that
$|U(C(\tau))| \geq 4$.
Hence, 
the sets $S^{-}(\tau)$, $S^{0}(\tau)$ and $S^{+}(\tau)$ 
in this case are non empty and 
pairwise disjoint.

\begin{lemma}
\label{lem:L2WnotB}
If $C(\tau) \in \mathcal{L}2\mathcal{W}$, then $C(\tau+1) \not\in \mathcal{B}$.
\end{lemma}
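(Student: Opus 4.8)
The plan is to argue by contradiction: suppose $C(\tau)\in\mathcal{L}2\mathcal{W}$ but $C(\tau+1)\in\mathcal{B}$, i.e.\ in the next configuration the live robots plus the crashed robots are split evenly over two distinct points $v_1,v_2$. Recall that in the $\mathcal{L}2\mathcal{W}$ branch (lines \ref{L1}--\ref{L7}) the robots at the two extreme points $u^{-}(\tau)$ and $u^{+}(\tau)$ (the only points of $C(\tau)$ not in the relative interior $CH(C(\tau))$ — note $CH$ of a segment is the segment, so ``$r\notin CH(C)$'' should be read as $r$ being an endpoint) are instructed to move \emph{off} the line, while every robot in $S^{0}(\tau)$ moves toward the center $c=center(C(\tau))$, staying on the line. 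First I would split into cases according to whether any robot in $S^{-}(\tau)\cup S^{+}(\tau)$ actually moved.

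\emph{Case 1: some extreme robot moved off the line.} Then $C(\tau+1)$ is non-linear (the moved robot is strictly off $line(u^{-},u^{+})$, while e.g.\ the robots in $S^{0}(\tau)$, being interior, are still on that line and at least one such point remains occupied since $|U(C(\tau))|\ge 4$ by Lemma \ref{lem:linear}), whereas every configuration in $\mathcal{B}$ is linear. Contradiction. The only subtlety is to make sure the moved robot does not land on the line through the remaining robots; this is immediate from the move rule in line \ref{L6} (rotation by $\Pi/4$ about $c$), which takes the robot to a point at the same distance from $c$ but on a genuinely different ray, hence off the original line.

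\emph{Case 2: no extreme robot moved.} Then the points $u^{-}(\tau)$ and $u^{+}(\tau)$ are still occupied in $C(\tau+1)$ with multiplicities $|S^{-}(\tau)|$ and $|S^{+}(\tau)|$, and all the remaining robots lie in the open segment $(u^{-}(\tau),u^{+}(\tau))$ (they moved toward $c$, which is interior, along the line, so they cannot pass the endpoints). For $C(\tau+1)$ to be in $\mathcal{B}$ we would need $U(C(\tau+1))$ to have exactly two points; since $u^{-}(\tau)$ and $u^{+}(\tau)$ are among them, the robots of $S^{0}(\tau)$ must \emph{all} sit at one of these two endpoints — impossible, since they are strictly between them. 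Hence $|U(C(\tau+1))|\ge 3$ and $C(\tau+1)\notin\mathcal{B}$. (Alternatively, one can invoke the median/safe-point machinery: in Case 2 the number of robots weakly on either side of $c$ is controlled so that $c$ is at worst ``almost'' a median, but the direct counting argument above is cleaner.)

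The main obstacle I anticipate is purely bookkeeping: being careful about multiplicities and about the degenerate reading of $CH$ for a linear configuration (so that ``$r\notin CH(C)$'' is correctly interpreted as ``$r$ is an endpoint of the segment''), and confirming that a $\Pi/4$ rotation about $c$ never accidentally re-aligns the moved robot with the surviving collinear robots. Neither is deep, but both need to be spelled out so that Case 1 genuinely yields a non-linear configuration. Given Lemma \ref{lem:linear}, the hypothesis $C(\tau)\in\mathcal{L}2\mathcal{W}$ guarantees $|U(C(\tau))|\ge 4$, which is exactly what keeps $S^{0}(\tau)$ non-empty and makes both cases go through.
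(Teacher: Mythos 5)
Your overall target is the right one --- it suffices to show $|U(C(\tau+1))|\geq 3$, and your Case~2 does this correctly --- but Case~1 contains a genuine flaw. You justify non-linearity of $C(\tau+1)$ by exhibiting one occupied point off $L=line(u^{-},u^{+})$ and one occupied point on $L$; two points never witness non-linearity. Worse, the non-linearity claim itself is false in a sub-case: write $d^{-}$ and $d^{+}$ for the destinations computed in line~\ref{L6} for the robots at $u^{-}$ and $u^{+}$ (each at distance $|u^{\pm},c|$ from $c$, rotated clockwise by $\pi/4$). If every robot at $u^{-}$ reaches $d^{-}$, every robot at $u^{+}$ reaches $d^{+}$, and every robot of $S^{0}(\tau)$ reaches the center $c$, then since $\sphericalangle(u^{-},c,u^{+})=\pi$, $|u^{-},c|=|u^{+},c|$, and both endpoints are rotated by the same angle about $c$, the points $d^{-}$, $c$, $d^{+}$ are collinear; $C(\tau+1)$ is then a \emph{linear} configuration with three distinct points. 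This is exactly the ``re-alignment'' you flagged as an anticipated obstacle and then dismissed as not occurring; it does occur, and it is why the paper's Lemma~\ref{lem:Ct1_notL2M} must treat its case~(iii) separately. So the contradiction in your Case~1 does not go through as written.

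The repair is to drop the non-linearity detour and argue in Case~1 exactly as you do in Case~2, which is also what the paper does in a single step with no case split at all: the three sets $S^{-}(\tau)$, $S^{0}(\tau)$, $S^{+}(\tau)$ are non-empty (because $|U(C(\tau))|\geq 4$ by Lemma~\ref{lem:linear}), and after one round their members occupy pairwise disjoint sets of locations --- robots of $S^{-}(\tau)$ end up in $\{u^{-}\}\cup(u^{-},d^{-}]$, robots of $S^{+}(\tau)$ in $\{u^{+}\}\cup(u^{+},d^{+}]$, and robots of $S^{0}(\tau)$ in the open segment $(u^{-},u^{+})$, and these three sets are pairwise disjoint. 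Hence $|U(C(\tau+1))|\geq 3$ and $C(\tau+1)\notin\mathcal{B}$ regardless of which robots were activated.
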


\begin{proof}
It suffices to show that $|U(C(\tau+1))| \geq 3$.
This simply follows from the fact that robots of $S^{-}(\tau)$, $S^{0}(\tau)$ and $S^{+}(\tau)$ 
occupy distinct positions and these groups remain disjoint when the robots activated at $\tau$ move
towards their destinations (computed in line \ref{L2} for $S^{0}(\tau)$ and line \ref{L6} 
for  $S^{-}(\tau)$ and $S^{+}(\tau)$).
\end{proof}

\begin{lemma}
\label{lem:Ct1_notL2M}
Assume $C(\tau) \in \mathcal{L}2\mathcal{W}$.
If at least one robot in $S_{-}(\tau) \cup S^{+}(\tau)$
%located either at $min(U(C(\tau)))$ or $max(U(C(\tau)))$ 
is activated at $\tau$,
then $C(\tau+1) \not\in \mathcal{L}2\mathcal{W}$.
\end{lemma}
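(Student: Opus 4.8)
The plan is to case-split on whether $C(\tau+1)$ is linear, the linear case being the delicate one. Write $L=line(u^-(\tau),u^+(\tau))$ for the line carrying $C(\tau)$, and recall that $c=center(C(\tau))$ is the midpoint of $[u^-(\tau),u^+(\tau)]$, so $c\in L$ and $c\neq u^-(\tau),u^+(\tau)$. Let $r\in S^-(\tau)\cup S^+(\tau)$ be the activated endpoint robot provided by hypothesis. By lines \ref{L6}--\ref{L7} it heads for a point $d$ with $|d,c|=|r,c|>0$ and $\sphericalangle(r,c,d)=\pi/4$; hence $d\neq r$, $d\notin L$, and $[r,d]$ meets $L$ only at $r$. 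Since an activated robot always advances a strictly positive distance toward its destination along $[r,d]$, its position at $\tau+1$ lies strictly off $L$, so $C(\tau+1)$ contains a robot off $L$. If $C(\tau+1)$ is not linear then it is not in $\mathcal{L}\supseteq\mathcal{L}2\mathcal{W}$, and we are done.

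So suppose $C(\tau+1)$ is linear, on a line $L'\neq L$. I would show $|U(C(\tau+1))|\leq 3$; since Lemma \ref{lem:linear} gives $C\in\mathcal{L}2\mathcal{W}\Rightarrow|U(C)|\geq 4$, this immediately yields $C(\tau+1)\notin\mathcal{L}2\mathcal{W}$. To bound the positions, track three groups. By Lemma \ref{lem:linear}, $C(\tau)\in\mathcal{L}2\mathcal{W}$ forces $|U(C(\tau))|\geq 4$, so $S^0(\tau)\neq\emptyset$; each robot of $S^0(\tau)$ either stays or (line \ref{L4}) moves toward $c\in L$, hence remains on $L$, and being also on $L'$ it sits at the single point $p_0:=L\cap L'$ at time $\tau+1$ (the intersection is nonempty, as it contains these robots, hence a point since $L\neq L'$). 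A non-activated endpoint robot stays at $u^-(\tau)$ or $u^+(\tau)$, both on $L$, hence also at $p_0$; as $u^-(\tau)\neq u^+(\tau)$, at most one endpoint location hosts such robots. Every activated robot of $S^-(\tau)$, if any, aims at the common point $d^-$ obtained by rotating $u^-(\tau)$ about $c$ by $\pi/4$ (well defined by common chirality), with $d^-\notin L$; at $\tau+1$ it lies in $[u^-(\tau),d^-]\cap L'$, which I claim is at most a single point $q^-$. Otherwise $L'$ would contain both $u^-(\tau)\in L$ and $d^-\notin L$, forcing $p_0=u^-(\tau)$ and hence all of $S^0(\tau)$ onto $u^-(\tau)$ at $\tau+1$ — impossible, since a robot of $S^0(\tau)$ starts at an interior point of $C(\tau)$ and either stays there or slides toward the midpoint $c$, so it is never at an endpoint location. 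Symmetrically the activated robots of $S^+(\tau)$ all sit at one point $q^+$. Hence $U(C(\tau+1))\subseteq\{p_0,q^-,q^+\}$, so $|U(C(\tau+1))|\leq 3$, as required.

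The main obstacle is precisely this linear case: one must rule out that the activated robots recreate a ``long'' collinear configuration on a new line. The position count is the resolution — everything that remains on $L$ collapses onto the single transversal point $p_0$, while each old extreme contributes only its one rotated image, leaving at most three positions, too few for $\mathcal{L}2\mathcal{W}$. The remaining verifications (that the $\pi/4$ side-step genuinely leaves $L$, which needs $c\notin\{u^-(\tau),u^+(\tau)\}$, and that interior robots pushed toward the midpoint stay clear of the extremes) are routine.
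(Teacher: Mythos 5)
Your proof is correct and relies on the same geometric ingredients as the paper's: an activated endpoint robot leaves the supporting line $L$, the robots of $S^{0}(\tau)$ remain on $L$ strictly between the two extremes, and a linear $C(\tau+1)$ therefore collapses onto too few distinct positions for $\mathcal{L}2\mathcal{W}$ by Lemma~\ref{lem:linear}. The decomposition is different, though. The paper splits into three cases according to which of the two endpoints has an activated robot, exhibits a non-collinear triple in the first two, and only in the third runs a counting argument, under the simplifying assumption that all robots leaving $a$ end up at a single common point $a^\prime$ (and similarly for $b$). You instead split only on whether $C(\tau+1)$ is linear and, in the linear case, bound $|U(C(\tau+1))|\leq 3$ uniformly: everything that stays on $L$ sits at the single transversal point $p_0$, and $[u^{-}(\tau),d^{-}]\cap L^\prime$ (resp.\ $[u^{+}(\tau),d^{+}]\cap L^\prime$) is at most a single point unless $L^\prime$ passes through an old endpoint, which you rule out via the non-emptiness of $S^{0}(\tau)$ and the fact that its robots never reach an extreme. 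This buys a single argument covering every activation pattern and, in particular, treats rigorously the sub-case the paper glosses over, namely when the activated robots at one endpoint are stopped by the adversary at \emph{different} points of $(u^{-}(\tau),d^{-}]$: your segment--line intersection observation still forces them onto one position of $L^\prime$, whereas the paper's phrasing (all robots at $a$ move to the same location $a^\prime$) does not literally cover it.
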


\begin{proof}
Let $a$ and $b$ denote $u^{-}(\tau)$ and $u^{+}(\tau)$ respectively and 
let $c$ be the midpoint of $[a,b]$. Due to Lemma~\ref{lem:linear} we know that
$|U(C(\tau))| \geq 4$ and thus other than $a$ and $b$, there exists at least two other points in $U(C)$. The scenario considered in this lemma can be partitioned into the following three cases: (i) No robot located at $a$ are activated at step $\tau$ (ii) No robot located at $b$ are activated at step $\tau$ (iii) At least one robot from each of $a$ and $b$ are activated. We will show that in each case, $C(\tau+1) \not\in \mathcal{L}2\mathcal{W}$. Let $L = line(a,b)$. Note that all robots lie on $L$ at time $\tau$.

\paragraph{Case (i):} 
In this case, at least one robot $r$ located at point $b$ is activated and according to the algorithm, the robot $r$ moves towards a point $p$ such that $\sphericalangle(b, c, p)=\pi/4$. The new position $p^\prime$ reached by the robot $r$ lies in $(b,p)$ and thus, $p^\prime \notin L$. Note that any robot $r^\prime \in S^{0}(\tau)$, still remains on line $L$ at some point distinct from $a$ (since robots in $S^{0}(\tau)$ are allowed to move only towards $c \in L$). Thus, $U(C(\tau+1))$ contains the points $p^\prime \notin L$, $a \in L$, and at least one other point in $L$ that is distinct from $a$. Hence $C(\tau+1)$ is not linear, which implies that $C(\tau+1) \notin \mathcal{L}2\mathcal{W}$.

\paragraph{Case (ii):}
This case is exactly symmetrical to case (i) above and the same result holds.

\paragraph{Case (iii):}
If not all the robots located at $a$ and $b$ are activated at time $\tau$ then we can use similar arguments as above to show that the configuration $C(\tau+1)$ is not linear. Thus the only interesting case to consider is when all robots at $a$ move to the same location $a^\prime$ and all robots at $b$ move to the same location $b^\prime$. Note that $a^\prime \notin L$ and $b^\prime \notin L$ and $a^\prime \neq b^\prime$. Thus, $line(a^\prime, b^\prime)$ is distinct from line $L$. However all the robots $\in S^{0}(\tau)$ must remain on line $L$ in step $(\tau+1)$. If the configuration $C(\tau+1)$ is linear then all robots $\in S^{0}(\tau)$ must be located on the same point at step $(\tau+1)$ and this point must be the point of intersection of $L$ and $line(a^\prime, b^\prime)$. In other words, $|U(C(\tau+1))|=3$, which implies that $C(\tau+1) \notin \mathcal{L}2\mathcal{W}$ (due to Lemma~\ref{lem:linear}).
\end{proof}

%% OLD PROOF: 
\remove{
\begin{proof}
Let $u^{-}$ and $u^{+}$ denote
$u^{-}(\tau)$ and $u^{+}(\tau)$ respectively and 
let $c$ denote $mid[u^{-},u^{+}]$.
%Denote by $S^{-}$ and $S^{+}$
%the set of robots located at $u^{-}$ and $u^{+}$ respectively and
%denote by $S^{0}$ the set $\mathbb{R} \setminus (S^{-} \cup S^{+})$.
Since $C(\tau) \in \mathcal{L}2\mathcal{W}$ 
it follows according to Lemma~\ref{lem:linear} that
$|U(C(\tau))| \geq 4$.
Hence, 
the sets $S^{-}(\tau)$, $S^{0}(\tau)$ and $S^{+}(\tau)$ are non empty and 
pairwise disjoint.
%Denote by $\mathbb{L}$ the line on which are located the robots at $\tau$ ($\mathbb{L}$ passes through $u^{-}$ and $u^{+}$).

When a robot $r_i \in S^{-}(\tau)$ is activated at $\tau$, it computes its destination
in line \ref{L6} of the code.
This destination is equal to $d^{-}$ where 
$(|c, d^{-}|=|c, u^{-}|)$ and
$\sphericalangle(u^{-}, c, d^{-})=\Pi/4$.
Hence, $p_i(\tau+1)$ is located somewhere in 
$(u^{-}, d^{-}]$.
If $r_i$ is not activated at $\tau$, then $p_i(\tau+1) = u^{-}$.
To summarize we have:

$$(\alpha)~~\forall r_i \in S^{-}(\tau): p_i(\tau+1) \in [u^{-}, d^{-}]$$

\paragraph*{}
The destination $d^{+}$ for robots in $S^{+}(\tau)$ is defined similarly 
and we have:
$$(\beta)~~\forall r_i \in S^{+}(\tau): p_i(\tau+1) \in [u^{+}, d^{+}] $$

\paragraph*{}
Now we prove:
$$(\gamma)~~ \forall r_i \in S^{0}(\tau): p_i(\tau+1) \in (u^{-}, u^{+})$$

\paragraph*{Proof of ($\gamma$): }
Let $r_i \in S^{0}(\tau)$.
Hence, $p_i(\tau) \in (u^{-}, u^{+})$.
If activated at $\tau$, $r_i$ computes its destination according to 
line $\ref{L2}$ of the code.
This destination corresponds to $c=mid[u^{-}, u^{+}]$. 
Since $u^{-} \neq u^{+}$, it holds that
$c \in (u^{-}, u^{+})$.
Therefore, independently on whether $r_i$ is activated at $\tau$ or not,
we are guaranteed that $p_i(\tau+1) \in [p_i(\tau), c] \subset (u^{-}, u^{+})$.
Hence, $p_i(\tau+1)\in (u^{-}, u^{+})$.
This finishes the proof of ($\gamma$).

\paragraph*{}
We divide the analysis into two sub-cases depending on whether (a) there exists a robot 
in $(S^{-}(\tau) \cup S^{+}(\tau))$ that is \emph{not} activated at $\tau$ or (b) not.
We prove in both cases that $C(\tau+1) \not\in \mathcal{L}2\mathcal{W}$.

\begin{enumerate}
\item[(a)] There exists a robot $r_1 \in (S^{-}(\tau) \cup S^{+}(\tau))$ that is not activated at $\tau$.
Hence, $r_1$ is still located at $\{u^{-}\} \cup \{u^{+}\}$ at $\tau+1$.
That is $p_1(\tau+1) \in \{u^{-}\} \cup \{u^{+}\}$.

The statement of the lemma assumes the existence of at least one robot
$r_2 \in (S^{-}(\tau) \cup S^{+}(\tau))$ that is activated at $\tau$.
Assume w.l.o.g that $r_2 \in  S^{-}(\tau)$. Hence 
$p_2(\tau+1) \in (u^{-}, d^{-}]$.

Take $r_3$ to be any robot in $S^{0}(\tau)$.
According the $(\gamma)$ we have $p_3(\tau+1) \in (u^{-}, u^{+})$.

Consequently, $p_1(\tau+1), p_2(\tau+1)$ and
$p_3(\tau+1)$ are three distinct points that are not collinear.
Since these points belong to $C(\tau+1)$, 
it follows that $C(\tau+1)$ cannot be linear.
Thus, $C(\tau+1) \not\in \mathcal{L}2\mathcal{W}$.

\item[(b)] All robots of $(S^{-}(\tau) \cup S^{+}(\tau))$ are activated at $\tau$.
We do the proof by contradiction: 
we assume that
%Assume towards contradiction that
$C(\tau+1) \in \mathcal{L}2\mathcal{W}$
and we show $|U(C(\tau+1))|=3|$
which contradicts Lemma~\ref{lem:linear}.
Hence, $C(\tau+1) \not\in \mathcal{L}2\mathcal{W}$.

Assume for contradiction that 
$C(\tau+1) \in \mathcal{L}2\mathcal{W}$.
Hence, $C(\tau+1)$ is linear.
Denote by $\mathbb{L}$ the corresponding line.
Thus we have: $\forall i \in [1,n]: p_i(\tau+1) \in \mathbb{L}$.

Since all robots $S^{-}(\tau)$ are activated at $\tau$, they do not stay at $u^{-}$. 
Hence, $\forall r_i \in S^{-}(\tau): p_i(\tau+1) \in u^{-}, d^{-}] $.
But we know that $p_i(\tau+1) \in \mathbb{L}$.
Hence:
$$(\alpha^\prime)~~\forall r_i \in S^{-}(\tau): p_i(\tau+1) \in  x^{-}=\mathbb{L} \cap (u^{-}, d^{-}])$$

\paragraph*{}
Using the same argument we get:
$$(\beta^\prime)~~\forall r_i \in S^{+}(\tau): p_i(\tau+1) \in x^{+}=\mathbb{L} \cap (u^{+}, d^{+}])$$

\paragraph*{}
And for robots in $S^{0}(\tau)$

$$(\gamma^\prime)~~ \forall r_i \in S^{0}(\tau): p_i(\tau+1) \in x^{0}= \mathbb{L} \cap ((u^{-}, u^{+}))$$

To summarize, we know that $\forall p \in C(\tau+1): p \in x^{-} \cup x^{+} \cup x^{0}$.
That is, $U(C(\tau+1))=x^{-} \cup x^{+} \cup x^{0}$.
To prove that $|U(C(\tau+1))|=3$,
it suffices to show that $x^{-}$,  $x^{+}$ and $x^{0}$
are three disjoint points.

Note that the lines $]u^{-}, d^{-}[$, $]u^{+}, d^{+}[$
and $]u^{-}, u^{+}[$ are distinct, 
Hence, their intersection with $\mathbb{L}$ 
consist in three distinct points, which are $x^{-}$,  $x^{+}$ and $x^{0}$.

%%Since $(u^{-}, d^{-}_1]$, $(u^{+}, d^{+}] $
%%and $(u^{-}, u^{+})$ are disjoint, it follows that their intersection with $\mathbb{L}$ 
%%are disjoint too. Hence, $x^{-}$,  $x^{+}$ and $x^{0}$ are disjoint.
%%It remains to show that this elements are points in the space.
%%We prove it only for $x^{-}$, the two other cases are similar.

%
%Since $S^{-}(\tau)$, $S^{0}(\tau)$ and $S^{+}(\tau)$ are non empty,
%it follows that $x^{-}$,  $x^{+}$ 

\end{enumerate}

\end{proof}

} % End Remove

\begin{lemma}
\label{lem:confL2Wconverge}
Let $C(\tau) \in \mathcal{L}2\mathcal{W}$.
There exists a time $\tau_c > \tau$ such that
 $(C(\tau_c) \in \mathcal{M} \cup \mathcal{L}1\mathcal{W} \cup \mathcal{QR} \cup \mathcal{A}) 
 \vee
(\textsc{gathered}(\mathcal{R},\tau_c)=true)$.
\end{lemma}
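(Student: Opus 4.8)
The plan is to prove this by a case split on whether the system ever leaves $\mathcal{L}2\mathcal{W}$, handling one side with the partition structure plus Lemma~\ref{lem:L2WnotB}, and the other side with a potential-function convergence argument on the interior robots. Recall the algorithm's behaviour on $C\in\mathcal{L}2\mathcal{W}$ (which here has $|U(C)|\ge 4$ by Lemma~\ref{lem:linear}): every robot strictly between the two endpoints is sent to $c=center(C)$, i.e. the midpoint of $[u^{-}(\tau),u^{+}(\tau)]$, while every robot at an endpoint is sent off the line (rotated by $\pi/4$ around $c$).

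\emph{First case: some configuration after $\tau$ is not in $\mathcal{L}2\mathcal{W}$.} Let $\tau_c>\tau$ be the first time with $C(\tau_c)\notin\mathcal{L}2\mathcal{W}$; this is well defined since $C(\tau)\in\mathcal{L}2\mathcal{W}$, and $C(\tau_c-1)\in\mathcal{L}2\mathcal{W}$. Then Lemma~\ref{lem:L2WnotB} gives $C(\tau_c)\notin\mathcal{B}$. Since $\mathbb{X}=\{\mathcal{B},\mathcal{M},\mathcal{L},\mathcal{QR},\mathcal{A}\}$ partitions $\mathbb{P}$ and $\mathcal{L}=\mathcal{L}1\mathcal{W}\cup\mathcal{L}2\mathcal{W}$ (disjoint), ruling out $\mathcal{B}$ and $\mathcal{L}2\mathcal{W}$ forces $C(\tau_c)\in\mathcal{M}\cup\mathcal{L}1\mathcal{W}\cup\mathcal{QR}\cup\mathcal{A}$, which is the desired conclusion. \emph{Second case: $C(\tau')\in\mathcal{L}2\mathcal{W}$ for every $\tau'\ge\tau$.} Applying the contrapositive of Lemma~\ref{lem:Ct1_notL2M} at each step $\tau'\ge\tau$ (both $C(\tau')$ and $C(\tau'+1)$ lie in $\mathcal{L}2\mathcal{W}$) shows no robot of $S^{-}(\tau')\cup S^{+}(\tau')$ is ever activated at $\tau'$. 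Hence the endpoint robots never move; since interior robots only move towards $c$, which lies strictly inside the open segment $(u^{-}(\tau),u^{+}(\tau))$, they stay strictly interior, so $u^{-}$, $u^{+}$, and $c$ are the same for all $\tau'\ge\tau$. A correct robot is activated infinitely often, hence cannot sit at an endpoint (it would eventually be activated there, a contradiction); so every correct robot is strictly interior and, whenever activated, executes line~\ref{L4} and moves towards the fixed point $c$. By the motion model each such activation either brings it to $c$ or decreases $|p_r,c|$ by at least $\Delta$, so after finitely many of its own activations each correct robot reaches $c$. Picking $\tau_c>\tau$ large enough that (a) all correct robots are at $c$ and (b) every faulty robot has already crashed (the set of robots that ever crash is finite, each at a finite time), the set $Live(\mathcal{R},\tau_c)$ is exactly the correct robots, all at $c$, so $|U(Live(\mathcal{R},\tau_c))|=1$; and a robot standing on the interior point $c$ is told by line~\ref{L4} to return $c$ itself, hence is not in $M(C(\tau_c),\mathcal{A})$, so $M(C(\tau_c),\mathcal{A})\cap U(Live(\mathcal{R},\tau_c))=\emptyset$. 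Therefore $\textsc{gathered}(\mathcal{R},\tau_c)=true$.

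The structural skeleton above is short; the work, and the main obstacle, is the bookkeeping in the second case. Three points need care: (i) the distinction between a \emph{correct} robot (never crashes) and one merely \emph{live} at $\tau_c$ (not yet crashed) --- it is essential to choose $\tau_c$ past all eventual crashes so that the surviving live robots are exactly the correct ones, all gathered at $c$, which is what makes $\textsc{gathered}$ hold; (ii) showing a robot on $c$ is genuinely instructed to stay, which requires $c$ to be a non-endpoint of $C(\tau_c)$ --- precisely the content of the constancy of $u^{-},u^{+}$ in the second case; and (iii) justifying that the endpoints are frozen, which rests on $c\in(u^{-}(\tau),u^{+}(\tau))$ and on $|U(C)|\ge 4$ (Lemma~\ref{lem:linear}), so that $S^{-}(\tau)$, $S^{0}(\tau)$, $S^{+}(\tau)$ are all nonempty and there really are interior robots driving the potential function $\sum_{r\ \mathrm{correct}}|p_r,c|$ down to $0$. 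None of these is deep, but getting the quantifiers right (and not conflating ``live'' with ``correct'') is the delicate part.
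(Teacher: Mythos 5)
Your proposal is correct and follows essentially the same route as the paper's proof: the paper argues by contradiction (assuming the configuration stays in $\mathcal{L}2\mathcal{W}\cup\mathcal{B}$ forever without gathering) and invokes exactly the same two ingredients, Lemma~\ref{lem:L2WnotB} to exclude $\mathcal{B}$ and Lemma~\ref{lem:Ct1_notL2M} to conclude the endpoint robots are never activated and hence faulty, so that all correct robots converge to the fixed center $c$. Your direct case split is just the contrapositive packaging of that argument, and your extra bookkeeping (choosing $\tau_c$ past all crashes so that live equals correct, and checking that a robot at $c$ is not instructed to move) only makes explicit details the paper leaves implicit.
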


\begin{proof}
Assume for the sake of contradiction that 
$\forall \tau^\prime \geq \tau: (C(\tau^\prime) \in \mathcal{L}2\mathcal{W} \cup \mathcal{B}) 
\wedge (\textsc{gathered}(\mathcal{R},\tau^\prime)=false)$.

But since $C(\tau) \in \mathcal{L}2\mathcal{W}$ 
and Lemma \ref{lem:L2WnotB} says that a configuration of type $\mathcal{B}$ cannot 
come after a configuration of type $\mathcal{L}2\mathcal{W}$, it follows that: 

$$\forall \tau^\prime \geq \tau: (C(\tau^\prime) \in \mathcal{L}2\mathcal{W}) 
\wedge (\textsc{gathered}(\mathcal{R},\tau^\prime)=false)$$

According to Lemma \ref{lem:Ct1_notL2M}, this is only possible if  the robots
located at the endpoints are never activated, which means that they are all faulty.
Hence, the center of the configuration $c=center(C(\tau))$ remains constant during all the execution and 
all correct robots eventually reach this point (line \ref{L2} of the algorithm).
Hence, there is a time  $\tau_c > \tau$ at which all correct robots are 
located at $c$. Thus, $\textsc{gathered}(\mathcal{R},\tau_c)=true$.
A contradiction.
\end{proof}

%-------------------------------------------------------------------------------

%%%\newpage
%%\pagenumbering{roman}

\bibliographystyle{plain}
%\bibliography{crash}

%\newpage
%\appendix
%\input{Appendix}
%\input{proof}

\end{document}